\documentclass[a4paper, 11pt]{article}

\usepackage[backref=page]{hyperref}  %
\usepackage{geometry}  %
\usepackage{amssymb}  %
\usepackage{amsthm}
\usepackage{mathtools}  %
\usepackage{isomath}  %
\usepackage{xfrac}  %
\usepackage{xspace}  %
\usepackage{float}  %
\usepackage{graphicx}  %
\usepackage{subcaption}  %
\usepackage{natbib}  %
\usepackage[capitalise]{cleveref}  %
\usepackage{appendix}  %
\usepackage{enumitem}  %
\usepackage{multicol}  %
\usepackage[noend]{algpseudocode}  %
\usepackage{algorithm}  %
\usepackage{acro}  %
\usepackage[noblocks,affil-sl]{authblk}  %
\usepackage{pdflscape}  %
\usepackage{bm} %
\usepackage{tabularx}  %
\usepackage{etoolbox}  %
\usepackage[section,below]{placeins}  %

\renewcommand*{\backrefalt}[4]{%
 \ifcase #1 %
 \or
 [Page #2]%
 \else
 [Pages #2]%
 \fi
}

\hypersetup{colorlinks=true, allcolors=blue}

\setcitestyle{authoryear}

\algrenewcommand\alglinenumber[1]{\scriptsize #1}
\algrenewcommand{\algorithmicrequire}{\textbf{Inputs:}}
\algrenewcommand{\algorithmicensure}{\textbf{Outputs:}}
\algrenewcommand\algorithmicthen{}
\algrenewcommand\algorithmicdo{}
\algnewcommand{\Raise}{\textbf{throw}\xspace}
\algblockdefx[Try]{Try}{EndTry}{\textbf{try}}{}
\algcblockdefx[Try]{Try}{Catch}{EndTry}[1][]{\textbf{catch} #1}{}
\algtext*{EndTry}
\algnewcommand{\IfThenElse}[3]{%
  #2\ \algorithmicif\ #1\ \algorithmicelse\ #3}
\algrenewcommand{\algorithmiccomment}[1]{\hfill \texttt{\#}~\textit{#1}}
\renewcommand{\gets}{=}

\newtheorem{definition}{Definition}[section]
\newtheorem{proposition}{Proposition}[section]
\newtheorem{remark}{Remark}[section]
\newtheorem{corollary}{Corollary}[section]
\newtheorem{lemma}{Lemma}[section]

\newtheorem{assumption}{Assumption}[section]
\crefname{assumption}{Assumption}{Assumptions}

\floatstyle{ruled}
\newfloat{model}{thp}{lom}
\floatname{model}{Model}
\crefname{model}{Model}{Models}

\DeclareSymbolFont{bbold}{U}{bbold}{m}{n}
\DeclareSymbolFontAlphabet{\mathbbold}{bbold}

\newcommand{\reals}{\mathbb{R}}

\newcommand{\nonnegreals}{\mathbb{R}_{\geq 0}}
\newcommand{\integers}{\mathbb{Z}}

\newcommand{\set}[1]{\mathcal{#1}}
\newcommand{\vct}[1]{\boldsymbol{#1}}
\newcommand{\vctfunc}[1]{\boldsymbol{#1}}
\newcommand{\opfunc}[1]{#1}
\newcommand{\mtx}[1]{#1}
\newcommand{\idmtx}{\mathbb{I}}
\newcommand{\intg}[1]{\mathtt{#1}}
\newcommand{\rvar}[1]{\mathsf{#1}}
\newcommand{\rvct}[1]{\vct{\rvar{#1}}}

\newcommand{\prob}{\mathbb{P}}
\newcommand{\nrm}{\mathcal{N}}
\newcommand{\gvn}{\,|\,}
\newcommand{\range}[2][1]{#1{:}\mkern0.5mu#2}
\newcommand{\tr}{^{\mkern-1.5mu\mathsf{T}}}
\newcommand{\dr}{\mathrm{d}}
\newcommand{\diff}{\mathop{}\!\dr}
\newcommand{\jacob}{\mathop{}\!\partial}
\newcommand{\grad}{\mathop{}\!\nabla}
\newcommand{\hess}{\mathop{}\!\nabla^2}
\newcommand{\td}[2]{\frac{\dr #1}{\dr #2}}
\newcommand{\inlinetd}[2]{\sfrac{\dr #1}{\dr #2}}

\newcommand{\ind}[1]{\mathbbold{1}_{#1}}
\renewcommand{\det}[1]{\left| #1 \right|}
\newcommand{\lebm}[1]{{\lambda}_{#1}}
\newcommand{\haum}[2]{{\eta}\mkern1mu^{#1}_{#2}}
\newcommand{\riem}[2]{{\sigma}\mkern1mu^{#1}_{#2}}

\newcommand{\borelprob}{\mathfrak{P}}
\newcommand{\linops}{\mathfrak{L}}
\newcommand{\bigo}{\mathcal{O}}
\newcommand{\timestep}{t}
\newcommand{\vcthbar}{\mathchar'26\mkern-9mu \boldsymbol{h}}
\newcommand{\tangent}{\mathsf{T}}

\DeclareMathOperator{\diag}{diag}

\DeclareMathOperator{\chol}{chol}

\DeclarePairedDelimiter\ceil{\lceil}{\rceil}
\DeclarePairedDelimiter\floor{\lfloor}{\rfloor}

\newcommand{\shorturl}[1]{\href{https://#1}{\nolinkurl{#1}}}

\newcounter{magicrownumbers}
\newcommand\rownumber{\refstepcounter{magicrownumbers}{\scriptsize\arabic{magicrownumbers}}}
\AtBeginEnvironment{tabularx}{\setcounter{magicrownumbers}{0}}
\crefname{magicrownumbers}{line}{lines}
\Crefname{magicrownumbers}{Line}{Lines}

\newif\ifseparatesupplement
\separatesupplementfalse

\newcommand{\appcref}[1]{%
\ifseparatesupplement%
  \cref*{#1} in the Supplementary Material%
\else%
  \cref{#1}%
\fi
}

\makeatletter
\def\ifclass#1#2#3{\@ifundefined{opt@#1.cls}{#3}{#2}}
\makeatother

\acsetup{long-format=\it, short-format=\sc}
\DeclareAcronym{sde}{short=sde, long=stochastic differential equation}
\DeclareAcronym{ode}{short=ode, long=ordinary differential equation}
\DeclareAcronym{dae}{short=dae, long=differential algebraic equation}
\DeclareAcronym{mcmc}{short=mcmc, long=Markov chain Monte Carlo}
\DeclareAcronym{hmc}{short=hmc, long=Hamiltonian Monte Carlo}
\DeclareAcronym{nuts}{short=nuts, long=No U Turn Sampler}
\DeclareAcronym{abc}{short=abc, long=approximate Bayesian computation}
\DeclareAcronym{ess}{short=ess, long=effective sample size}
\DeclareAcronym{ad}{short=ad, long=algorithmic differentation}
\DeclareAcronym{svd}{short=svd, long=singular value decomposition}
\DeclareAcronym{rwm}{short=rwm, long=random-walk Metropolis}
\DeclareAcronym{sir}{short=sir, long=susceptible-infected-recovered} 
\newcommand{\fullTitle}{%
  Manifold Markov chain Monte Carlo methods for %
  Bayesian inference in diffusion models%
}

\newcommand{\mgFullName}{Matthew M. Graham\xspace}

\newcommand{\mgEmail}{m.graham@ucl.ac.uk\xspace}

\newcommand{\atFullName}{Alexandre H. Thiery\xspace}

\newcommand{\atEmail}{a.h.thiery@nus.edu.sg\xspace}

\newcommand{\abFullName}{Alexandros Beskos\xspace}

\newcommand{\abEmail}{a.beskos@ucl.ac.uk\xspace}

\newcommand{\nusAddress}{National University of Singapore, Singapore}
\newcommand{\uclAddress}{University College London, London, UK}
 
\title{\fullTitle}
\author[1]{\mgFullName (\texttt{\mgEmail})}
\affil[1]{\uclAddress}
\author[2]{\atFullName (\texttt{\atEmail})}
\affil[2]{\nusAddress}
\author[1]{\abFullName (\texttt{\abEmail})}

\date{\vspace{-6ex}}

\begin{document}

\maketitle

\begin{abstract}
Bayesian inference for nonlinear diffusions, observed at discrete times, is a challenging task that has prompted the development of a number of algorithms, mainly within the computational statistics community. We propose a new direction, and accompanying methodology --- borrowing ideas from statistical physics and computational chemistry ---  for inferring the posterior distribution of latent diffusion paths and model parameters, given observations of the process. Joint configurations of the underlying process noise and of parameters, mapping onto diffusion paths consistent with observations, form an implicitly defined manifold. Then, by making use of a constrained Hamiltonian Monte Carlo algorithm on the embedded manifold, we are able to perform computationally efficient inference for a class of discretely observed diffusion models. Critically,  in contrast with other approaches proposed in the literature, our methodology is \emph{highly automated}, requiring minimal user intervention and applying \emph{alike} in a range of settings, including: elliptic or hypo-elliptic systems; observations with or without noise; linear or non-linear observation operators. Exploiting Markovianity, we propose a variant of the method with complexity that scales linearly in the resolution of path discretisation and the number of observation times. Python code reproducing the results is available at \shorturl{doi.org/10.5281/zenodo.5796148}. %
\\[3pt]
  \emph{Keywords:}
  Hamiltonian Monte Carlo; %
  Constrained dynamics; %
  Partially observed diffusions; %
  Stochastic differential equations. %
\end{abstract}

\section{Introduction}
\label{sec:introduction}

A large number of stochastic dynamical systems are modelled via the use of \mbox{diffusion} processes, see e.g.~\citet{kloeden1992numerical, oksendal2013stochastic} and the references therein. An enormous amount of research has been dedicated to both the theoretical foundations of such processes and --- as with this work --- their statistical calibration. Our work lies in the context of processes observed discretely in time, under a low frequency regime, so that approximations of typically analytically intractable transition densities are assumed to be inaccurate. In this setting, \emph{data augmentation} approaches within a Bayesian framework have delivered the prevailing methodologies, see e.g.~\citet{sorensen2009parametric,papaspiliopoulos2013data}, as they  provide various model-specific algorithms for treating a number of different specifications of the structure of the diffusion process and of the observation regime. The performance of the developed algorithms can be improved via a combination of model transforms, often motivated by the \emph{Roberts--Stramer critique} \citep{roberts2001inference} --- that the posterior distribution of the diffusivity parameters given a time discretisation of the process degenerates at finer resolutions --- and more efficient \ac{mcmc} kernels.

The work herein provides a natural approach for Bayesian inference over diffusion processes. Observations are treated as \emph{constraints} placed on latent paths and parameters. This gives rise to the viewpoint that the posterior can be expressed as the prior distribution restricted to a \emph{manifold}. We apply existing \ac{mcmc} methods for sampling from distributions supported on submanifolds based on the simulation of constrained Hamiltonian dynamics (see, e.g., \citet{hartmann2005constrained,rousset2010free,brubaker2012family, lelievre2019hybrid}) to efficiently explore this manifold-supported posterior distribution. This class of methods relies on symplectic integrators for constrained Hamiltonian systems \citep{andersen1983rattle,leimkuhler1994symplectic,reich1996symplectic,leimkuhler2016efficient}. Critically, we leverage the Markovian structure of the diffusion process and Gaussianity of the driving noise to design a scalable inferential procedure. The main contributions of the proposed methodology can be summarised as follows:
\begin{enumerate}[label=(\roman*), noitemsep]
\item  We provide a new viewpoint and accompanying algorithmic methodology for calibrating \ac{sde} models. The posterior is expressed as a distribution supported on a manifold embedded in a non-centred parametrization of the latent path and parameter space. We then make use of a constrained \ac{hmc} scheme to explore this manifold, jointly updating both the parameters and latent path.
\item Unlike other algorithms that are often limited to specific model families, our approach is highly automated and remains unchanged irrespective of the choice of diffusion and observation models, including: elliptic or hypo-elliptic systems; data observed with or without noise; linear or non-linear observation operators.
\item We propose a novel constrained integrator that exploits the Gaussianity of the prior distribution on the pathspace. This leads to an improved scaling in sampling efficiency as the resolution of the latent path discretisation is refined.
\item We propose a scheme to exploit the Markovian structure of \ac{sde} models to ensure that the computational cost of the integrator for the Hamiltonian dynamics scales linearly both with the resolution of path discretisation and the number of observation times. To the best of our knowledge, the developed approach for leveraging the Markovian structure of the model is new.
\item Our method extends the family of \acp{sde} for which statistical calibration is now attainable. Consider the class of $\reals^\intg{X}$-valued \acp{sde} directly observed (without noise) via a non-linear function $\vctfunc{h}:\reals^\intg{X} \to \reals^{\intg{Y}}$ at a finite set of times, for $\intg{X} \geq \intg{Y}\ge 1$. In such a scenario, standard data augmentation schemes fail (for non-trivial choices of $\vctfunc{h}$) as the posterior of the latent variables given the observations does not have a density with respect to the Lebesgue measure. In contrast, our method remains applicable and unchanged.
\end{enumerate}
\begin{remark}[Criteria]
\label{rem:criteria}
To clarify the position of the framework put forward in this work within the wide field of statistical calibration for \ac{sde}s, we list a number of criteria met by our algorithm:
\begin{enumerate}[label=(\roman*), noitemsep]
\item \label{item:criterion-bayesian}
  It carries out full Bayesian inference for the model at hand.
\item \label{item:criterion-roberts-stramer}
  It respects the Roberts--Stramer critique: the mixing times remain stable as the resolution of the path-discretisation is refined.
\item \label{item:criterion-robustness-to-obs-noise}
  It is applicable in  scenarios where data is observed with or without noise; it is stable in the setting of diminishing noise.
\item \label{item:criterion-partial-or-complete-obs}
  It is applicable in the case of both full and partial observations. For partial observations, it accommodates both linear and non-linear observation operators.
\item \label{item:criterion-automated}
  It attains the above via a unified and, in principle, automated methodology.
\end{enumerate}
We have chosen the applications in Section \ref{sec:numerical-experiments} to highlight these properties. To our knowledge, the proposed method is unique in satisfying all of criteria \ref{item:criterion-bayesian}--\ref{item:criterion-automated}.
\end{remark}
The rest of the paper is organised as follows.
Section \ref{sec:diffusion-model} presents a generic class of \ac{sde} models relevant to our work.
Section \ref{sec:posterior-distribution} recasts the inferential problem as one of exploring a posterior distribution supported on a manifold.
Section~\ref{sec:manifold-mcmc} describes the constrained \ac{hmc} method for sampling such distributions on implicitly defined manifolds.
Section \ref{sec:computational-cost} shows how the Markovian structure of \acp{sde} model can be exploited to design a scalable implementation of the methodology.
Section \ref{sec:related-work} discusses related works.
Section~\ref{sec:numerical-experiments} illustrates the approach on several numerical examples, with comments on algorithmic performance and comparisons to alternative \ac{mcmc} methods.
Section \ref{sec:conclusion} concludes with a brief summary and directions for future research.

\paragraph{Notation.}
Sans-serif symbols are used to distinguish random variables from their realisations (respectively, $\rvar{x}$ and $x$).
The set of integers from $\intg{A}\in\integers$ to $\intg{B} \in\integers$ inclusive, $\intg{B} \geq \intg{A}$, is $\range[\intg{A}]{\intg{B}}$. Floor and ceiling operations are denoted $\floor{x}$ and  $\ceil{x}$ respectively. A symbol subscripted by a set indicates an indexed tuple, e.g. $x_{\range[\intg{A}]{\intg{B}}} = \left( x_\intg{s} \right)_{\intg{s}\in\range[\intg{A}]{\intg{B}}}$.
The set of linear maps from a vector space $\set{X}$ to a vector space $\set{Y}$ is $\linops(\set{X},\set{Y})$.
For $\vct{f} : \reals^{\intg{M}} \to \reals^{\intg{N}}$, the Jacobian of $\vct{f}$ is $\jacob\vct{f} : \reals^{\intg{M}} \to \reals^{\intg{N}\times\intg{M}}$ and for $f : \reals^{\intg{M}} \to \reals$, its gradient and Hessian are $\grad f : \reals^{\intg{M}} \to \reals^{\intg{M}}$ and $\grad^2 f:  \reals^{\intg{M}} \to \reals^{\intg{M}\times\intg{M}}$. For a multiple argument function $\vctfunc{g}$ the Jacobian with respect to the $\intg{i}$th argument is denoted $\jacob_{\intg{i}}\vctfunc{g}$ and $\jacob\vctfunc{g} = (\jacob_{1}\vctfunc{g}, \jacob_2\vctfunc{g},\dots)$. The concatenation of vectors $\vct{x}$ and $\vct{y}$ is denoted $[\vct{x}; \vct{y}]$ and the concatenation of a tuple of vectors $\vct{x}_{\range{\intg{N}}}$ is $[\vct{x}_{\range{\intg{N}}}] = [\vct{x}_1; \dots; \vct{x}_{\intg{N}}]$ with the operation acting recursively e.g. $[\vct{x}_{\range{\intg{N}}}; \vct{y}] = [[\vct{x}_{\range{\intg{N}}}];\vct{y}]$.
The determinant of a square matrix $\mtx{M}$ is $\det{\mtx{M}}$. The $\intg{N}\times\intg{N}$ identity matrix is $\idmtx_{\intg{N}}$.
The block diagonal matrix with $\mtx{M}_{\range{\intg{N}}}$ left-to-right along its diagonal is $\diag\mtx{M}_{\range{\intg{N}}}$.
The $\intg{N}$-dimensional Lebesgue measure is $\lebm{\intg{N}}$.
The set of Borel probability measures on a space $\set{X}$ is $\borelprob(\set{X})$.
\section{Diffusion model}
\label{sec:diffusion-model}

We consider the task of inferring the parameters of It{\^o}-type \acp{sde} of the form
\begin{equation}\label{eq:sde-model}
  \diff\rvct{x}(\tau) =
  \vctfunc{a}(\rvct{x}(\tau), \rvct{z}) \diff \tau +
  \opfunc{B}(\rvct{x}(\tau), \rvct{z}) \diff \rvct{b}(\tau)
\end{equation}
defined on a time interval $\set{T} \subseteq \nonnegreals$, where $\rvct{z}$ is a $\set{Z} \subseteq \reals^\intg{Z}$-valued vector of model parameters, $\rvct{x}$ a $\set{X} \equiv \reals^\intg{X}$-valued random process, $\rvct{b}$ a $\set{B} \equiv \reals^\intg{B}$-valued standard Wiener process, $\vctfunc{a} : \set{X} \times \set{Z} \to \set{X}$ the drift function and $\opfunc{B} : \set{X} \times \set{Z} \to \linops(\set{B},\set{X})$ the diffusion coefficient function.
This time-homogeneous \ac{sde} system can be characterised by a family of Markov kernels $\kappa_{\tau} : \set{X} \times \set{Z} \to \borelprob(\set{X})$ with $\kappa_{\tau' - \tau}(\vct{x},\vct{z})(\diff \vct{x})$ the probability of $\rvct{x}(\tau') \in \diff \vct{x}$ given $(\rvct{x}(\tau) = \vct{x}, \rvct{z} = \vct{z})$, for $(\tau, \tau', \vct{x}, \vct{z}) \in \set{T}\times\set{T}\times\set{X}\times\set{Z}$. The parameter $\rvct{z}$ is assigned a prior distribution $\mu \in \borelprob(\set{Z})$ and, given $\rvct{z}$, the initial state $\rvct{x}_0$ is given a prior  $\nu : \set{Z} \to \borelprob(\set{X})$.

We assume the system is observed at $\intg{T}$ times with a fixed inter-observation interval $\upDelta > 0$ and $\set{T} = [0, \intg{T}\upDelta]$. The $\set{Y} \subseteq \reals^\intg{Y}$-valued observed vectors $\rvct{y}_{\range{\intg{T}}}$ are then defined for each $\intg{t} \in \range{\intg{T}}$ as $\rvct{y}_{\intg{t}} = \vctfunc{h}(\rvct{x}(\intg{t}\upDelta), \rvct{z}, \rvct{w}_\intg{t})$ with $\vctfunc{h} : \set{X} \times \set{Z} \times \set{W} \to \set{Y}$ the \emph{observation function}, and $\rvct{w}_{\intg{t}} \sim \eta$ the \emph{observation noise vector} at time index $\intg{t}$ with distribution $\eta \in \borelprob(\set{W})$ and $\set{W} \subseteq \reals^\intg{W}$.
\begin{remark}\label{rem:observation-models}
Two common special cases of our observation model are
\begin{enumerate}[label=(\roman*), noitemsep, topsep=3pt]
  \item noiseless observations: $\vctfunc{h}(\vct{x}, \vct{z}, \vct{w}) := \vcthbar(\vct{x})$ with $\intg{Y} \leq \intg{X}$ and $\intg{W} = 0$,
  \item additive (Gaussian) noise: $\vctfunc{h}(\vct{x}, \vct{z}, \vct{w}) := \vcthbar(\vct{x}) + \mtx{L}(\vct{z}) \vct{w}$ (and $\eta = \nrm(\mathbf{0}, \idmtx_{\intg{Y}})$).
\end{enumerate}
In the former case the observation noise vectors $\rvct{w}_{\range{\intg{T}}}$ can be omitted from the model. Our methodology readily extends to irregular observation times and time-varying model specification -- for the \ac{sde} and the observation parts -- however, for brevity of exposition, we only describe the equispaced and time-independent case here.
\end{remark}
In general, it is neither possible to exactly sample from the Markov kernels $\kappa_\tau$ nor evaluate their densities with respect to the Lebesgue measure on $\set{X}$. We thus adopt a data-augmentation approach \citep{elerian2001likelihood,roberts2001inference} and consider a discrete-time model formed by numerically integrating the original \ac{sde}; although this will introduce discretisation error, the error can be controlled by using a fine time-resolution. We split each inter-observation interval into $\intg{S}$ smaller time steps $\delta = \tfrac{\upDelta}{\intg{S}}$. Given a time discretisation, a variety of numerical schemes for integrating \ac{sde} systems are available with varying levels of complexity and convergence properties \citep{kloeden1992numerical}. The schemes of interest in this article can be expressed as a forward operator $\vctfunc{f}_\delta : \set{Z} \times \set{X} \times \reals^{\intg{V}} \to \set{X}$ defined such that, given parameters $\vct{z} \in \set{Z}$, a current state $\vct{x} \in \set{X}$ and a random vector $\rvct{v}\sim \nrm(\vct{0},\idmtx_{\intg{V}})$, $\vctfunc{f}_\delta(\vct{z},\vct{x}, \rvct{v})$ is approximately distributed according to $\kappa_\delta(\vct{x},\vct{z})$ for small time steps $\delta > 0$.
The simplest and most commonly used scheme is the Euler--Maruyama method, where $\intg{V} = \intg{B}$ and
$
\vctfunc{f}_\delta(\vct{z}, \vct{x}, \vct{v}) =
  \vct{x} + \delta\vctfunc{a}(\vct{x},\vct{z}) +
  \delta^{\frac{1}{2}}\opfunc{B}(\vct{x},\vct{z})\vct{v}.
$
Importantly, the methodology developed in this article straightforwardly accommodates higher order methods, such as the Milstein scheme \citep{milstein1975approximate}.

\begin{model}[t]
  \caption{Time-discretised diffusion generative model.}
  \label{model:generative-model}
\setlength\multicolsep{0pt}
\begin{multicols}{2}
  \begin{algorithmic}
    \Function{$\vctfunc{g}_{\rvct{x}_{:},\rvct{y}_{:}}\mkern-1mu$}
        {$\vct{z},\vct{x}_0,\vct{v}_{\range[1]{\intg{St}}},\vct{w}_{\range[1]{\intg{t}}}$}
      \For{$\intg{s} \in \range[1]{\intg{St}}$}
        \State $\vct{x}_{\intg{s}} \gets \vctfunc{f}_\delta(\vct{z},\vct{x}_{\intg{s}-1},\vct{v}_s)$
        \If{$\intg{s} \bmod \intg{S} \equiv 0$}
          \State $\vct{y}_{\intg{s}/\intg{S}} \gets \vctfunc{h}(\vct{x}_{\intg{s}}, \vct{z}, \vct{w}_{\intg{s}/\intg{S}})$
        \EndIf
      \EndFor
      \vspace{-3pt}
      \State \Return $\vct{x}_{\range[1]{\intg{St}}}$, $\vct{y}_{\range[1]{\intg{t}}}$
    \EndFunction
    \columnbreak
    \State $\rvct{z} \sim \mu$
    \State $\rvct{x}_0 \sim \nu(\rvct{z})$
    \State $\rvct{v}_{\intg{s}} \sim \nrm(\vct{0},\idmtx_{\intg{V}}) ~\forall \intg{s} \in \range{\intg{S}\intg{T}}$
    \State $\rvct{w}_{\intg{t}} \sim \eta ~\forall \intg{t} \in \range{\intg{T}}$
    \State $\rvct{x}_{\range{\intg{S}\intg{T}}}$, $\rvct{y}_{\range{\intg{T}}} \gets \vctfunc{g}_{\rvct{x}_{:},\rvct{y}_{:}}\mkern-1mu(\rvct{z},\rvct{x}_0,\rvct{v}_{\range{\intg{S}\intg{T}}}, \rvct{w}_{\range{\intg{T}}})$
  \end{algorithmic}
\end{multicols}
\vspace{3pt} \end{model}

For a particular choice of numerical scheme, given the parameters $\rvct{z} \sim \mu$ and initial position $\rvct{x}_0 \sim \nu(\rvct{z})$, the states at all subsequent time steps $\rvct{x}_{\range{\intg{S}\intg{T}}}$ are iteratively generated via the forward operator $\vctfunc{f}_\delta$ with $\rvct{x}_{\intg{s}}$ denoting the discrete time approximation to the continuous time state $\rvct{x}(\intg{s}\delta)$. The observations $\rvct{y}_{\range{\intg{T}}}$ are computed from the discrete time state sequence $\rvct{x}_{\range{\intg{S}\intg{T}}}$ via the observation function $\vctfunc{h}$ and observation noise vectors $\rvct{w}_{\range{\intg{T}}}$. The overall generative model is summarised in \cref{model:generative-model}.
\section{Inferential objective on a manifold}
\label{sec:posterior-distribution}

We are interested in computing expectations with respect to the joint posterior of $\rvct{z}$, $\rvct{x}_0$, $\rvct{x}_{\range[1]{\intg{S}{\intg{T}}}}$, given observations $\rvct{y}_{\range{\intg{T}}} = \vct{y}_{\range{\intg{T}}}$. However, the states at nearby time steps will be highly dependent under the prior on $\rvct{x}_{\range{\intg{S}\intg{T}}}$ for small $\delta$.
Such strong dependencies are characteristic of \emph{centred} parametrisations of hierarchical models, and have a deleterious effect on the performance of many approximate inference algorithms \citep{papaspiliopoulos2003non,papaspiliopoulos2007general,betancourt2015hamiltonian}.

\subsection{Non-centred parametrisation}

\label{subsec:non-centred-parametrisation}

One can instead choose to parametrise the inference problem in terms of the latent vectors $\rvct{v}_{\range{\intg{S}\intg{T}}}$ used to numerically integrate the \ac{sde}. Given values for $\rvct{z}$, $\rvct{x}_0$ and $\rvct{v}_{\range{\intg{S}\intg{T}}}$, the state sequence $\rvct{x}_{\range[1]{\intg{S}\intg{T}}}$ can be deterministically computed.
Such a reparametrisation has the property that, under the prior, all components of the latent vectors $\rvct{v}_{\range{\intg{S}\intg{T}}}$ are independent standard normal variables. We further assume the following.
\begin{assumption}
  \label{ass:parameter-and-initial-state-generator-functions}
    There exist functions $\vctfunc{g}_{\rvct{z}} : \reals^\intg{U} \to \set{Z}$ and  $\vctfunc{g}_{\rvct{x}_0} : \set{Z} \times \reals^{\intg{V}_0} \to \set{X}$ and corresponding distributions $\tilde{\mu} \in \borelprob(\reals^{\intg{U}})$, $\tilde{\nu} \in \borelprob(\reals^{\intg{V}_0})$ with strictly positive smooth density functions with respect to the Lebesgue measures $\lebm{\intg{U}}$ and $\lebm{\intg{V}_0}$, respectively, such that $\vctfunc{g}_{\rvct{z}}(\rvct{u}) \sim \mu$ and $\vctfunc{g}_{\rvct{x}_0}\mkern-2mu(\vct{z},\rvct{v}_0) \sim \nu(\vct{z})~\forall \vct{z} \in \set{Z}$ if $\rvct{u} \sim \tilde{\mu}$ and $\rvct{v}_0 \sim \tilde{\nu}$.
\end{assumption}
Under such parametrisation in terms of $\rvct{q}:=[\rvct{u}; \rvct{v}_0; \rvct{v}_{\range[1]{\intg{S}{\intg{T}}}}; \rvct{w}_{\range{\intg{T}}}]$ all of $(\rvct{u}, \rvct{v}_0, \rvct{v}_{\range[1]{\intg{S}{\intg{T}}}}, \rvct{w}_{\range{\intg{T}}})$ are then a-priori independent and the resulting prior distribution $\rho \in \borelprob(\reals^\intg{Q})$ with $\intg{Q} = \intg{U} +\intg{V}_0 + \intg{S}\intg{T}\intg{V} + \intg{T}\intg{W}$, has a density with respect to the Lebesgue measure $\lebm{\intg{Q}}$,
\begin{equation}\label{eq:non-centred-prior-distribution-density}
  \td{\rho}{\lebm{\intg{Q}}}([\vct{u}; \vct{v}_0; \vct{v}_{\range[1]{\intg{S}{\intg{T}}}}; \vct{w}_{\range{\intg{T}}}])
  \propto
  \td{\tilde\mu}{\lebm{\intg{U}}}(\vct{u})
  \td{\tilde\nu}{\lebm{\intg{V}_0}}(\vct{v}_0)
  \prod_{\intg{s}=1}^{\intg{S}{\intg{T}}}
  \exp\left(
    -\tfrac{1}{2}
      \vct{v}_{\intg{s}}\tr\vct{v}_{\intg{s}}
  \right)
  \prod_{\intg{t}=1}^{\intg{T}}
  \td{\eta}{\lebm{\intg{W}}}(\vct{w}_\intg{t}).
\end{equation}
\cref{alg:non-centred-generative-model} gives the generative model under this \emph{non-centred} parametrisation and  defines a function $\vctfunc{g}_{\rvct{y}_{:}}$ which generates observations given values for the latent variables.
\begin{model}[t]
  \caption{Non-centred parametrisation of generative model.}
  \label{alg:non-centred-generative-model}
\setlength\multicolsep{0pt}
\begin{multicols}{2}
  \begin{algorithmic}
    \Function{$\vctfunc{g}_{\rvct{y}_{:}}\mkern-2mu$}
      {$\vct{u}, \vct{v}_0, \vct{v}_{\range[1]{\intg{S}\intg{t}}}, \vct{w}_{\range{\intg{t}}}$}
    \vspace{2pt}
    \State $\vct{z} \gets \vctfunc{g}_{\rvct{z}}(\vct{u})$
    \State $\vct{x}_0 \gets \vctfunc{g}_{\rvct{x}_0}\mkern-2mu(\vct{z},\vct{v}_0)$
    \State $\vct{x}_{\range[1]{\intg{S}\intg{t}}}$, $\vct{y}_{\range{\intg{t}}} \gets \vctfunc{g}_{\rvct{x}_{:},\rvct{y}_{:}}(\vct{z}, \vct{x}_0, \vct{v}_{\range[1]{\intg{S}\intg{t}}}, \vct{w}_{\range{\intg{t}}})$
    \State \Return $\vct{y}_{\range{\intg{t}}}$
    \EndFunction
    \columnbreak
    \State $\rvct{u} \sim \tilde{\mu}$
    \State $\rvct{v}_0 \sim \tilde{\nu}$
    \State $\rvct{v}_{\intg{s}} \sim \nrm(\vct{0},\idmtx_{\intg{V}}) ~\forall \intg{s} \in \range{\intg{S}\intg{T}}$
    \State $\rvct{w}_{\intg{t}} \sim \eta ~\forall \intg{t} \in \range{\intg{T}}$
    \State $\rvct{y}_{\range{\intg{T}}} \gets \vctfunc{g}_{\rvct{y}_{:}}\mkern-2mu(\rvct{u},\rvct{v}_0,\rvct{v}_{\range[1]{\intg{S}\intg{T}}}, \rvct{w}_{\range{\intg{T}}})$
  \end{algorithmic}
\end{multicols}
\vspace{3pt} \end{model}
The observations can be thought of as imposing a series of constraints on the possible values of the latent variables $\rvct{q}$; under additional assumptions on the regularity of the mapping $\vctfunc{g}_{\rvct{y}_{:}}$ from latent variables to observations, the set of $\rvct{q}$ values satisfying the constraints will form a differentiable manifold embedded in $\reals^{\intg{Q}}$.
The posterior distribution on $\rvct{q}$ given $\rvct{y}_{\range{\intg{T}}} = \vct{y}_{\range{\intg{T}}}$ will not have a density with respect to the Lebesgue measure $\lebm{\intg{Q}}$ as the manifold it has support on is a $\lebm{\intg{Q}}$-null set. In the following section we show however that by using a different reference measure we can compute a tractable density function for the posterior.

\subsection{Target posterior on manifold}
\label{subsec:posterior-distribution-non-centred}

We define a \emph{constraint function} $\vctfunc{c} : \reals^\intg{Q} \to \reals^{\intg{C}}$ with $\intg{C}=\intg{TY} < \intg{Q}$ as
\begin{equation}\label{eq:constraint-function}
  \vctfunc{c}([\vct{u}; \vct{v}_0; \vct{v}_{\range[1]{\intg{S}\intg{T}}}; \vct{w}_{\range{\intg{T}}}]) :=
    [\vctfunc{g}_{\rvct{y}_{:}}\mkern-2mu(\vct{u}, \vct{v}_0, \vct{v}_{\range[1]{\intg{S}\intg{T}}}, \vct{w}_{\range{\intg{T}}})]
     - [\vct{y}_{\range{\intg{T}}}],
\end{equation}
with the set of values on the manifold $\set{M} := \lbrace \vct{q} \in \reals^{\intg{Q}}: \vctfunc{c}(\vct{q}) = \vct{0}\rbrace$ corresponding to all inputs of $\smash{\vctfunc{g}_{\rvct{y}_{:}}}$ consistent with the observations. We make the following assumption.
\begin{assumption}\label{ass:differentiability-and-surjectivity-constraint-function}
  The constraint function $\vctfunc{c}$ is continuously differentiable and has Jacobian $\jacob\vctfunc{c}$ which is full row-rank $\rho$-almost surely.
\end{assumption}
The differentiability requirement will be met if $\vctfunc{f}_\delta$, $\vctfunc{g}_{\rvct{z}}$, $\vctfunc{g}_{\rvct{x}_0}$ and $\vctfunc{h}$ are all themselves continuously differentiable with respect to each of their arguments. The rank condition on the Jacobian requires that the observed variables do not give redundant information about the latent variables, i.e.~no observed variable can be expressed as a deterministic function of a subset of the other observed variables. In the case of observations subject to Gaussian additive noise this will always be satisfied if the noise covariance is full-rank. In the noiseless observation case the condition will be met if no component of the state at an observation time $\rvct{x}_{\intg{S}\intg{t}}$ is fully determined by the state at the previous observation time $\rvct{x}_{\intg{S}(\intg{t}-1)}$ and parameters $\rvct{z}$, and the function $\vcthbar: \set{X} \to \set{Y}$ has Jacobian with full row-rank everywhere.

Under these assumptions $\set{M}$ will be a $\intg{D} = \intg{Q} - \intg{C}$ dimensional differentiable  manifold embedded into the $\intg{Q}$ dimensional ambient space. The posterior distribution $\smash{\pi \in \borelprob(\reals^{\intg{Q}})}$ on $\rvct{q}$ given $\smash{\vctfunc{c}(\rvct{q}) = \vct{0}}$ (and so $\smash{\rvct{y}_{\range{\intg{T}}} = \vct{y}_{\range{\intg{T}}}}$) is supported only on $\set{M}$. Note that $\set{M}$ has zero Lebesgue measure, so $\pi$ does not have a density with respect to $\lebm{\intg{Q}}$. To define an appropriate reference measure we further assume the following.
\begin{assumption}\label{ass:latent-space-metric}
  The ambient latent space $\reals^{\intg{Q}}$ is equipped with a metric tensor with a fixed positive definite matrix representation $\mtx{M}$.
\end{assumption}
A possible reference measure is then the $\intg{D}$-dimensional Hausdorff measure $\haum{\mtx{M}}{\intg{D}}$ on the ambient space, which has the required property that $\pi$ is absolutely continuous with respect to $\haum{\mtx{M}}{\intg{D}}$. For measurable subsets $\set{A} \subseteq \set{M}$ we have that $\riem{\mtx{M}}{\set{M}}(\set{A}) = \haum{\mtx{M}}{\intg{D}}(\set{A})$ where $\riem{\mtx{M}}{\set{M}}$ is the Riemannian measure on the manifold $\set{M}$ with metric induced from the ambient metric (see \cref{lem:equivalence-of-riemannian-and-hausdorff-measures} in \appcref{app:proof-manifold-density}). As later results will be more naturally stated in terms of the Riemannian measure, we will use $\riem{\mtx{M}}{\set{M}}$ as the reference measure here.
\begin{proposition}
  \label{prop:manifold-density}
  Under \cref{ass:parameter-and-initial-state-generator-functions,ass:differentiability-and-surjectivity-constraint-function,ass:latent-space-metric} the posterior $\pi$ has a density
  \begin{equation}\label{eq:manifold-density}
    \td{\pi}{\riem{\mtx{M}}{\set{M}}}(\vct{q}) \propto
    \td{\rho}{\lebm{\intg{Q}}}(\vct{q})
    \det{
    \jacob{\vctfunc{c}}(\vct{q}) \mtx{M}^{-1} \jacob{\vctfunc{c}}(\vct{q})\tr
    }^{-\frac{1}{2}}.
  \end{equation}
\end{proposition}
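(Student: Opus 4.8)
The statement is a disintegration result: $\pi$ is the conditional law of the prior $\rho$ given the event $\{\vctfunc{c}(\rvct{q}) = \vct{0}\}$, supported on the fibre $\set{M} = \vctfunc{c}^{-1}(\vct{0})$, and \cref{eq:manifold-density} identifies its density against the Riemannian measure induced on $\set{M}$ by $\mtx{M}$. The natural engine is the smooth (Federer) co-area formula. My plan is to first normalise the ambient metric to the Euclidean one by a linear isometry, apply the standard co-area disintegration in those coordinates, and then pull the resulting density back.

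Since $\mtx{M}$ is positive definite, \cref{ass:latent-space-metric} lets me factor $\mtx{M} = \mtx{A}\tr\mtx{A}$ with $\mtx{A}$ invertible (e.g.\ a Cholesky factor). The map $\Phi:\vct{q}\mapsto\mtx{A}\vct{q}$ is then a linear isometry from $(\reals^{\intg{Q}}, \mtx{M})$ onto $(\reals^{\intg{Q}}, \idmtx_{\intg{Q}})$. Writing $\tilde{\vct{q}} = \mtx{A}\vct{q}$, I record the three transformations needed. The constraint becomes $\tilde{\vctfunc{c}} := \vctfunc{c}\circ\Phi^{-1}$, with Jacobian $\jacob\tilde{\vctfunc{c}}(\tilde{\vct{q}}) = \jacob\vctfunc{c}(\vct{q})\mtx{A}^{-1}$; the fibre becomes $\tilde{\set{M}} = \Phi(\set{M})$, and because $\Phi$ is an isometry the Riemannian measure $\riem{\mtx{M}}{\set{M}}$ pushes forward to the Euclidean surface measure $\haum{\intg{D}}{\tilde{\set{M}}}$ (the $\intg{D}$-dimensional Hausdorff measure, $\intg{D} = \intg{Q}-\intg{C}$) on $\tilde{\set{M}}$; and the prior pushes forward to $\Phi_\ast\rho$, whose Lebesgue density picks up only the constant Jacobian factor $\det{\mtx{A}}^{-1}$.

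Next I apply the co-area formula in the $\tilde{\vct{q}}$ coordinates. \cref{ass:differentiability-and-surjectivity-constraint-function} guarantees that $\jacob\vctfunc{c}$, and hence $\jacob\tilde{\vctfunc{c}}$, is full row-rank $\rho$-almost surely, so $\tilde{\vctfunc{c}}$ is a submersion off a $\rho$-null set and the fibres are smooth $\intg{D}$-manifolds. The co-area formula disintegrates $\lebm{\intg{Q}}$ along the level sets of $\tilde{\vctfunc{c}}$, weighting the surface measure on each fibre by the reciprocal generalised Jacobian $\det{\jacob\tilde{\vctfunc{c}}\,\jacob\tilde{\vctfunc{c}}\tr}^{-\frac{1}{2}}$; multiplying through by the Lebesgue density of $\Phi_\ast\rho$ and normalising on the fibre over $\vct{0}$ identifies the density of $\Phi_\ast\pi$ with respect to $\haum{\intg{D}}{\tilde{\set{M}}}$ as proportional to $\td{\rho}{\lebm{\intg{Q}}}(\vct{q})\,\det{\jacob\tilde{\vctfunc{c}}\,\jacob\tilde{\vctfunc{c}}\tr}^{-\frac{1}{2}}$, the constant $\det{\mtx{A}}^{-1}$ being absorbed into the proportionality. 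The algebraic identity $\jacob\tilde{\vctfunc{c}}\,\jacob\tilde{\vctfunc{c}}\tr = \jacob\vctfunc{c}\,\mtx{A}^{-1}\bigl(\mtx{A}^{-1}\bigr)\tr\jacob\vctfunc{c}\tr = \jacob\vctfunc{c}\,\bigl(\mtx{A}\tr\mtx{A}\bigr)^{-1}\jacob\vctfunc{c}\tr = \jacob\vctfunc{c}\,\mtx{M}^{-1}\jacob\vctfunc{c}\tr$ then matches the determinant factor in \cref{eq:manifold-density}, and pulling back through $\Phi$ (which returns $\haum{\intg{D}}{\tilde{\set{M}}}$ to $\riem{\mtx{M}}{\set{M}}$) yields the claimed expression.

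The step demanding the most care is not the algebra but the \emph{meaning} of conditioning on the $\lebm{\intg{Q}}$-null event $\{\vctfunc{c}(\rvct{q}) = \vct{0}\}$, which is ambiguous in general (the Borel--Kolmogorov paradox). I would make this precise by \emph{defining} $\pi$ through the co-area disintegration above --- equivalently, as the weak limit of $\rho$ conditioned on $\{\vctfunc{c}(\rvct{q}) \in \mtx{B}_\epsilon\}$ as the ambient-metric ball $\mtx{B}_\epsilon$ about $\vct{0}$ shrinks to a point --- and checking that this limit coincides with the fibre conditional the formula produces. Alongside this, one must confirm that \cref{ass:differentiability-and-surjectivity-constraint-function} makes the generalised Jacobian $\det{\jacob\vctfunc{c}\,\mtx{M}^{-1}\jacob\vctfunc{c}\tr}^{\frac{1}{2}}$ strictly positive $\rho$-almost surely, so the right-hand side of \cref{eq:manifold-density} is a well-defined density, and that \cref{ass:parameter-and-initial-state-generator-functions} (strictly positive smooth prior densities) keeps $\td{\rho}{\lebm{\intg{Q}}}$ positive and smooth, so that the normalising constant on $\set{M}$ is finite and nonzero.
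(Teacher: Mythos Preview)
Your proposal is correct and follows essentially the same route as the paper: both arguments rest on the co-area formula to disintegrate $\rho$ along the fibres of $\vctfunc{c}$ and then read off the conditional density on $\set{M}=\vctfunc{c}^{-1}(\vct{0})$. The only cosmetic difference is that the paper packages your linear isometry $\Phi$ into the statement of the co-area formula itself (quoting it directly with the metric $\mtx{M}$ in place, via the change of variables $\vct{q}'=\mtx{M}^{-1/2}\vct{q}$), whereas you perform that reduction to the Euclidean case explicitly before invoking the standard version; the algebra $\jacob\tilde{\vctfunc{c}}\,\jacob\tilde{\vctfunc{c}}\tr=\jacob\vctfunc{c}\,\mtx{M}^{-1}\jacob\vctfunc{c}\tr$ is identical in both. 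The paper also identifies the conditional via the law of total expectation rather than through an $\epsilon$-ball weak limit, but these are two equivalent ways of fixing a version of the regular conditional distribution and your discussion of the Borel--Kolmogorov ambiguity is a welcome addition.
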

A proof is given in \appcref{app:proof-manifold-density}. See also \citet{rousset2010free}, \citet{diaconis2013sampling} and \citet{graham2017asymptotically}.
The negative log posterior density thus reads
\begin{align*}
  \ell(\vct{q}):=
  -\log \td{\rho}{\lebm{\intg{Q}}}(\vct{q}) +
  \frac{1}{2}
  \log\det{
	  \mtx{G}_{\mtx{M}}(\vct{q})
  },
\end{align*}
where the $\intg{C}\times\intg{C}$ matrix $\opfunc{G}_{\mtx{M}}(\vct{q}) := \jacob{\vctfunc{c}}(\vct{q}) \mtx{M}^{-1} \jacob{\vctfunc{c}}(\vct{q})\tr$ is termed the \emph{Gram matrix}. %
\section{MCMC on implicitly defined manifolds}
\label{sec:manifold-mcmc}

In this section, we review \ac{mcmc} methods for sampling from a distribution supported on an implicitly defined manifold. We stress at this point that we do not design a fundamentally new such \ac{mcmc} method but instead rely on modifying and combining existing methodologies. In particular, we adopt a symplectic integrator for constrained Hamiltonian systems \citep{andersen1983rattle,leimkuhler1994symplectic,leimkuhler2016efficient} to simulate Hamiltonian dynamics trajectories on the manifold, and use this as a proposal generating mechanism within a \acf{hmc} scheme \citep{duane1987hybrid,neal2011mcmc,betancourt2017conceptual}. The use of constrained Hamiltonian dynamics within an \ac{hmc} context has been previously proposed multiple times - see for example \citet{hartmann2005constrained}, \citet[Ch.~3]{rousset2010free}, \citet{brubaker2012family} and \citet{lelievre2019hybrid}. We refer the interested reader to \citet{arnol2013mathematical,holm2009geometric} for general background on constrained mechanics and to \citet[Ch. 3]{barp2020bracket} for a comprehensive review of \ac{hmc} on manifolds.

\subsection{Constrained Hamiltonian dynamics}

To define the constrained Hamiltonian system, we first augment the latent vector~$\rvct{q}$, henceforth the \emph{position}, with a \emph{momentum} $\rvct{p}$. Formally the momentum is a \emph{co-vector} i.e.~a linear form in $\linops(\reals^{\intg{Q}},\reals)$ and the metric on the position space induces a \emph{co-metric} on the momentum space with matrix representation $\mtx{M}^{-1}$. As a common abuse of notation, we will not distinguish between vectors and co-vectors and simply consider $\rvct{p}$ as a vector in $\reals^\intg{Q}$ equipped with a metric with matrix representation $\mtx{M}^{-1}$. The \emph{Hamiltonian} function $h : \reals^\intg{Q} \times \reals^\intg{Q} \to \reals$ is then defined as%
\begin{equation}\label{eq:hamiltonian}
  h(\vct{q},\vct{p}) :=
  \ell(\vct{q})+
  \tfrac{1}{2}\vct{p}\tr\mtx{M}^{-1}\vct{p}.
\end{equation}
Thus far we have an unconstrained Hamiltonian system. To restrict $\rvct{q}$ to $\set{M}$, we introduce a Lagrange multiplier function
$\vctfunc{\lambda} : \reals^{\intg{Q}} \times \reals^{\intg{Q}} \to \reals^{\intg{C}}$ implicitly defined so that constraint $\vctfunc{c}(\vct{q}) = \vct{0}$ is enforced, at all times, in the following dynamics. The constrained Hamiltonian dynamics associated with the Hamiltonian in \eqref{eq:hamiltonian} are then described by the system of \acp{dae}
\begin{equation}\label{eq:constrained-hamiltonian-dynamic}
  \td{\vct{q}}{t} =
  \mtx{M}^{-1}\vct{p},
  \qquad
  \td{\vct{p}}{t} =
  -\grad \ell(\vct{q})
  -\jacob\vctfunc{c}(\vct{q})\tr\vctfunc{\lambda}(\vct{q},\vct{p}),
  \qquad
  \vctfunc{c}(\vct{q}) = \vct{0}.
\end{equation}
The condition that the \emph{primary constraints}, $\vctfunc{c}(\vct{q}) = \vct{0}$, are preserved in time implies a set of \emph{secondary constraints} of the form
\(
  \label{eq:momenta-constraint}
  \jacob\vctfunc{c}(\vct{q})\mtx{M}^{-1}\vct{p} = \vct{0}.
\)
\begin{definition}
  \label{def:cotangent-space}
  The set of momenta satisfying the secondary constraints at a position $\vct{q}$ coincides with the \emph{co-tangent space} of the manifold $\set{M}$ at $\vct{q}$, denoted
  \begin{equation*}
    \tangent^*_{\vct{q}}\set{M} :=
    \lbrace
      \vct{p} \in \reals^{\intg{Q}} :
      \jacob\vctfunc{c}(\vct{q})\mtx{M}^{-1}\vct{p} = \vct{0}
    \rbrace.
  \end{equation*}
\end{definition}
\begin{definition}
  \label{def:cotangent-bundle}
  The set of positions and momenta in the manifold and corresponding co-tangent spaces respectively are termed the \emph{co-tangent bundle}, denoted
  \begin{equation*}
    \tangent^*\set{M} :=
    \lbrace
      \vct{q} \in \set{M}, \vct{p} \in \tangent^*_{\vct{q}}\set{M}
    \rbrace
    =
    \lbrace
      \vct{q} \in \reals^\intg{Q}, \vct{p} \in \reals^{\intg{Q}}:
      \vctfunc{c}(\vct{q}) = \vct{0},
      \jacob\vctfunc{c}(\vct{q})\mtx{M}^{-1}\vct{p} = \vct{0}
    \rbrace.
  \end{equation*}
\end{definition}
\begin{remark}
  $\tangent^*\set{M}$ is a \emph{symplectic manifold} with a \emph{symplectic form} given by the restriction of the symplectic form on $\reals^{\intg{Q}}\times\reals^{\intg{Q}}$ to $\tangent^{*}\set{M}$, which under \cref{ass:differentiability-and-surjectivity-constraint-function,ass:latent-space-metric} is almost surely non-degenerate.
\end{remark}
\begin{definition}
  The symplectic form on $\tangent^*\set{M}$ induces a volume form and corresponding \emph{Liouville measure} denoted $\sigma_{\tangent^*\set{M}}$, which can be decomposed as
  \begin{equation}\label{eq:liouville-measure}
    \sigma_{\tangent^{*}\set{M}}(\diff\vct{q},\diff\vct{p}) =
    \riem{\mtx{M}}{\set{M}}(\diff\vct{q}) \,
    \riem{\mtx{M}^{-1}}{\tangent^{\ast}_{\vct{q}}\set{M}}(\diff\vct{p}),
  \end{equation}
  which is independent of the choice of $\mtx{M}$ \citep[Proposition 3.40]{rousset2010free}.
\end{definition}
\noindent The \emph{flow map} associated with the solution to the \acp{dae} in \cref{eq:constrained-hamiltonian-dynamic} is
\(
  \opfunc{\Phi}^{h_{\vctfunc{c}}}_t:
  \tangent^*\set{M} \to \tangent^*\set{M},
\)
such that for $(\vct{q}(0), \vct{p}(0)) \in \tangent^*\set{M}$ and $t \geq 0$ we have $(\vct{q}(t),\vct{p}(t)) = \opfunc{\Phi}^{h_{\vctfunc{c}}}_t(\vct{q}(0), \vct{p}(0))$. Fundamental properties of $\opfunc{\Phi}^{h_{\vctfunc{c}}}_t$ are that it is energy conserving and symplectic.
\begin{proposition}
  \label{prop:hamiltonian-conservation}
  The Hamiltonian in \cref{eq:hamiltonian} is conserved under the flow map $\opfunc{\Phi}^{h_{\vctfunc{c}}}_t$.
\end{proposition}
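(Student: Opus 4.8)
The plan is to differentiate the Hamiltonian along a solution trajectory of the constrained dynamics and show the derivative vanishes, which is equivalent to conservation. First I would fix an integral curve $(\vct{q}(t),\vct{p}(t)) = \opfunc{\Phi}^{h_{\vctfunc{c}}}_t(\vct{q}(0),\vct{p}(0))$ of the \acp{dae} in \cref{eq:constrained-hamiltonian-dynamic}, and recall the partial gradients of $h$ in \cref{eq:hamiltonian}, namely $\grad_{\vct{q}} h = \grad\ell(\vct{q})$ and $\grad_{\vct{p}} h = \mtx{M}^{-1}\vct{p}$. Applying the chain rule along the curve gives
\begin{equation*}
  \td{}{t}\,h(\vct{q}(t),\vct{p}(t)) =
  \grad\ell(\vct{q})\tr \td{\vct{q}}{t} +
  (\mtx{M}^{-1}\vct{p})\tr \td{\vct{p}}{t}.
\end{equation*}

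Next I would substitute the equations of motion. The position equation $\inlinetd{\vct{q}}{t} = \mtx{M}^{-1}\vct{p}$ and the momentum equation $\inlinetd{\vct{p}}{t} = -\grad\ell(\vct{q}) - \jacob\vctfunc{c}(\vct{q})\tr\vctfunc{\lambda}(\vct{q},\vct{p})$ yield
\begin{equation*}
  \td{}{t}\,h =
  \grad\ell(\vct{q})\tr\mtx{M}^{-1}\vct{p}
  - (\mtx{M}^{-1}\vct{p})\tr\grad\ell(\vct{q})
  - (\mtx{M}^{-1}\vct{p})\tr\jacob\vctfunc{c}(\vct{q})\tr\vctfunc{\lambda}(\vct{q},\vct{p}).
\end{equation*}
Since $\mtx{M}$, and hence $\mtx{M}^{-1}$, is symmetric by \cref{ass:latent-space-metric}, the first two scalar terms are transposes of one another and cancel. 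It remains to show that the third term, carrying the constraint force, also vanishes. Rewriting it as $-\bigl(\jacob\vctfunc{c}(\vct{q})\mtx{M}^{-1}\vct{p}\bigr)\tr\vctfunc{\lambda}(\vct{q},\vct{p})$ and invoking the momentum constraint \cref{eq:momenta-constraint}, $\jacob\vctfunc{c}(\vct{q})\mtx{M}^{-1}\vct{p} = \vct{0}$, shows the whole expression is zero, so $h$ is constant along the flow.

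The one point requiring care — and the main obstacle — is the justification that the momentum constraint \cref{eq:momenta-constraint} holds at \emph{every} time $t$ along the trajectory, not merely at the initial condition. This is exactly the geometric content of the flow map having codomain $\tangent^*\set{M}$: the Lagrange multiplier $\vctfunc{\lambda}$ is defined implicitly so that the trajectory remains in the co-tangent bundle for all $t$, i.e. both $\vctfunc{c}(\vct{q}(t)) = \vct{0}$ and its consequence \cref{eq:momenta-constraint} are preserved. I would make this explicit by noting that differentiating $\vctfunc{c}(\vct{q}(t)) = \vct{0}$ in time reproduces \cref{eq:momenta-constraint}, and that $\vctfunc{\lambda}$ is precisely the multiplier enforcing the second-order invariance of the constraint; since the initial data lies in $\tangent^*\set{M}$ by hypothesis on the domain of $\opfunc{\Phi}^{h_{\vctfunc{c}}}_t$, the constraint is maintained throughout. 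With this invariance in hand, the cancellation above is unconditional, establishing $h(\vct{q}(t),\vct{p}(t)) = h(\vct{q}(0),\vct{p}(0))$ for all $t$.
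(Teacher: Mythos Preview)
Your proposal is correct and follows essentially the same approach as the paper: compute $\inlinetd{h}{t}$ via the chain rule, substitute the equations of motion, cancel the $\grad\ell$ terms, and use the co-tangent bundle constraint $\jacob\vctfunc{c}(\vct{q})\mtx{M}^{-1}\vct{p} = \vct{0}$ to eliminate the remaining term. Your added discussion justifying that \cref{eq:momenta-constraint} holds for all $t$ (not just initially) is a welcome clarification that the paper's proof leaves implicit.
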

\begin{proposition}
  \label{prop:symplecticness-of-constrained-dynamic}
  The flow map $\opfunc{\Phi}^{h_{\vctfunc{c}}}_t$ preserves the symplectic form on $\tangent^*\set{M}$.
\end{proposition}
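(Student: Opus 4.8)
The plan is to show that the symplectic form $\diff\vct{q} \wedge \diff\vct{p}$, restricted to $\tangent^*\set{M}$, is invariant under $\opfunc{\Phi}^{h_{\vctfunc{c}}}_t$ by proving that it stays constant along trajectories. Concretely, I would take a solution curve $(\vct{q}(t),\vct{p}(t))$ of \cref{eq:constrained-hamiltonian-dynamic} together with two \emph{variation fields} $(\delta_1\vct{q}(t),\delta_1\vct{p}(t))$ and $(\delta_2\vct{q}(t),\delta_2\vct{p}(t))$, each obtained by differentiating a smooth one-parameter family of solutions with respect to the family parameter. Because every member of such a family satisfies $\vctfunc{c}(\vct{q})=\vct{0}$ and the momentum constraint \cref{eq:momenta-constraint}, differentiation shows that each variation lies in the tangent space of $\tangent^*\set{M}$, i.e.\ satisfies the two linearised constraints displayed in the Remark following \cref{def:wedge-product}. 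Evaluating the symplectic form on this pair (using \cref{def:wedge-product}) gives the scalar $\omega(t) := \delta_1\vct{p}(t)\tr\delta_2\vct{q}(t) - \delta_1\vct{q}(t)\tr\delta_2\vct{p}(t)$, and since the flow carries the pair of variations at time $0$ to the pair at time $t$, it suffices to establish $\td{\omega}{t} = 0$ to obtain \cref{eq:symplectic-condition}.

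First I would linearise the dynamics. Differentiating \cref{eq:constrained-hamiltonian-dynamic} along the family yields, for each $j \in \{1,2\}$,
\begin{equation*}
  \delta_j\dot{\vct{q}} = \mtx{M}^{-1}\delta_j\vct{p}, \qquad
  \delta_j\dot{\vct{p}} = -\mtx{H}(\vct{q})\,\delta_j\vct{q} - \jacob\vctfunc{c}(\vct{q})\tr\delta_j\vctfunc{\lambda}, \qquad
  \jacob\vctfunc{c}(\vct{q})\,\delta_j\vct{q} = \vct{0},
\end{equation*}
where $\mtx{H}(\vct{q}) := \grad^2\ell(\vct{q}) + \sum_{\intg{i}=1}^{\intg{C}} \lambda_{\intg{i}}\,\grad^2 c_{\intg{i}}(\vct{q})$ is symmetric.

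I would then differentiate $\omega$, substitute these four time-derivatives, and collect terms. The kinetic contributions combine as $\delta_1\vct{p}\tr\mtx{M}^{-1}\delta_2\vct{p} - \delta_1\vct{p}\tr\mtx{M}^{-1}\delta_2\vct{p}$ and cancel because $\mtx{M}^{-1}$ is symmetric; the two Hessian terms appear as $\pm\,\delta_1\vct{q}\tr\mtx{H}(\vct{q})\,\delta_2\vct{q}$ and cancel because $\mtx{H}(\vct{q})$ is symmetric; and the Lagrange-multiplier terms vanish outright, since they occur contracted as $\delta_1\vctfunc{\lambda}\tr\jacob\vctfunc{c}(\vct{q})\,\delta_2\vct{q}$ and $\delta_1\vct{q}\tr\jacob\vctfunc{c}(\vct{q})\tr\delta_2\vctfunc{\lambda}$, each of which is zero by the linearised position constraint $\jacob\vctfunc{c}(\vct{q})\,\delta_j\vct{q} = \vct{0}$. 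Hence $\td{\omega}{t} = 0$.

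The main obstacle is conceptual rather than computational: one must verify that the admissible variations genuinely span the tangent space of $\tangent^*\set{M}$ and satisfy exactly the linearised constraints, so that the projection terms $-\jacob\vctfunc{c}(\vct{q})\tr\vctfunc{\lambda}$ may legitimately be discarded --- without the enforced constraint $\jacob\vctfunc{c}(\vct{q})\,\delta_j\vct{q}=\vct{0}$ the identity would fail. It is precisely this constraint (together with consistency of the $\delta\vct{p}$-constraint with the dynamics) that renders the constraint forces invisible to the symplectic form. As a cross-check, a coordinate-free argument is available: the restriction of the canonical symplectic form of $\reals^{\intg{Q}}\times\reals^{\intg{Q}}$ to $\tangent^*\set{M}$ is the canonical symplectic form of $\tangent^*\set{M}$, and the constrained dynamics is the genuine Hamiltonian flow of the restricted Hamiltonian there, so invariance follows from Cartan's formula: writing $\mathcal{L}$ for the Lie derivative along the Hamiltonian vector field $X_h$, one has $\mathcal{L}_{X_h}(\diff\vct{q}\wedge\diff\vct{p}) = \diff\bigl(\iota_{X_h}(\diff\vct{q}\wedge\diff\vct{p})\bigr) + \iota_{X_h}\,\diff(\diff\vct{q}\wedge\diff\vct{p}) = \diff\diff h = 0$, using $\iota_{X_h}(\diff\vct{q}\wedge\diff\vct{p}) = \diff h$ and that $\diff\vct{q}\wedge\diff\vct{p}$ is closed.
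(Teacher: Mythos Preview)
Your proposal is correct and follows essentially the same strategy as the paper's proof: show that the time derivative of the symplectic form vanishes along trajectories, using symmetry of $\mtx{M}^{-1}$ and of the Hessian terms to cancel the kinetic and potential contributions, and using the linearised constraint $\jacob\vctfunc{c}(\vct{q})\,\delta\vct{q}=\vct{0}$ to kill the Lagrange-multiplier contributions. The paper packages the latter step as a separate lemma showing $\diff\vct{q}\wedge\diff(\jacob\vctfunc{c}(\vct{q})\tr\vctfunc{\lambda})=0$ on $\set{M}$, whereas you absorb the $\sum_{\intg{i}}\lambda_{\intg{i}}\hess c_{\intg{i}}$ piece into your symmetric matrix $\mtx{H}$ and handle the $\delta\vctfunc{\lambda}$ piece directly via the constraint --- but this is only a difference in presentation, not substance; your additional Cartan-formula cross-check is a nice coordinate-free confirmation that the paper does not include.
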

See for example \citet[Chapter 7]{leimkuhler2004simulating}. Proofs are also given in \appcref{app:proof-hamiltonian-conservation,app:proof-symplecticness-of-constrained-dynamic}.
Together these properties mean the flow map $\opfunc{\Phi}^{h_{\vctfunc{c}}}_t$ has an invariant measure on $\tangent^*\set{M}$.
\begin{corollary}
  The conservation properties in \cref{prop:symplecticness-of-constrained-dynamic,prop:hamiltonian-conservation} imply that the measure $\zeta(\diff\vct{q},\diff\vct{p}) \propto \exp(-h(\vct{q},\vct{p})) \riem{}{\tangent^*\set{M}}(\diff\vct{q},\diff\vct{p})$ is invariant under the flow map $\opfunc{\Phi}^{h_{\vctfunc{c}}}_t$ corresponding to the constrained dynamics in \cref{eq:constrained-hamiltonian-dynamic}.
\end{corollary}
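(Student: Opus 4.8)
The plan is to establish invariance in the form $(\opfunc{\Phi}^{h_{\vctfunc{c}}}_t)_*\zeta = \zeta$, which, since the flow is a bijection of $\tangent^*\set{M}$ with inverse $\opfunc{\Phi}^{h_{\vctfunc{c}}}_{-t}$, is equivalent to showing that $\int_{\tangent^*\set{M}} \phi\circ\opfunc{\Phi}^{h_{\vctfunc{c}}}_t \diff\zeta = \int_{\tangent^*\set{M}} \phi \diff\zeta$ for every bounded measurable $\phi : \tangent^*\set{M}\to\reals$. The whole argument is a Liouville-theorem computation that feeds the two conservation properties of \cref{prop:symplecticness-of-constrained-dynamic,prop:hamiltonian-conservation} through a single change of variables.

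First I would turn symplecticness into preservation of the reference measure. The Liouville measure $\riem{}{\tangent^*\set{M}}$ is, by construction, the measure associated with the volume form obtained as the top exterior power of the symplectic form $\diff\vct{q}\wedge\diff\vct{p}$ restricted to $\tangent^*\set{M}$. Because \cref{prop:symplecticness-of-constrained-dynamic} guarantees that the flow preserves this symplectic form, \eqref{eq:symplectic-condition}, it preserves every exterior power of it and hence the induced volume form; therefore $(\opfunc{\Phi}^{h_{\vctfunc{c}}}_t)_*\riem{}{\tangent^*\set{M}} = \riem{}{\tangent^*\set{M}}$.

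Second, I would record that \cref{prop:hamiltonian-conservation} gives $h\circ\opfunc{\Phi}^{h_{\vctfunc{c}}}_t = h$ along trajectories, so the density $\exp(-h)$ of $\zeta$ relative to $\riem{}{\tangent^*\set{M}}$ is constant on orbits. Combining the two, I start from $\int \phi\circ\opfunc{\Phi}^{h_{\vctfunc{c}}}_t\,\exp(-h)\diff\riem{}{\tangent^*\set{M}}$, use Hamiltonian conservation to rewrite $\exp(-h)$ as $\exp(-h)\circ\opfunc{\Phi}^{h_{\vctfunc{c}}}_t$, so that the integrand becomes $\bigl(\phi\exp(-h)\bigr)\circ\opfunc{\Phi}^{h_{\vctfunc{c}}}_t$, and finally apply the measure-preservation of the first step to strip the composition with the flow. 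This returns $\int \phi\exp(-h)\diff\riem{}{\tangent^*\set{M}} = \int \phi\diff\zeta$; taking $\phi\equiv 1$ shows that the normalising constant is likewise unchanged, so the proportionality defining $\zeta$ is respected.

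The main obstacle is the first step: passing rigorously from conservation of the symplectic two-form to conservation of the induced volume measure on the $2\intg{D}$-dimensional manifold $\tangent^*\set{M}$. One must verify that $\riem{}{\tangent^*\set{M}}$ really is the Liouville volume of the restricted form --- as asserted in the remark surrounding the decomposition \eqref{eq:liouville-measure} --- and that the restriction of $\diff\vct{q}\wedge\diff\vct{p}$ to the embedded cotangent bundle is nondegenerate there, so that its top power is a genuine volume form. These are standard facts for cotangent bundles, but they are exactly what makes the implication ``symplectic $\Rightarrow$ measure-preserving'' valid; once they are in hand, the remaining change of variables is routine.
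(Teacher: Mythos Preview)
Your argument is correct and is exactly the standard Liouville-theorem computation that the paper implicitly relies upon; the paper does not actually supply a proof for this corollary, treating it as an immediate consequence of \cref{prop:hamiltonian-conservation,prop:symplecticness-of-constrained-dynamic} together with the fact (noted in the remark preceding \eqref{eq:liouville-measure}) that $\riem{}{\tangent^*\set{M}}$ is the Liouville measure induced by the symplectic form. Your identification of the only nontrivial step --- that preservation of the symplectic two-form implies preservation of its top exterior power and hence of the Liouville volume --- is precisely what the paper takes for granted.
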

Using the definitions in \cref{eq:hamiltonian} and \cref{eq:liouville-measure}, it readily follows that the target posterior $\pi(\diff\vct{q})\propto\exp(-\ell(\vct{q}))\riem{\mtx{M}}{\set{M}}(\diff\vct{q})$ is the marginal distribution on the position under the invariant measure $\zeta$. Thus, the flow map $\opfunc{\Phi}^{h_{\vctfunc{c}}}_t$ can be used to construct a family of Markov kernels which marginally leave $\pi$ invariant.

\subsection{Momentum resampling}
\label{subsec:momentum-resampling}

As the dynamics remain confined to a level-set of the Hamiltonian in \cref{eq:hamiltonian}, a Markov chain constructed by iterating $\opfunc{\Phi}^{{h_{\vctfunc{c}}}}_t$ will not be ergodic. By resampling the momentum between $\opfunc{\Phi}^{h_{\vctfunc{c}}}_t$ applications we can however move between Hamiltonian level-sets.

To orthogonally (with respect to the co-metric) project a momentum onto $\tangent^*_{\vct{q}}\set{M}$, the co-tangent space at $\vct{q}$, we apply the projector $\opfunc{P}_{\mtx{M}}(\vct{q})$, defined as
\begin{equation}\label{eq:projector-operator}
\opfunc{P}_{\mtx{M}}(\vct{q}) := \idmtx_{\intg{Q}} - \jacob\vctfunc{c}(\vct{q})\tr\opfunc{G}_{\mtx{M}}(\vct{q})^{-1}\jacob\vctfunc{c}(\vct{q})\mtx{M}^{-1}.
\end{equation}
Using $\opfunc{P}_{\mtx{M}}$ we can independently sample a momentum from its conditional distribution given the position under the measure $\zeta$ by projecting a sample from $\nrm(\vct{0},\mtx{M})$.
\begin{proposition}
  \label{prop:momentum-resampling-correctness}
  If $\tilde{\rvct{p}} \sim \nrm(\vct{0},\mtx{M})$ then $\rvct{p} \gets \opfunc{P}_{\mtx{M}}(\vct{q}) \tilde{\rvct{p}}$ is distributed with density \ifclass{statsoc}{}{\newline}$\exp(-\vct{p}\tr\mtx{M}^{-1}\vct{p} / 2)$ with respect to $\riem{\mtx{M}^{-1}}{\tangent^*_{\vct{q}}\set{M}}$, the distribution of $\rvct{p} \gvn \rvct{q}=\vct{q}$ for $\rvct{q},\rvct{p} \sim \zeta$.
\end{proposition}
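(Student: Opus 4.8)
The plan is to establish the claim in two parts: first identify the conditional law of $\rvct{p} \gvn \rvct{q} = \vct{q}$ under $\zeta$, and then show the projection construction reproduces exactly this law. For the first part, I would substitute the Hamiltonian \eqref{eq:hamiltonian} into $\zeta(\diff\vct{q},\diff\vct{p}) \propto \exp(-h(\vct{q},\vct{p}))\,\riem{}{\tangent^*\set{M}}(\diff\vct{q},\diff\vct{p})$ and use the disintegration \eqref{eq:liouville-measure} of the Liouville measure as $\riem{\mtx{M}}{\set{M}}(\diff\vct{q})\,\riem{\mtx{M}^{-1}}{\tangent^*_\vct{q}\set{M}}(\diff\vct{p})$. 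Because $h(\vct{q},\vct{p}) = \ell(\vct{q}) + \tfrac{1}{2}\vct{p}\tr\mtx{M}^{-1}\vct{p}$ splits additively, conditioning on $\rvct{q} = \vct{q}$ absorbs the $\exp(-\ell(\vct{q}))$ weight and the $\diff\vct{q}$-marginal into the normalising constant, leaving the conditional law of $\rvct{p} \gvn \rvct{q} = \vct{q}$ with density $\propto \exp(-\tfrac{1}{2}\vct{p}\tr\mtx{M}^{-1}\vct{p})$ with respect to the fibre measure $\riem{\mtx{M}^{-1}}{\tangent^*_\vct{q}\set{M}}$. It then remains to show that the push-forward of $\nrm(\vct{0},\mtx{M})$ under $\opfunc{P}_\mtx{M}(\vct{q})$ has this same density.

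As a preliminary I would verify that $\opfunc{P}_\mtx{M}(\vct{q})$ of \eqref{eq:projector-operator} is the orthogonal projector onto $\tangent^*_\vct{q}\set{M}$ with respect to the co-metric inner product $\langle\vct{p},\vct{p}'\rangle := \vct{p}\tr\mtx{M}^{-1}\vct{p}'$. Writing $\mtx{J} := \jacob\vctfunc{c}(\vct{q})$ and $\opfunc{G} := \opfunc{G}_\mtx{M}(\vct{q}) = \mtx{J}\mtx{M}^{-1}\mtx{J}\tr$ (invertible by \cref{ass:differentiability-and-surjectivity-constraint-function}), a direct computation gives $\mtx{J}\mtx{M}^{-1}\opfunc{P}_\mtx{M}(\vct{q}) = \vct{0}$, so the range of $\opfunc{P}_\mtx{M}(\vct{q})$ lies in $\tangent^*_\vct{q}\set{M}$, the kernel of $\mtx{J}\mtx{M}^{-1}$ by \cref{def:cotangent-space}; that $\opfunc{P}_\mtx{M}(\vct{q})\vct{p} = \vct{p}$ whenever $\mtx{J}\mtx{M}^{-1}\vct{p} = \vct{0}$, giving idempotency; and that the complementary range of $\idmtx_\intg{Q} - \opfunc{P}_\mtx{M}(\vct{q})$, spanned by the columns of $\mtx{J}\tr$, is $\langle\cdot,\cdot\rangle$-orthogonal to $\tangent^*_\vct{q}\set{M}$. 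These are routine linear-algebra checks.

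The core step is then a whitening reduction to the isotropic case. Let $\mtx{L}$ satisfy $\mtx{L}\mtx{L}\tr = \mtx{M}$ and substitute $\vct{p} = \mtx{L}\vct{r}$, so that $\tilde{\rvct{p}} \sim \nrm(\vct{0},\mtx{M})$ becomes $\tilde{\rvct{r}} \sim \nrm(\vct{0},\idmtx_\intg{Q})$ and the co-metric inner product becomes the Euclidean one, since $\mtx{L}\tr\mtx{M}^{-1}\mtx{L} = \idmtx_\intg{Q}$. Under this change of coordinates the subspace $\tangent^*_\vct{q}\set{M}$ maps to a $\intg{D}$-dimensional Euclidean subspace $\tilde{\set{V}} := \mtx{L}^{-1}\tangent^*_\vct{q}\set{M}$, and $\opfunc{P}_\mtx{M}(\vct{q})$ conjugates to the Euclidean orthogonal projector onto $\tilde{\set{V}}$. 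The claim then follows from the elementary fact that the orthogonal projection of a standard isotropic Gaussian onto a linear subspace is a standard Gaussian on that subspace: in $\vct{r}$-coordinates the components along $\tilde{\set{V}}$ are independent standard normals and the projection discards the orthogonal components, so the projected law has density $\propto \exp(-\tfrac{1}{2}\|\vct{r}\|^2)$ with respect to the $\intg{D}$-dimensional Hausdorff measure on $\tilde{\set{V}}$.

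Transporting this back, the key observation is that $\mtx{L}^{-1}$ restricts to a linear isometry from $(\tangent^*_\vct{q}\set{M}, \langle\cdot,\cdot\rangle)$ onto $\tilde{\set{V}}$ with its Euclidean structure, because an $\langle\cdot,\cdot\rangle$-orthonormal frame of $\tangent^*_\vct{q}\set{M}$ is sent to a Euclidean-orthonormal frame of $\tilde{\set{V}}$; consequently $\mtx{L}$ pushes the Hausdorff measure on $\tilde{\set{V}}$ exactly onto $\riem{\mtx{M}^{-1}}{\tangent^*_\vct{q}\set{M}}$ with no Jacobian factor, while $\|\vct{r}\|^2 = \vct{p}\tr\mtx{M}^{-1}\vct{p}$. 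Hence $\rvct{p} = \opfunc{P}_\mtx{M}(\vct{q})\tilde{\rvct{p}}$ has density $\propto \exp(-\tfrac{1}{2}\vct{p}\tr\mtx{M}^{-1}\vct{p})$ with respect to $\riem{\mtx{M}^{-1}}{\tangent^*_\vct{q}\set{M}}$, matching the conditional identified in the first step; as both are probability measures with proportional densities against the same reference measure, they coincide. I expect the main obstacle to be precisely this measure-transport bookkeeping — ensuring the whitening map is a genuine isometry between the two induced inner-product structures so the normalising constants align — rather than any analytic difficulty, since everything reduces to finite-dimensional Gaussian linear algebra.
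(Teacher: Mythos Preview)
Your proposal is correct and follows essentially the same approach as the paper: both whiten by a square root of $\mtx{M}$ (the paper uses the symmetric square root $\mtx{M}^{\frac{1}{2}}$ and a QR decomposition of $\mtx{M}^{-\frac{1}{2}}\jacob\vctfunc{c}(\vct{q})$ to obtain an explicit orthonormal parametrisation $\vctfunc{\phi}(\vct{m}) = \mtx{M}^{\frac{1}{2}}\mtx{Q}_{\|}\vct{m}$ of the co-tangent space, where you argue more abstractly via the isometry $\mtx{L}^{-1}$), reduce to Euclidean orthogonal projection of an isotropic Gaussian onto a linear subspace, and then identify the resulting measure with $\riem{\mtx{M}^{-1}}{\tangent^*_{\vct{q}}\set{M}}$. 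The measure-transport bookkeeping you flag as the main obstacle is handled in the paper precisely by checking $\det{\jacob\vctfunc{\phi}\tr\mtx{M}^{-1}\jacob\vctfunc{\phi}} = 1$ and comparing directly to \cref{def:riemannian-measure}, which is exactly your isometry observation made concrete.
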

See  \appcref{app:proof-momentum-resampling-correctness} for a proof.

\subsection{Numerical discretisation}
\label{subsec:constrained-hamiltonian-dynamics-numerical-discretisation}

In general the system of \acp{dae} in \cref{eq:constrained-hamiltonian-dynamic} will not have an analytic solution, and we are required to use a time discretisation to approximate the exact flow map $\smash{\opfunc{\Phi}^{h_{\vctfunc{c}}}_t}$. We will first introduce a class of symplectic integrators for the unconstrained Hamiltonian system before showing how they can be used to construct a symplectic integrator for the constrained Hamiltonian system.

\subsubsection{Unconstrained integrator: St\"ormer--Verlet and Gaussian splittings}
A standard approach for defining symplectic integrators for Hamiltonian systems is to \emph{split} the Hamiltonian into a sum of components for which the exact corresponding flow map can be computed, with a splitting of the form
$\label{eq:hamiltonian-splitting}
  h(\vct{q},\vct{p}) =
  h_1(\vct{q}) + h_2(\vct{q},\vct{p})
$
particularly common. If $\opfunc{\Phi}^{h_1}_t$ and $\opfunc{\Phi}^{h_2}_t$ denote the flow maps associated with the dynamics for Hamiltonians $h_1$ and $h_2$ respectively, then the symmetric composition
\(
  \smash{
    \opfunc{\Psi}_{t}
    =
    \opfunc{\Phi}^{h_1}_{\sfrac{t}{2}} \circ \opfunc{\Phi}^{h_2}_{t} \circ \opfunc{\Phi}^{h_1}_{\sfrac{t}{2}}
  }
\)
is a symplectic and second-order accurate integrator for the Hamiltonian system \citep{leimkuhler2004simulating}. Furthermore, as both $\opfunc{\Phi}^{h_1}_t$ and $\opfunc{\Phi}^{h_2}_t$ are time-reversible, $\opfunc{\Psi}_{t}$ is also time-reversible.

Various choices can be made for splitting the Hamiltonian of interest in \cref{eq:hamiltonian} between $h_1$ and $h_2$, subject to the requirement that the flow map $\opfunc{\Phi}^{h_2}_t$ can be computed, with $\opfunc{\Phi}^{h_1}_t(\vct{q}, \vct{p}) = (\vct{q}, \vct{p} - t\grad h_1(\vct{q}))$ always trivial to compute. An obvious splitting is $h_1(\vct{q}) = \ell(\vct{q})$ and $h_2(\vct{q},\vct{p}) = \frac{1}{2}\vct{p}\tr\mtx{M}^{-1}\vct{p}$; in this case $\opfunc{\Phi}^{h_2}_{t}(\vct{q}, \vct{p}) = (\vct{q} + t \mtx{M}^{-1} \vct{p}, \vct{p})$. The composition then corresponds to the St\"ormer--Verlet integrator \citep{verlet1967computer}.

In our setting, the log prior density on the ambient space $\log \inlinetd{\rho}{\lebm{\intg{Q}}}$ is quadratic in the components of the position $\rvct{q}$ corresponding to $\rvct{v}_{\range{\intg{S}\intg{T}}}$ due to their standard normal prior distribution. It will typically also be possible to choose an appropriate parametrization such that the prior densities $\inlinetd{\tilde\mu}{\lambda_{\intg{Z}}}$, $\inlinetd{\tilde\nu}{\lambda_{\intg{X}}}$ and $\inlinetd{\eta}{\lambda_{\intg{W}}}$ are equal to or well approximated by standard normal densities. An alternative splitting, which can be useful in this setting, is then
\(
  h_1(\vct{q}) = \ell(\vct{q}) - \tfrac{1}{2}\vct{q}\tr\vct{q}
\),
\(
  h_2(\vct{q},\vct{p}) = \tfrac{1}{2}\vct{q}\tr\vct{q} + \tfrac{1}{2}\vct{p}\tr\mtx{M}^{-1}\vct{p}
\),
with the simplification $h_1(\vct{q}) = \frac{1}{2}\log\det{\mtx{G}_{\mtx{M}}(\vct{q})}$ when $\log\inlinetd{\rho}{\lebm{\intg{Q}}}(\vct{q}) = -\tfrac{1}{2}\vct{q}\tr\vct{q}$. The quadratic form of $h_2$ and corresponding linear derivatives mean the corresponding flow map is still exactly computable.
If we let $\mtx{R}$ be an orthonormal matrix with the normalised eigenvectors of $\mtx{M}^{-1}$ as columns and $\mtx{\Omega}$ a diagonal matrix of the square-roots of the eigenvalues such that $\mtx{M}^{-1} = \mtx{R} \mtx{\Omega}^2 \mtx{R}\tr$ then we have that
\begin{equation*}\label{eq:gaussian-h2-flow-map}
  \opfunc{\Phi}^{h_2}_{t}(\vct{q}, \vct{p}) =
  (
    \mtx{R}\cos(\mtx{\Omega}t)\mtx{R}\tr\vct{q} + \mtx{R}\mtx{\Omega}\sin(\mtx{\Omega}t)\mtx{R}\tr\vct{p},
    \mtx{R}\cos(\mtx{\Omega}t)\mtx{R}\tr\vct{p} - \mtx{R}\mtx{\Omega}^{-1}\sin(\mtx{\Omega}t)\mtx{R}\tr\vct{q}
  ).
\end{equation*}
This splitting and corresponding integrator has been used previously in various settings \citep{beskos2011hybrid,neal2011mcmc,beskos2013advanced,shahbaba2014split}. Importantly as the flow-map $\opfunc{\Phi}^{h_2}_{t}$ exactly preserves Gaussian prior measures, under certain assumptions the change in Hamiltonian over a trajectory generated using the integrator does not grow as the dimension $\intg{Q}$ of the space is increased, so the probability of accepting a proposed move from the trajectory remains independent of dimension for a fixed step size. This in contrast to the St\"ormer-Verlet integrator for which for a fixed step size the accept probability will tend to zero as the dimension becomes large \citep{beskos2011hybrid}.

In the context here of inference in partially-observed diffusion models, as the time step $\delta$ of the discretisation of the diffusion is decreased (or equivalently $\intg{S}$ increased), the dimension of set of latent noise vectors $\rvct{v}_{\range{\intg{S}\intg{T}}}$ and so $\rvct{q}$ will increase, with the prior distribution on $\rvct{q}$ tending to a distribution with a density with respect to an infinite-dimensional Gaussian measure in the limit $\delta \to 0$. As here the target \emph{posterior} distribution has support only on a submanifold of the ambient space, the results of \citet{beskos2011hybrid} do not directly carry over, however, empirically we have found that a constrained integrator based on this \emph{Gaussian splitting} gives an improved scaling in sampling efficiency with $\intg{S}$ compared to the \emph{St\"ormer--Verlet splitting} as we illustrate in our numerical experiments in \cref{sec:numerical-experiments}.

\subsubsection{Constrained integrator}

We now show how a constraint-preserving symplectic integrator can be formed from the unconstrained integrator $\opfunc{\Psi}_t$. In \citet[Section 3.1]{reich1996symplectic} it is observed that the map defined by
\(
  \opfunc{\Pi}_{\vctfunc{\lambda}}(\vct{q},\vct{p}) = (\vct{q},\vct{p} - \jacob{\vctfunc{c}}(\vct{q})\tr\vctfunc{\lambda}(\vct{q},\vct{p})),
\)
with $\vct{q} \in \set{M}$, is symplectic for any function $\vctfunc{\lambda}$ that is sufficiently regular (e.g.~continuously differentiable).
\citet{reich1996symplectic} then shows that if a second-order accurate symplectic integrator for an unconstrained system with Hamiltonian as in \cref{eq:hamiltonian} is defined by the map $\opfunc{\Psi}_t$, then the integrator with step defined by the composition
\(
  (\vct{q}',\vct{p}') =
  \opfunc{\Pi}_{\vctfunc{\lambda}'} \circ
  \opfunc{\Psi}_{t} \circ
  \opfunc{\Pi}_{\vctfunc{\lambda}}(\vct{q},\vct{p}),
\)
with $\vctfunc{\lambda}$ implicitly defined by solving for the primary constraints, $\vctfunc{c}(\vct{q}') = \vct{0}$, and $\vctfunc{\lambda}'$ by solving for the secondary constraints, $\jacob\vctfunc{c}(\vct{q}')\mtx{M}^{-1}\vct{p}' = \vct{0}$, is a second-order accurate symplectic integrator for the corresponding constrained system.

Rather than composing instances of $\opfunc{\Pi}_{\vctfunc{\lambda}}$ with the overall map $\opfunc{\Psi}_{t}$  as proposed by \citet{reich1996symplectic}, we can instead consider composing $\opfunc{\Pi}_{\vctfunc{\lambda}}$ with the component maps which make up $\opfunc{\Psi}_{t}$ to enforce the constraints within each `sub-step'. This was proposed for the specific case of $\opfunc{\Psi}_{t}$ corresponding to a St\"ormer--Verlet integrator in the \emph{geodesic integration} algorithm of \citet{leimkuhler2016efficient}.

For a general quadratic $h_2$ (covering both the St\"ormer--Verlet and Gaussian splittings introduced above) the associated component flow-maps $\opfunc{\Phi}^{h_1}_t$ and $\opfunc{\Phi}^{h_2}_t$ can be expressed for suitable choices of matrices $(\opfunc{\Gamma}^{q,q}_{t}, \opfunc{\Gamma}^{q,p}_{t}, \opfunc{\Gamma}^{p,q}_{t}, \opfunc{\Gamma}^{p,p}_{t})$ as
\begin{equation}
  \opfunc{\Phi}^{h_1}_{t}(\vct{q}, \vct{p}) :=
  (\vct{q}, \vct{p} - t\grad h_1(\vct{q})),
  ~~
  \opfunc{\Phi}^{h_2}_{t}(\vct{q}, \vct{p}) := (
    \opfunc{\Gamma}^{q,q}_{t}\vct{q} + \opfunc{\Gamma}^{q,p}_t\vct{p},
    \opfunc{\Gamma}^{p,q}_{t}\vct{q} + \opfunc{\Gamma}^{p,p}_t\vct{p}
  ).
\end{equation}
First considering the flow-map $\opfunc{\Phi}^{h_1}_t$, we define the constraint-preserving composition%
\begin{equation}
  \opfunc{\Xi}^{h_1}_t(\vct{q},\vct{p}) :=
  \opfunc{\Pi}_{\vctfunc{\lambda}} \circ \opfunc{\Phi}^{h_1}_t(\vct{q},\vct{p}) =
  (\vct{q}, \vct{p} - t \grad h_1(\vct{q}) - \jacob\vctfunc{c}(\vct{q})\tr\vctfunc{\lambda}),
\end{equation}
with $\vctfunc{\lambda}$ implicitly defined by the condition $\opfunc{\Xi}^{h_1}_t(\vct{q},\vct{p}) \in \tangent^*\set{M} ~~\forall (\vct{q},\vct{p}) \in \tangent^*\set{M}$.
Solving for $\vctfunc{\lambda}$ yields the explicit definition
\(
  \opfunc{\Xi}^{h_1}_t(\vct{q},\vct{p}) :=
  (\vct{q}, \opfunc{P}_{\mtx{M}}(\vct{q})(\vct{p} - t \grad h_1(\vct{q}))).
\)
As $\opfunc{\Xi}^{h_1}_{-t} \circ \opfunc{\Xi}^{h_1}_{t}(\vct{q},\vct{p}) = (\vct{q},\vct{p})$ for all $(\vct{q},\vct{p})\in\tangent^*\set{M}$, the mapping $\opfunc{\Xi}^{h_1}_{t}$ is time-reversible. Now considering the $\opfunc{\Phi}^{h_2}_t$ map, we first consider the composition
\begin{equation}\label{eq:h2-step-partial-composition}
  \opfunc{\Phi}^{h_2}_t \circ \opfunc{\Pi}_{\vctfunc{\lambda}} (\vct{q},\vct{p}) = (
    \opfunc{\Gamma}^{q,q}_{t}\vct{q} + \opfunc{\Gamma}^{q,p}_t(\vct{p} - \jacob\vctfunc{c}(\vct{q})\tr\vctfunc{\lambda}),
    \opfunc{\Gamma}^{p,q}_{t}\vct{q} + \opfunc{\Gamma}^{p,p}_t(\vct{p} - \jacob\vctfunc{c}(\vct{q})\tr\vctfunc{\lambda})
  ),
\end{equation}
with $\vctfunc{\lambda}$ implicitly defined by requiring the following to hold for any $(\vct{q},\vct{p}) \in \tangent^*\set{M}$,
\begin{equation}\label{eq:h2-step-position-constraint-condition}
  \vctfunc{c}\left(
    \opfunc{\Gamma}^{q,q}_{t}\vct{q} + \opfunc{\Gamma}^{q,p}_t\vct{p}
- \opfunc{\Gamma}^{q,p}_t\jacob\vctfunc{c}(\vct{q})\tr\vctfunc{\lambda}
  \right) = \vct{0}.
\end{equation}
For general constraint functions $\vctfunc{c}$ this is a non-linear system of
equations in $\vctfunc{\lambda}$ that needs to be solved using an iterative method. Newton's method gives the update
\begin{equation}\label{eq:newton-iteration}
  \begin{aligned}
  (\vct{q}_\intg{j}, \vct{p}_\intg{j}) &\gets \opfunc{\Phi}^{h_2}_t(\vct{q}, \vct{p} - \jacob\vctfunc{c}(\vct{q})\tr\vctfunc{\lambda}_\intg{j})),\\
  \vctfunc{\lambda}_{\intg{j}+1} &\gets
  \vctfunc{\lambda}_{\intg{j}} +
  (\jacob\vctfunc{c}(\vct{q}_{\intg{j}})(\opfunc{\Gamma}^{q,p}_t)^{-1}\jacob\vctfunc{c}(\vct{q})\tr)^{-1}
  \vctfunc{c}(\vct{q}_{\intg{j}})
  \quad\text{with}~
  \vct{\lambda}_0 = \mathbf{0}.
  \end{aligned}
\end{equation}

Assuming for now the iterative solver can find a value for $\vctfunc{\lambda}$ to satisfy \eqref{eq:h2-step-position-constraint-condition}, the composition in \eqref{eq:h2-step-partial-composition} preserves the primary constraints, but not the secondary constraints in general. The secondary constraints can be enforced by composing with a further instance of the map $\opfunc{\Pi}_{\vctfunc{\lambda}'}$ resulting in the overall composition
\(
  \opfunc{\Xi}^{h_2}_t(\vct{q},\vct{p}) :=
  \opfunc{\Pi}_{\vctfunc{\lambda}'} \circ \opfunc{\Phi}^{h_2}_t \circ \opfunc{\Pi}_{\vctfunc{\lambda}} (\vct{q},\vct{p})
\)
with $\vctfunc{\lambda}'$ implicitly defined by the condition $\opfunc{\Xi}^{h_2}_t(\vct{q},\vct{p}) \in \tangent^*\set{M} ~~\forall (\vct{q},\vct{p}) \in \tangent^*\set{M}$. This can be explicitly solved for $\vctfunc{\lambda}'$ to give
$
  \smash{\opfunc{\Xi}^{h_2}_t(\vct{q},\vct{p})} =
  (\bar{\vct{q}}, \opfunc{P}_{\mtx{M}}(\bar{\vct{q}}) \bar{\vct{p}})
$
with $(\bar{\vct{q}}, \bar{\vct{p}}) = \opfunc{\Phi}^{h_2}_t \circ \opfunc{\Pi}_{\vctfunc{\lambda}} (\vct{q},\vct{p})$ as defined in \cref{eq:h2-step-partial-composition,eq:h2-step-position-constraint-condition}.

For sufficiently small $t$ and sufficiently smooth constraint functions it can be shown that there exists a locally unique solution to \eqref{eq:h2-step-position-constraint-condition} \citep{brubaker2012family,lelievre2019hybrid}. In general, though, there may be multiple or no solutions, and even if there is a unique solution the iterative solver may fail to converge. This lack of a guarantee of converging to a unique solution presents a challenge in terms of maintaining the time-reversibility of the $\opfunc{\Xi}^{h_2}_t$ step and so the overall integrator.

To enforce reversibility on $\opfunc{\Xi}^{h_2}_t$ we apply a \emph{reversibility-check transform} $R$ defined such that $R(\opfunc{\Xi}^{h_2}_t)(\vct{q}, \vct{p}) = \opfunc{\Xi}^{h_2}_t(\vct{q}, \vct{p})$ for all $(\vct{q}, \vct{p})$ where $\opfunc{\Xi}^{h_2}_{-t} \circ \opfunc{\Xi}^{h_2}_t(\vct{q}, \vct{p}) = (\vct{q}, \vct{p})$, with $(\vct{q}, \vct{p})$ values for which the condition is not met causing evaluation of $R(\opfunc{\Xi}^{h_2}_t)(\vct{q}, \vct{p})$ to raise an error. Similarly if the iterative solves in the evaluation of either the forward $\opfunc{\Xi}^{h_2}_t$ or time-reversed $\opfunc{\Xi}^{h_2}_t$ maps fail to converge an error is also raised. The map $R(\opfunc{\Xi}^{h_2}_t)$ is then by construction reversible unless an error is raised that can be suitably handled downstream by the \ac{hmc} implementation. The approach of using an explicit reversibility check in \ac{mcmc} methods using an iterative solver was first proposed by \citet{zappa2018monte} with subsequent application within the context of constrained \ac{hmc} in \citet{graham2017asymptotically} and \citet{lelievre2019hybrid}.

With the reversibility check, the map $R(\opfunc{\Xi}^{h_2}_{t})$ is guaranteed to be time-reversible if it does not raise an error. As $\smash{\opfunc{\Xi}^{h_1}_{t}}$ is also time-reversible and both maps are symplectic, the integrator $\smash{\opfunc{\Xi}^{h_1}_{\sfrac{t}{2}} \circ R(\opfunc{\Xi}^{h_2}_{t}) \circ \opfunc{\Xi}^{h_1}_{\sfrac{t}{2}}}$ defines a time-reversible symplectic map on $\tangent^*\set{M}$ whenever an error is not raised. In practice the equality conditions indicating whether the iterative solver has converged and the reversibility check is satisfied, are both relaxed to an error norm being less than tolerances $\theta_c$ and $\theta_q$ respectively. Further details of the implementation of the integrator and its relation to previous work are given in \appcref{app:integrator-implementation-details}.

\subsection{Choice of metric matrix representation $\mtx{M}$}
\label{subsec:metric-choice}

We recommend choosing $\mtx{M} = \mathrm{cov}(\rvct{q})^{-1}$ for $\rvct{q} \sim \rho$, i.e. the precision matrix under the prior $\rho$; this requires that $\rho$ has finite second-order central moments. While there is no fundamental requirement for $\mtx{M}$ to match the prior precision matrix and so a different choice of $\mtx{M}$ could be used when $\mathrm{cov}(\rvct{q})^{-1}$ is not defined, heuristically we find that the performance of the proposed methodology is improved when $\rho$ is exactly or `close to' Gaussian in all components, and this can usually be arranged by transforms of $\rvct{u}$, $\rvct{v}_0$ and $\rvct{w}_{\range{\intg{T}}}$ and corresponding reparametrisations of $(\tilde{\mu}, \vctfunc{g}_{\rvct{z}})$, $(\tilde{\nu}, \vctfunc{g}_{\rvct{x}_0})$ and $(\eta, \vctfunc{h})$. The constrained Hamiltonian dynamics in \eqref{eq:constrained-hamiltonian-dynamic} with $\mtx{M} = \mathrm{cov}(\rvct{q})^{-1}$ are equivalent to the dynamics under a linear transform $\rvct{q}' = \mtx{L}\tr\rvct{q}$ with $\mtx{L}\mtx{L}\tr = \mtx{M}$ for which $\mathrm{cov}(\rvct{q}') = \idmtx_{\intg{Q}}$ and normalising for the prior scales and correlations in this manner appears to improve the robustness and efficiency of the algorithm.

\subsection{Overall algorithm}

\begin{algorithm}[!t]
  \caption{Hamiltonian Monte Carlo with a constrained symplectic integrator.}
  \label{alg:constrained-hmc}
  {
\begin{algorithmic}
  \Require (reasonable default values for parameters are given in parenthesis)\\
  \setlength\multicolsep{0pt}
  \begin{multicols}{2}
    $\vct{q}$ : current state with $\Vert\vctfunc{c}(\vct{q})\Vert < \theta_c$,\\
    $\theta_c$ : constraint tolerance ($10^{-9}$),\\
    $\theta_q$: position change tolerance ($10^{-8}$),\\
    $\timestep$ : integrator time step,\\
    $\intg{I}$ : number of integrator steps / sample,\\
    $\intg{J}$ : maximum Newton iterations ($50$).
  \end{multicols}
  \Ensure\\
  $\vct{q}'$ : next state with $\Vert\vctfunc{c}(\vct{q}')\Vert < \theta_c$ and if $\vct{q} \sim \pi \implies \vct{q}' \sim \pi$.
\end{algorithmic}
\vspace{1mm}
\hrule
\vspace{-3mm}
\setlength{\columnsep}{6pt}
\begin{multicols}{2}
\ifclass{statsoc}{\footnotesize}{}%
\begin{algorithmic}[1]
    \algrenewcommand\algorithmicindent{1em}
    \Function{$\opfunc{\Xi}^{h_1}$}{$\vct{q}$, $\vct{p}$, $\timestep$}
        \State \Return $(\vct{q}, \opfunc{P}_\mtx{M}(\vct{q})(\vct{p} - t \grad h_1(\vct{q})))$
    \EndFunction
    \vspace{4pt}
    \Function{$\opfunc{\Xi}^{h_2}$}{$\vct{q}$, $\vct{p}$, $\timestep$}
        \State $(\vct{\lambda},\bar{\vct{q}}) \gets (\vct{0},\vct{q})$
        \For{$\intg{j} \in \range{\intg{J}}$}
            \State $(\vct{q}',\vct{p}') \gets \opfunc{\Phi}^{h_2}_\timestep(\vct{q},\vct{p} - \jacob\vctfunc{c}(\vct{q})\tr\vct{\lambda})$
            \State $\vct{e} \gets \vctfunc{c}(\vct{q}')$\label{ln:evaluate-constraint-function}
            \If{$\Vert\vct{e}\Vert < \theta_c$ \textbf{and} $\Vert\vct{q}' - \bar{\vct{q}}\Vert < \theta_q$\label{ln:convergence-check}}
                \State \Return $(\vct{q}', \opfunc{P}_\mtx{M}(\vct{q'})\vct{p}')$
            \EndIf
            \vspace{-2pt}
            \State $\vct{\lambda} \gets \vct{\lambda} + (\jacob\vctfunc{c}(\vct{q}')(\opfunc{\Gamma}^{q,p}_t)^{-1}\jacob\vctfunc{c}(\vct{q})\tr)^{-1}\vct{e}$\label{ln:lambda-update}
            \State $\bar{\vct{q}} \gets \vct{q}'$
        \EndFor
        \State \Raise \textsc{IntegratorError} \label{ln:non-convergence}
    \EndFunction
    \vspace{4pt}
    \Function{Reversible$\opfunc{\Xi}^{h_2}$}{$\vct{q}$, $\vct{p}$, $\timestep$}
        \State $(\vct{q}', \vct{p}') \gets \Call{$\opfunc{\Xi}^{h_2}$}{\vct{q}, \vct{p}, \timestep}$
        \State $(\vct{q}_{r}, \vct{p}_r) \gets \Call{$\opfunc{\Xi}^{h_2}$}{\vct{q}', \vct{p}', -\timestep}$
        \If{$\left\Vert \vct{q} - \vct{q}_{r} \right\Vert_{\infty} > 2\theta_q$}
            \State \Raise \textsc{IntegratorError} \label{ln:reverse-check}
        \EndIf
        \State \Return $(\vct{q}', \vct{p}')$
    \EndFunction
    \columnbreak
    \Function{ConstrStep}{$\vct{q}$, $\vct{p}$, $\timestep$}
        \State $(\vct{q}, \vct{p}) \gets \Call{$\opfunc{\Xi}^{h_1}$}{\vct{q}, \vct{p}, \timestep / 2}$
        \State $(\vct{q}, \vct{p}) \gets \Call{Reversible$\opfunc{\Xi}^{h_2}$}{\vct{q}, \vct{p}, \timestep}$
        \State \Return $\Call{$\opfunc{\Xi}^{h_1}$}{\vct{q}, \vct{p}, \timestep / 2}$
    \EndFunction
    \vspace{4pt}
    \State $\tilde{\vct{p}} \sim \nrm(\vct{0},\mtx{M})$
    \State $\vct{p} \gets \opfunc{P}_{\mtx{M}}(\vct{q})\tilde{\vct{p}}$
    \State $(\vct{q}', \vct{p}') \gets (\vct{q}, \vct{p})$
    \Try
        \For{$\intg{i}\in\range{\intg{I}}$} \label{ln:main-chain-loop}
            \State $(\vct{q}', \vct{p}') \gets \Call{ConstrStep}{\vct{q}', \vct{p}', t}$
        \EndFor
        \State $u \sim \mathcal{U}(0,1)$
        \If{$u > \exp(h(\vct{q},\vct{p}) - h(\vct{q}',\vct{p}'))$}
          \State $(\vct{q}',\vct{p}') \gets (\vct{q}, -\vct{p})$
        \EndIf
    \Catch{\textsc{IntegratorError}}
        \State $(\vct{q}', \vct{p}') \gets (\vct{q}, -\vct{p})$
    \EndTry
\end{algorithmic}
\end{multicols}
\vspace{-4mm}   }
\end{algorithm}

Pseudo-code corresponding to applying the reversible, constraint-preserving and symplectic integrator with step $\opfunc{\Xi}^{h_1}_{\sfrac{\timestep}{2}} \circ R(\opfunc{\Xi}^{h_2}_{\timestep}) \circ \opfunc{\Xi}^{h_1}_{\sfrac{\timestep}{2}}$ within a \ac{hmc} algorithm is summarised in \cref{alg:constrained-hmc}. Any errors raised when integrating the trajectory by iteratively applying the \textsc{ConstrStep} function are handled by terminating the trajectory and the \ac{hmc} transition returning the initial state i.e.~a `rejection'. Although for simplicity we have described in \cref{alg:constrained-hmc} the use of a constraint-preserving integrator within a Metropolis-adjusted \ac{hmc} algorithm with a static integration time $\intg{I}\timestep$ per chain iteration, in practice we use a \ac{hmc} algorithm which dynamically adapts the integration time $\intg{I}t$, in particular the dynamic multinomial \ac{hmc} algorithm described in the appendix of \citet{betancourt2017conceptual}, an extension of the \emph{No-U-Turn-Sampler} algorithm \citep{hoffman2014no}. We also use the dual-averaging scheme of \citet{hoffman2014no} to adaptively tune the integrator step-size $t$ in a \emph{warm-up} sampling phase to target an acceptance statistic of 0.8. A general purpose implementation of the full algorithm is provided in Python package \emph{Mici} \citep{graham2019mici}, which we use in the numerical experiments in \cref{sec:numerical-experiments}.

We have found the suggested defaults values for the various algorithmic parameters work well in practice for a range of different models. This therefore results in an automated methodology with a practitioner only needing to specify functions to evaluate the log prior density $\log \inlinetd{\rho}{\lebm{\intg{Q}}}$ and constraint function $\vctfunc{c}$ for the diffusion model in question, with the required derivatives of these functions being able to be constructed algorithmically \citep{griewank2008evaluating}. %
\section{Computational cost}
\label{sec:computational-cost}

We can apply \cref{alg:constrained-hmc} to perform inference in partially-observed diffusion models by targeting the manifold-supported posterior distribution in the non-centred parametrisation of the time-discretised model described in \cref{subsec:posterior-distribution-non-centred}. While this approach allows significant generality in the choice of the elements of the diffusion and observation model, it can be computationally expensive to run. To analyse the cost of \cref{alg:constrained-hmc} in this setting we make the following simplifying assumption.
\begin{assumption}
  \label{ass:comp-cost-newton-iterations}
  The Newton iteration to solve \eqref{eq:h2-step-position-constraint-condition} converges within $\intg{J}$ iterations for some $\intg{J} > 0$ that does not depend on $\intg{S}$ and $\intg{T}$, for fixed $t$, $\theta_c$ and $\theta_q$.
\end{assumption}
This assumption appears to hold in practice, and we provide numerical evidence to this effect in the numerical experiments in \cref{sec:numerical-experiments}. We then have the following.
\begin{proposition}
  \label{the:comp-cost-without-conditioning}
  Under \cref{ass:comp-cost-newton-iterations} the computational cost of a single constrained integrator step in \cref{alg:constrained-hmc} when directly applied to the posterior density \eqref{eq:manifold-density} of the generative model in \cref{alg:non-centred-generative-model} is $\bigo(\intg{S}\intg{T}^3)$.
\end{proposition}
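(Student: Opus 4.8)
The plan is a straightforward operation count: decompose one constrained step $\opfunc{\Xi}^{h_1}_{\sfrac{\timestep}{2}} \circ R(\opfunc{\Xi}^{h_2}_{\timestep}) \circ \opfunc{\Xi}^{h_1}_{\sfrac{\timestep}{2}}$ into its primitive linear-algebra and automatic-differentiation operations, bound the cost of each in terms of $\intg{S}$ and $\intg{T}$, and collect the dominant term. Throughout I treat the per-step dimensions $\intg{Y}$, $\intg{V}$, $\intg{W}$, $\intg{U}$ and $\intg{V}_0$ as fixed constants, so that the ambient dimension is $\intg{Q} = \bigo(\intg{S}\intg{T})$ while the number of constraints is $\intg{C} = \intg{T}\intg{Y} = \bigo(\intg{T})$; I also take $\mtx{M}$ to have structure (e.g.\ diagonal, as for the recommended prior-precision choice) permitting an $\bigo(\intg{Q})$ matrix--vector product with $\mtx{M}^{-1}$.

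First I would catalogue the repeatedly-used primitives. A single forward evaluation of $\vctfunc{c}$ amounts to running \cref{alg:non-centred-generative-model}, i.e.\ $\intg{S}\intg{T}$ applications of $\vctfunc{f}_\delta$, and so costs $\bigo(\intg{S}\intg{T})$. Since $\vctfunc{c}$ has $\intg{C}$ outputs, a reverse-mode evaluation of the full Jacobian $\jacob\vctfunc{c}$ costs $\bigo(\intg{C}\cdot\intg{S}\intg{T}) = \bigo(\intg{S}\intg{T}^2)$ and produces a $\intg{C}\times\intg{Q}$ array \citep{griewank2008evaluating}. Forming the Gram matrix $\opfunc{G}_{\mtx{M}}(\vct{q}) = \jacob\vctfunc{c}(\vct{q})\mtx{M}^{-1}\jacob\vctfunc{c}(\vct{q})\tr$ is then a product of a $\intg{C}\times\intg{Q}$ matrix with a $\intg{Q}\times\intg{C}$ matrix, costing $\bigo(\intg{C}^2\intg{Q}) = \bigo(\intg{S}\intg{T}^3)$; this is the dominant primitive. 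A Cholesky factorisation of the $\intg{C}\times\intg{C}$ matrix $\opfunc{G}_{\mtx{M}}$, and any subsequent solve or log-determinant extraction, costs only $\bigo(\intg{C}^3) = \bigo(\intg{T}^3)$. Consequently each application of the projector $\opfunc{P}_{\mtx{M}}$ of \eqref{eq:projector-operator}, and each Newton update of the form \eqref{eq:newton-iteration} (which assembles an analogous $\intg{C}\times\intg{C}$ matrix and solves against it), is $\bigo(\intg{S}\intg{T}^3)$.

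Next I would assemble these into the full step. Each $\opfunc{\Xi}^{h_1}$ half-step requires one gradient $\grad h_1$ and one projector application, while $R(\opfunc{\Xi}^{h_2}_{\timestep})$ runs the Newton iteration \eqref{eq:newton-iteration} both forwards and in reverse; by \cref{ass:comp-cost-newton-iterations} each such solve terminates in a number of iterations $\intg{J}$ independent of $\intg{S}$ and $\intg{T}$. Hence the per-step cost is a bounded number of the primitives above, and collecting them yields $\bigo(\intg{S}\intg{T}^3)$, with the assembly of $\opfunc{G}_{\mtx{M}}$ as the bottleneck.

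The one contribution that needs more care, and which I expect to be the main obstacle, is bounding $\grad h_1$: under the Gaussian splitting $h_1(\vct{q}) = \tfrac{1}{2}\log\det{\opfunc{G}_{\mtx{M}}(\vct{q})}$ depends on $\vct{q}$ through the Jacobian $\jacob\vctfunc{c}$, so its gradient involves second-order derivative information about $\vctfunc{c}$. Rather than differentiating the log-determinant formula by hand, I would invoke the cheap-gradient principle of reverse-mode automatic differentiation \citep{griewank2008evaluating}: treating the entire program mapping $\vct{q}$ to the scalar $h_1(\vct{q})$ --- the inner adjoint pass producing $\jacob\vctfunc{c}$, the dense product forming $\opfunc{G}_{\mtx{M}}$, and its Cholesky-based log-determinant --- as a single computational graph, its reverse-mode gradient is obtained at a constant-factor overhead over the forward cost. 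Since that forward cost is itself $\bigo(\intg{S}\intg{T}^3)$ (again dominated by the assembly of $\opfunc{G}_{\mtx{M}}$), so is $\grad h_1$, and the overall $\bigo(\intg{S}\intg{T}^3)$ bound follows.
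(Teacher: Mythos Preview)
Your proposal is correct and follows essentially the same approach as the paper: an operation count over the primitive linear-algebra and automatic-differentiation steps making up one constrained integrator step, with $\intg{Q}=\bigo(\intg{S}\intg{T})$, $\intg{C}=\bigo(\intg{T})$, the $\bigo(\intg{S}\intg{T}^3)$ assembly of the $\intg{C}\times\intg{C}$ Gram-type matrices as the bottleneck, and the cheap-gradient principle handling $\grad h_1$. The paper's proof is organised around an explicit line-by-line annotated version of the step (their \cref{alg:constrained-step-no-conditioning}) and singles out the Newton matrix $\mtx{H}_{\intg{j}}=\jacob\vctfunc{c}(\vct{q}_{\intg{j}})(\opfunc{\Gamma}^{q,p}_t)^{-1}\jacob\vctfunc{c}(\vct{q})\tr$ rather than $\opfunc{G}_{\mtx{M}}$ itself as the dominant line, but these are the same $\bigo(\intg{S}\intg{T}^3)$ primitive and the argument is otherwise identical.
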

A proof is given in \appcref{app:comp-cost-without-conditioning}. The cost of \cref{alg:constrained-hmc} when applied directly to the posterior distribution with density in \eqref{eq:manifold-density} therefore scales linearly with the number of discrete time steps per observation $\intg{S}$ but cubically with the number of observation times $\intg{T}$.

\subsection{Exploiting Markovianity for scalability}
\label{subsec:exploiting-markovianity}

While we have so far considered only sampling from the posterior distribution on latent variables $(\rvct{u}, \rvct{v}_0, \rvct{v}_{\range{\intg{S}\intg{T}}}, \rvct{w}_{\range{\intg{T}}})$ given observations $\rvct{y}_{\range{\intg{T}}}$, the constrained \ac{hmc} approach we have described can be applied to sampling from any conditional distribution of the joint distribution on observations and latent variables under the generative model, of which the target posterior distribution is just one example.

One way to improve the scaling of the computational cost with respect to the number of observation times is therefore to restrict the information flow through the state sequence $\rvct{x}_{\range{\intg{S}\intg{T}}}$ by conditioning on a set of intermediate states in the sequence. Due to the Markovian nature of the state dynamics, the state sequences $\rvct{x}_{\range[0]{\intg{s}-1}}$ and $\rvct{x}_{\range[\intg{s}+1]{\intg{S}\intg{T}}}$ are conditionally independent given the state $\rvct{x}_{\intg{s}}$ and the parameters $\rvct{z}$ for any $\intg{s} \in \range{\intg{ST}}$. As a consequence under the non-centred parametrisation of the generative model in \cref{alg:non-centred-generative-model}, we have that if we condition on the intermediate state $\rvct{x}_{\intg{S}\intg{t}}$ at the $\intg{t}$\textsuperscript{th} observation time then we can independently generate the observation sequences $\rvct{y}_{\range{\intg{t}}}$ and $\rvct{y}_{\range[\intg{t}+1]{\intg{T}}}$ from respectively $(\rvct{u},\rvct{v}_{\range[0]{\intg{S}\intg{t}}}, \rvct{w}_{\range{\intg{t}}})$ and $(\rvct{u},\rvct{v}_{\range[\intg{S}\intg{t}+1]{\intg{S}\intg{T}}},\rvct{w}_{\range[\intg{t+1}]{\intg{T}}})$.

We can extend this idea to conditioning on multiple intermediate states in the sequence. If at $\intg{B}-1$ observation time indices $\intg{t}_{\range{\intg{B}-1}} \subseteq\range{\intg{T}}$ the full state is conditioned on, $\rvct{x}_{\intg{S}\intg{t}_\intg{b}} = \vct{x}_{\intg{S}\intg{t}_\intg{b}} ~\forall \intg{b} \in \range{\intg{B}-1}$, then we have that the observation subsequence $\smash{\rvct{y}_{\intg{t}_\intg{b-1}+1:\intg{t}_\intg{b}}}$ and conditioned state $\rvct{x}_{\intg{S}\intg{t}_\intg{b}}$ depend only on the latent variables $(\rvct{u},\rvct{v}_{\range[\intg{S}\intg{t}_{\intg{b}-1}+1]{\intg{S}\intg{t}_\intg{b}}},\rvct{w}_{\range[\intg{t}_{\intg{b}-1}+1]{\intg{t}_\intg{b}}})$ for each $\intg{b} \in \range[1]{\intg{B}-1}$ (with $\intg{t}_0 = 0$) and the final observation subsequence $\smash{\rvct{y}_{\intg{t}_\intg{B-1}+1:\intg{T}}}$ depends only on the latent variables  $\smash{(\rvct{u},\rvct{v}_{\range[\intg{S}\intg{t}_{\intg{B}-1}+1]{\intg{S}\intg{T}}},\rvct{w}_{\range[\intg{t}_{\intg{B}-1}+1]{\intg{T}}})}$.
Due to these conditional independencies introduced when conditioning on the values of $(\rvct{x}_{\intg{S}\intg{t}_\intg{b}})_{\intg{b}=1}^{\intg{B}-1}$, we can `split' the generation of the state sequence in to $\intg{B}$ independent calls to a function which given a conditioned state $\vct{x}_{\intg{S}\intg{t}_{\intg{b}-1}}$ generates the subsequence of states for step indices $\range[\intg{S}\intg{t}_{\intg{b}-1}+1]{\intg{S}\intg{t}_\intg{b}}$ and outputs the observations $\smash{\rvct{y}_{\range[\intg{t}_{\intg{b}-1}+1]{\intg{t}_\intg{b}}}}$ and final state $\rvct{x}_{\intg{S}\intg{t}_{\intg{b}}}$ of the subsequence (or just observations $\smash{\rvct{y}_{\range[\intg{t}_{\intg{B}-1}+1]{\intg{T}}}}$ for the final subsequence). For noiseless observations, $\rvct{y}_{\intg{t}_\intg{b}}$ is completely determined by $\rvct{x}_{\intg{S}\intg{t}_{\intg{b}}}$, and so only $\smash{\rvct{y}_{\range[\intg{t}_{\intg{b}-1}+1]{\intg{t}_\intg{b}}-1}}$ and $\rvct{x}_{\intg{S}\intg{t}_{\intg{b}}}$ should be returned for the non-final subsequences. The resulting conditioned generative model is summarised in \cref{alg:conditioned-generative-model}.

\begin{model}[t]
  \caption{Generative model conditioning on intermediate states $(\vct{x}_{\intg{S}\intg{t}_\intg{b}})_{\intg{b}=1}^{\intg{B}-1}$.}
  \label{alg:conditioned-generative-model}
\setlength\multicolsep{0pt}
\begin{multicols}{2}
  \begin{algorithmic}
    \algrenewcommand\algorithmicindent{1.3em}
    \Function{$\vctfunc{g}_{\bar{\rvct{y}}}$}
      {$\vct{u}$, $\vct{v}_0$, $\vct{v}_{\range{\intg{St}}}$, $\vct{w}_{\range{\intg{t}}}$, $\intg{b}$}
      \State $\vct{z} \gets \vctfunc{g}_{\rvct{z}}(\vct{u})$
      \State $\vct{x}_0 \gets$ \IfThenElse{$\intg{b}\equiv 1$}{$\vctfunc{g}_{\rvct{x}_0}\mkern-3mu\left(\vct{z},\vct{v}_0\right)$}{$\vct{v}_0$}
      \State $\vct{x}_{\range{\intg{St}}}$, $\vct{y}_{\range{\intg{t}}} \gets \vctfunc{g}_{\rvct{x}_{:},\rvct{y}_{:}}\mkern-3mu(\vct{z},\vct{x}_0,\vct{v}_{\range{\intg{St}}},\vct{w}_{\range{\intg{t}}})$
      \If{$\intg{b}\neq\intg{B}$}
       \State \Return $[\vct{y}_{\range{\intg{t}}}; \vct{x}_{\intg{St}}]$
      \Else
       \State \Return $[\vct{y}_{\range{\intg{t}}}]$
      \EndIf
    \EndFunction
    \columnbreak
    \State $\rvct{u} \sim \tilde{\mu}$
    \State $\rvct{v}_0 \sim \tilde{\nu}$
    \State $\rvct{v}_{\intg{s}} \sim \nrm(\vct{0},\idmtx_\intg{V}) ~\forall \intg{s} \in \range{\intg{S}\intg{T}}$
    \State $\rvct{w}_{\intg{t}} \sim \eta ~\forall \intg{t} \in \range{\intg{T}}$
    \State $\bar{\rvct{y}}_{1} \gets \vctfunc{g}_{\bar{\rvct{y}}}(\rvct{u}, \rvct{v}_0, \rvct{v}_{\range[1]{\intg{S}\intg{t}_\intg{1}}}, \rvct{w}_{\range{\intg{t}_1}}, 1)$
    \For{$\intg{b} \in \range[2]{\intg{B}}$} \Comment{$\intg{t}_\intg{B} \equiv \intg{T}$}
      \State $\rvct{v}_:, \rvct{w}_: \gets \rvct{v}_{\range[\intg{S}\intg{t}_{\intg{b}-1}+1]{\intg{S}\intg{t}_\intg{b}}},\rvct{w}_{\range[\intg{t}_{\intg{b}-1}+1]{\intg{t}_\intg{b}}}$
      \State $\bar{\rvct{y}}_{\intg{b}} \gets \vctfunc{g}_{\bar{\rvct{y}}}
      (\rvct{u}, \vct{x}_{\intg{S}\intg{t}_{\intg{b}-1}},\rvct{v}_:, \rvct{w}_:,\intg{b})$
    \EndFor
  \end{algorithmic}
\end{multicols}
 \end{model}

Using $\vctfunc{g}_{\bar{\rvct{y}}}$ from \cref{alg:conditioned-generative-model} we can then define \emph{partial} constraint functions%
\begin{align*}
  \vctfunc{c}_1(\vct{u}, [\vct{v}_0; \vct{v}_{\range{\intg{S}\intg{t}_1}}], [\vct{w}_{\range{\intg{t}_\intg{1}}}]) &:=
  \vctfunc{g}_{\bar{\rvct{y}}}(\vct{u}, \vct{v}_0, \vct{v}_{\range{\intg{S}\intg{t}_1}}, \vct{w}_{\range{\intg{t}_\intg{1}}}, 1) -
  \bar{\vct{y}}_1,~\text{and }\forall\,\intg{b}\in\range[2]{\intg{B}}\\
  \vctfunc{c}_\intg{b}(\vct{u}, [\vct{v}_{\range[\intg{S}\intg{t}_{\intg{b}-1}+1]{\intg{S}\intg{t}_\intg{b}}}], [\vct{w}_{\range[\intg{t}_\intg{b-1}+1]{\intg{t}_\intg{b}}}]) &:=
  \vctfunc{g}_{\bar{\rvct{y}}}(\vct{u}, \vct{x}_{\intg{S}\intg{t}_{\intg{b}-1}}, \vct{v}_{\range[\intg{S}\intg{t}_{\intg{b}-1}+1]{\intg{S}\intg{t}_\intg{b}}}, \vct{w}_{\range[\intg{t}_\intg{b-1}+1]{\intg{t}_\intg{b}}}, \intg{b}) -
  \bar{\vct{y}}_\intg{b}
\end{align*}
with $\bar{\vct{y}}_\intg{b} = [\vct{y}_{\intg{t}_\intg{b-1}+1:\intg{t}_\intg{b}}; \vct{x}_{\intg{S}\intg{t}_\intg{b}}] ~\forall\,\intg{b}\in\range{\intg{B}-1}$ and $\bar{\vct{y}}_{\intg{B}} = [\vct{y}_{\range[\intg{t}_{\intg{B}-1}+1]{\intg{T}}}]$. We then define respectively \emph{partitioned} and \emph{full} constraint functions $\bar{\vctfunc{c}} : \reals^{\intg{U}} \times \reals^{\intg{V}_0+\intg{STV}} \times \reals^{\intg{TW}} \to \reals^\intg{C}$ and $\vctfunc{c} : \reals^{\intg{Q}} \to \reals^{\intg{C}}$, with $\intg{C} = (\intg{B} - 1)\intg{X} + \intg{T}\intg{Y}$ the number of constraints, as%
\begin{equation}
  \bar{\vctfunc{c}}\left(
    \vct{u}, [\bar{\vct{v}}_{\range{\intg{B}}}], [\bar{\vct{w}}_{\range{\intg{B}}}]
  \right) :=
  \vctfunc{c}\left([
    \vct{u}; \bar{\vct{v}}_{\range{\intg{B}}}; \bar{\vct{w}}_{\range{\intg{B}}}]
  \right) =
  \left[
    \left(
      \vctfunc{c}_\intg{b}(\vct{u}, \bar{\vct{v}}_\intg{b}, \bar{\vct{w}}_\intg{b})
    \right)_{\intg{b}\in\range{\intg{B}}}
  \right].
\end{equation}
The Jacobian of the full constraint function will then have the block structure \ifclass{statsoc}{\(}{\[}%
  \jacob\vctfunc{c}([\vct{u}; \bar{\vct{v}}; \bar{\vct{w}}]) =
  \begin{bmatrix}
    \jacob_1\bar{\vctfunc{c}}(\vct{u},\bar{\vct{v}},\bar{\vct{w}}) &
    \jacob_2\bar{\vctfunc{c}}(\vct{u},\bar{\vct{v}},\bar{\vct{w}}) &
    \jacob_3\bar{\vctfunc{c}}(\vct{u},\bar{\vct{v}},\bar{\vct{w}})
  \end{bmatrix}
\ifclass{statsoc}{\)}{\]}%
with $\jacob_1\bar{\vctfunc{c}}(\vct{u},\bar{\vct{v}},\bar{\vct{w}})$ a dense $\intg{C} \times \intg{U}$ matrix, and $\jacob_\intg{i}\bar{\vctfunc{c}}(\vct{u}, \bar{\vct{v}}, \bar{\vct{w}})$ for $\intg{i}\in\lbrace 2, 3\rbrace$ block diagonal $\intg{C} \times (\intg{V}_0 + \intg{S}\intg{T}\intg{V})$ ($\intg{i} = 2$) and $\intg{C} \times \intg{TW}$ ($\intg{i} = 3$) matrices with $\jacob_\intg{i}\bar{\vctfunc{c}}(\vct{u}, [\bar{\vct{v}}_{\range{\intg{B}}}], [\bar{\vct{w}}_{\range{\intg{B}}}]) = \diag(\jacob_\intg{i}\vctfunc{c}_{\intg{b}}(\vct{u},\bar{\vct{v}}_{\intg{b}},\bar{\vct{w}}_{\intg{b}}))_{\intg{b}\in\range{\intg{B}}}$.

As $\rvct{u}$, $[\rvct{v}_{\range[0]{\intg{ST}}}]$ and $[\rvct{w}_{\range{\intg{T}}}]$ are independent under the prior $\rho$, under the recommendation in \cref{subsec:metric-choice} the metric matrix is $\mtx{M} = \diag(\mtx{M}_u,\mtx{M}_v,\mtx{M}_w)$ with $\mtx{M}_u$ a $\intg{U}\times\intg{U}$ matrix, $\mtx{M}_v$ a $(\intg{V}_0 + \intg{S}\intg{T}\intg{V})\times(\intg{V}_0 + \intg{S}\intg{T}\intg{V})$ block-diagonal matrix and $\mtx{M}_w$ a $\intg{T}\intg{W}\times\intg{T}\intg{W}$ block-diagonal matrix. The Gram matrix can then be decomposed as
\begin{align}
  \label{eq:conditioned-gram-matrix}
  &\qquad\quad
  \mtx{G}_{\mtx{M}}([\vct{u};\bar{\vct{v}};\bar{\vct{w}}]) =
  \jacob_1\bar{\vctfunc{c}}(\vct{u},\bar{\vct{v}},\bar{\vct{w}})
  \mtx{M}_u^{-1}
  \jacob_1\bar{\vctfunc{c}}(\vct{u},\bar{\vct{v}},\bar{\vct{w}})\tr +
  \opfunc{D}([\vct{u};\bar{\vct{v}};\bar{\vct{w}}]) \text{ with}
  \\
  \nonumber
  &
  \opfunc{D}([\vct{u};\bar{\vct{v}};\bar{\vct{w}}]) := \jacob_2\bar{\vctfunc{c}}(\vct{u},\bar{\vct{v}},\bar{\vct{w}})\mtx{M}_v^{-1}\jacob_2\bar{\vctfunc{c}}(\vct{u},\bar{\vct{v}},\bar{\vct{w}})\tr + \jacob_3\bar{\vctfunc{c}}(\vct{u},\bar{\vct{v}},\bar{\vct{w}})\mtx{M}_w^{-1}\jacob_3\bar{\vctfunc{c}}(\vct{u},\bar{\vct{v}},\bar{\vct{w}})\tr,
\end{align}
corresponding to a rank $\intg{U}$ correction of a block-diagonal matrix $\opfunc{D}([\vct{u};\bar{\vct{v}};\bar{\vct{w}}])$.

Using the matrix determinant lemma we then have that
\begin{equation}
  \label{eq:conditioned-gram-determinant}
  \log\det{
    \mtx{G}_{\mtx{M}}(\vct{q})
  } =
  \log\det{\opfunc{C}(\vct{q})} +
  \log\det{\opfunc{D}(\vct{q})} -
  \log\det{\mtx{M}_u},
\end{equation}
with  $\opfunc{C}([\vct{u};\bar{\vct{v}};\bar{\vct{w}}]) := \mtx{M}_u + \jacob_1\bar{\vctfunc{c}}(\vct{u},\bar{\vct{v}},\bar{\vct{w}})\tr \opfunc{D}([\vct{u};\bar{\vct{v}};\bar{\vct{w}}])^{-1} \jacob_1\bar{\vctfunc{c}}(\vct{u},\bar{\vct{v}},\bar{\vct{w}})$.
Similarly, the Woodbury matrix identity yields, for a vector $\vct{r} \in \reals^{\intg{C}}$ and $\vct{q} = [\vct{u};\bar{\vct{v}};\bar{\vct{w}}]$, that
\begin{equation}\label{eq:conditioned-gram-inverse}
  \opfunc{G}_{\mtx{M}}(\vct{q})^{-1}\vct{r} =
  \opfunc{D}(\vct{q})^{-1}
  (\vct{r} -
  \jacob_1\bar{\vctfunc{c}}(\vct{u},\bar{\vct{v}},\bar{\vct{w}})
  \opfunc{C}(\vct{q})^{-1}\jacob_1\bar{\vctfunc{c}}(\vct{u},\bar{\vct{v}},\bar{\vct{w}})\tr
  \opfunc{D}(\vct{q})^{-1}\vct{r}).
\end{equation}
By applying a sequence of constrained \ac{hmc} Markov kernels, each conditioning on intermediate states $(\rvct{x}_{\intg{S}\intg{t}_\intg{b}})_{\intg{b}=1}^{\intg{B}-1}$ at a different set of observation time indices $\intg{t}_{\range{\intg{B}-1}}$ as well as the observations $\rvct{y}_{\range{\intg{T}}}$ we can construct a \ac{mcmc} method which asymptotically samples from the target posterior distribution at a substantially reduced computational cost compared to the case of conditioning only on the observations $\rvct{y}_{\range{\intg{T}}}$. To analyse the computational cost of applying the constrained \ac{hmc} implementation in \cref{alg:constrained-hmc} to the conditioned generative model, we assume the following.
\begin{assumption}
  \label{ass:comp-cost-conditioning-partition}
  $\intg{T} = \intg{B}\intg{R}$ and $\intg{t}_{\intg{b}} = \intg{b}\intg{R}~\forall\,\intg{b}\in\range{\intg{B}-1}$, i.e. that the observations are split in to $\intg{B}$ equally sized subsequences of $\intg{R}$ observation times.
\end{assumption}
\begin{assumption}
  \label{ass:comp-cost-newton-iterations-conditioned}
  The Newton iteration to solve \eqref{eq:h2-step-position-constraint-condition} converges within $\intg{J}$ iterations for $\intg{J}> 0$ that does not depend on $\intg{R}$, $\intg{S}$ and $\intg{T}$, for fixed $t$, $\theta_c$ and $\theta_q$.
\end{assumption}
In practice we will need to alternate with conditioning on a different set of observation times to allow the Markov chain to be ergodic, e.g. $\intg{t}'_{\intg{b}} = \floor{\frac{(2\intg{b}-1)\intg{R}}{2}}~\forall\,\intg{b}\in\range{\intg{B}}$, with in this case the observation times split in to $\intg{B-1}$ subsequences of $\intg{R}$ observations times and two subsequences of $\floor{\frac{\intg{R}}{2}}$ and $\ceil{\frac{\intg{R}}{2}}$ observation times. This alternative splitting will result in only minor difference in operation cost compared to the equispaced partition hence we consider only the former case in our analysis. \cref{ass:comp-cost-newton-iterations-conditioned} is motivated by our observation that the average number of Newton iterations needed for convergence appears to be independent of $\intg{R}$, $\intg{S}$ and $\intg{T}$.%
\begin{proposition}
  \label{the:comp-cost-with-conditioning}
  Under \cref{ass:comp-cost-conditioning-partition,ass:comp-cost-newton-iterations-conditioned} the computational cost of a single constrained integrator step in \cref{alg:constrained-hmc} when applied to the posterior of the generative model conditioning additionally on the values of the states at observation time indices $\intg{t}_{\range{\intg{B}-1}}$ as in \cref{alg:conditioned-generative-model} is $\bigo(\intg{R}^2\intg{S}\intg{T})$ operations.
\end{proposition}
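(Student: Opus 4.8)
The plan is to bound the cost of each linear-algebra primitive used in a single call to the constrained integrator step and to show that the block-diagonal-plus-low-rank structure exposed in \eqref{eq:conditioned-gram-matrix}, \eqref{eq:conditioned-gram-determinant} and \eqref{eq:conditioned-gram-inverse} collapses the $\bigo(\intg{S}\intg{T}^3)$ bound of \cref{the:comp-cost-without-conditioning} to $\bigo(\intg{R}^2\intg{S}\intg{T})$. First I would itemise the dominant operations in one step: the Newton iterations enforcing the position constraint inside $\opfunc{\Xi}^{h_2}_t$ (each requiring an evaluation of $\jacob\vctfunc{c}$ and a solve of the system in \eqref{eq:newton-iteration}), the $\opfunc{\Xi}^{h_1}_t$ gradient step (needing the log-determinant term $\tfrac12\log\det{\opfunc{G}_{\mtx{M}}}$ and a solve against $\opfunc{G}_{\mtx{M}}$ via the projector $\opfunc{P}_{\mtx{M}}$), and the momentum projections. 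By \cref{ass:comp-cost-newton-iterations-conditioned} the number of Newton iterations is bounded by a constant independent of $\intg{R}$, $\intg{S}$ and $\intg{T}$, so it suffices to bound the cost of a single evaluation of each primitive.

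The key observation is that both the Gram matrix and the system matrix of \eqref{eq:newton-iteration} share the structure of \eqref{eq:conditioned-gram-matrix}: since $\mtx{M}$ is block diagonal compatible with the partition of $(\rvct{u}, \rvct{v}_{\range[0]{\intg{S}\intg{T}}}, \rvct{w}_{\range{\intg{T}}})$ --- and hence, through the eigendecomposition underlying \eqref{eq:gaussian-h2-flow-map}, so is $\opfunc{\Gamma}^{q,p}_t$ --- each is a rank-$\intg{U}$ correction of a block-diagonal matrix $\opfunc{D}$ with $\intg{B}$ blocks. Under \cref{ass:comp-cost-conditioning-partition} the $\intg{b}$\textsuperscript{th} block is the Gram matrix of the partial constraint $\vctfunc{c}_\intg{b}$ with $\vct{u}$ held fixed, which is structurally an \emph{unconditioned} sub-problem over $\intg{R}$ observation times, the conditioned-on terminal state $\vct{x}_{\intg{S}\intg{t}_\intg{b}}$ playing the role of one extra observation. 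The argument behind \cref{the:comp-cost-without-conditioning} therefore applies verbatim at the block level, giving a cost of $\bigo(\intg{S}\intg{R}^3)$ to form, factorise and solve with each block; summing over the $\intg{B} = \intg{T}/\intg{R}$ blocks yields $\bigo(\intg{S}\intg{T}\intg{R}^2)$ for everything involving $\opfunc{D}$.

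It then remains to verify that the rank-$\intg{U}$ correction is subdominant. Because $\intg{U}$ is fixed, forming $\opfunc{C} = \mtx{M}_u + \jacob_1\bar{\vctfunc{c}}\tr \opfunc{D}^{-1} \jacob_1\bar{\vctfunc{c}}$ amounts to applying the already-computed block factorisations of $\opfunc{D}$ to the $\intg{U}$ dense columns of $\jacob_1\bar{\vctfunc{c}}$ --- $\bigo(\intg{R})$ per block, hence $\bigo(\intg{R}\intg{T})$ per column and in total --- followed by an $\bigo(\intg{U}^3)$ factorisation of the $\intg{U}\times\intg{U}$ matrix $\opfunc{C}$. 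The determinant via the matrix-determinant lemma \eqref{eq:conditioned-gram-determinant} and the linear solve via the Woodbury identity \eqref{eq:conditioned-gram-inverse} then both cost $\bigo(\intg{R}\intg{T})$, dominated by the $\opfunc{D}$ term. Finally I would note that the remaining per-step work --- a forward pass of $\vctfunc{g}_{\bar{\rvct{y}}}$ to evaluate $\vctfunc{c}$ and one reverse pass for the $\opfunc{\Xi}^{h_1}_t$ gradient --- is $\bigo(\intg{S}\intg{T})$ and so negligible, giving the claimed $\bigo(\intg{R}^2\intg{S}\intg{T})$.

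The hard part, exactly as in \cref{the:comp-cost-without-conditioning}, is the accounting inside a single block. One must check that the terminal-state constraint, which depends densely on all $\bigo(\intg{S}\intg{R})$ latent noise vectors of the block, does not break the $\bigo(\intg{S}\intg{R}^3)$ budget --- that is, that forming the block Gram entries (the summation over shared latent noise, scaling like $\intg{S}$ times the overlap of the supports of two constraints) and the triangular solves against $\opfunc{\Gamma}^{q,p}_t$ stay within budget once this extra row and column are included; a direct count shows the extra contribution is only $\bigo(\intg{S}\intg{R}^2)$. A secondary point to pin down is that $\opfunc{\Gamma}^{q,p}_t$ is indeed invertible and block-diagonal compatible for the Gaussian splitting, so that the system in \eqref{eq:newton-iteration} genuinely inherits the structure of \eqref{eq:conditioned-gram-matrix}; without this the low-rank reduction would not carry over to the constraint solve.
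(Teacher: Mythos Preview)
Your proposal is correct and follows the same strategy as the paper: exploit the block-diagonal-plus-rank-$\intg{U}$ structure of both the Gram matrix and the Newton system via \eqref{eq:conditioned-gram-determinant}--\eqref{eq:conditioned-gram-inverse}, bound the per-block linear algebra, and verify that the low-rank correction and the $\grad h_1$ evaluation stay within the $\bigo(\intg{R}^2\intg{S}\intg{T})$ budget. The paper presents this as a direct line-by-line operation count (an annotated table paralleling its proof of \cref{the:comp-cost-without-conditioning}) rather than invoking that proposition modularly per block as you do, but the substance is the same; note only that back-substitution against a factorised $\bigo(\intg{R})\times\bigo(\intg{R})$ block costs $\bigo(\intg{R}^2)$, not $\bigo(\intg{R})$, and that the reverse pass for $\grad h_1$ inherits the full $\bigo(\intg{R}^2\intg{S}\intg{T})$ cost of the log-determinant it differentiates through --- both harmless since these terms are already covered or subdominant.
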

A proof is given in \appcref{app:comp-cost-with-conditioning}. If the number of observations per subsequence $\intg{R}$ is kept fixed, the computational cost of each constrained integrator step therefore scales linearly with in both the number of time steps per observation $\intg{S}$ and the number of observation times $\intg{T}$.
\section{Related work}
\label{sec:related-work}

Our approach follows the general framework described in \cite{graham2017asymptotically} for performing inference in generative models where the simulated observations can be computed as a differentiable function of a random vector with a known prior distribution. As in this work, \citet{graham2017asymptotically} suggest using a constrained \ac{hmc} algorithm to target the manifold-supported posterior distribution arising in such a setting, and consider a diffusion model with high-frequency noiseless observations of the full state as an example. In this setting with $\intg{S} = 1$ the constraint Jacobian was observed to have a structure allowing a $\mathcal{O}(\intg{T}^2)$ cost implementation of the operations required for each constrained integrator step.

Here we make several important extensions to the framework of \citet{graham2017asymptotically}, with the scheme proposed in \cref{subsec:exploiting-markovianity} allowing efficient $\mathcal{O}(\intg{ST})$ cost constrained integrator steps irrespective of the observation regime (high- or low-frequency, partial or full, with or without noise) and the use of a constrained integrator based on a Gaussian splitting as proposed in \cref{subsec:constrained-hamiltonian-dynamics-numerical-discretisation} giving improved mixing performance as the time-discretisation is refined ($\intg{S}$ increased). Further by integrating the constrained integrator into an adaptive \ac{hmc} algorithm \citep{hoffman2014no} we eliminate the need to tune the integrator step size and number of integrator steps per trajectory, giving a more automated inference procedure.

The non-centred parametrisation of the diffusion generative model described in \cref{subsec:non-centred-parametrisation} has similarities to the \emph{innovation scheme} of \citet{chib2004likelihood}, and its later extension in \citet{golightly2008bayesian}, which recognises for the specific case of an Euler--Maruyama discretisation, that the state sequence $\rvct{x}_{\range{\intg{ST}}}$ can be computed as a function of the model parameter $\rvct{z}$, initial state $\rvct{x}_0$ and increments of the driving Brownian motion process. This relationship can be inverted to compute the increments given $\rvct{x}_{\range[0]{\intg{ST}}}$ and $\rvct{z}$.  By performing a Metropolis-within-Gibbs update to $\rvct{z}$ conditioning on the increments and observations, the degeneracy in the conditional distribution of parameters of the drift coefficient when instead conditioning on $\rvct{x}_{\range{\intg{ST}}}$ as $\intg{S} \to \infty$ is avoided, thus producing an algorithm respecting the Roberts--Stramer critique. Our approach generalizes this idea beyond the Euler--Maruyama case by allowing for a generic forward operator $\vctfunc{f}_\delta$, and jointly updated all latent variables under this reparametrisation rather than using it to only update the parameters.

The conditioning scheme proposed in \cref{subsec:exploiting-markovianity} is similar in spirit to \emph{blocking} schemes proposed previously in \ac{mcmc} methods for inference in partially-observed time series models, see e.g. \citet{shephard1997likelihood,golightly2008bayesian,mider2020computational}, however, the implementation and motivation of the approach here both differ. In the blocking schemes, conditioning on intermediate states introduces conditional independencies allowing proposing updates to blocks of the latent path given fixed parameters in a Metropolis-within-Gibbs type update, with a separate update to the parameters. Here we jointly update the parameters and latent path, and use the conditioning to induce structure in the constraint Jacobian which can be used to reduce the cost of the constrained integrator.

Hypoelliptic diffusions have a rank-deficient diffusion matrix $\opfunc{B}(\rvct{x},\rvct{z})\opfunc{B}(\rvct{x},\rvct{z})\tr$, but still have transition kernels $\kappa_\tau$ with smooth densities with respect to the Lebesgue measure due to the propagation of noise to all state components via the drift function. Prior work on the calibration of such models has often adopted a maximum likelihood approach, in the setting of high-frequency observations, see e.g.~\citet{ditlevsen2019hypoelliptic} and the references therein. The singularity of the Wiener noise increment covariance matrix when discretising using an Euler--Maruyama scheme can be avoided via the use of a higher-order discretisation scheme: \citet{ditlevsen2019hypoelliptic} use a strong order 1.5 Taylor scheme to obtain consistency in the estimation of parameters in both the drift and diffusion coefficient functions.

In terms of our criteria, in Remark \ref{rem:criteria}, to our knowledge there is currently no alternative algorithm that satisfies them all for noiselessly observed hypoelliptic systems. The \emph{guided proposals} framework \citep{meulen2018bayesian,bierkens2018simulation,meulen2020continuous} comes close, as it allows for Bayesian inference in both elliptic and hypoelliptic systems, fully or partially observed with noise or with noiseless observations and a linear observation function $\vcthbar(\vct{x}) = \mtx{L}\vct{x}$, and respects the Roberts--Stramer critique. The approach however does not allow for non-linear noiseless observations, and the methodology requires choosing a tractable auxiliary process used to construct the proposed updates to the latent path given observations and parameters, with the original and auxiliary processes needing to satisfy \emph{matching conditions} on their drift and diffusion coefficients, which can be non-trivial --- for example, when the diffusion coefficient is state dependent --- hindering the automation of the methodology. In contrast, our method can be applied without change to both hypoelliptic and elliptic diffusions.

A long line of previous work has considered \ac{mcmc} methods for performing inference in distributions on non-Euclidean spaces, particularly prominent being the influential paper \cite{girolami2011riemann} where the latent space is equipped with a user-defined Riemannian metric which facilitates local rescaling of the posterior distribution across different directions. Related algorithms have also been proposed based for finite-dimensional projections of distributions with densities with respect to Gaussian measures on Hilbert spaces \citep{beskos2014stable,beskos2017geometric}.

In our case the manifold structure arises directly from the observational constraints placed on the latent space of a generative model and the manifold is extrinsically defined by its embedding in an ambient latent space. Rather than the non-trivial task of selecting a positive-definite matrix valued function to define a Riemannian metric on the latent space, our method only requires the user to choose a matrix representing the fixed metric on the ambient space. As discussed in \cref{subsec:metric-choice} we find the prior precision matrix to be a good default in practice.
\section{Numerical examples}
\label{sec:numerical-experiments}

To demonstrate the flexibility and efficiency of our proposed approach we now present the results of numerical experiments in a range of different settings: hypoelliptic and elliptic systems, simulated and real data, noiseless and noisy observations. In all cases we use the same methodology, as described in the preceding sections, for performing inference, and where possible we compare to alternative approaches.

\subsection{FitzHugh--Nagumo model with noiseless observations}
\label{subsec:fhn-noiseless-experiments}

As a first example we consider a stochastic-variant of the FitzHugh--Nagumo model \citep{fitzhugh1961impulses,nagumo1962active}, a simplified description of the dynamics of action potential generation within an neuronal axon. Following \citet{ditlevsen2019hypoelliptic} we formulate the model as a $\intg{X} = 2$ dimensional hypoelliptic diffusion process $\rvct{x}$ with drift function $\vctfunc{a}(\vct{x},\vct{z}) = [\frac{1}{\epsilon}(x_1 - x_1^3 - x_2); \, \gamma x_1 - x_2 + \beta]$ and diffusion coefficient operator $\opfunc{B}(\vct{x},\vct{z}) = [0; \sigma]$ where the components of the $\intg{Z}= 4$ dimensional parameter vector are $\rvct{z} = [\sigma; \epsilon; \gamma; \beta]$ and the Wiener process $\rvar{b}$ has dimension $\intg{B} = 1$. We initially assume the $\intg{Y} = 1$ dimensional observations $\rvar{y}_{\range{\intg{T}}}$ correspond to noiseless observation of the first state component i.e.~$\vcthbar(\vct{x}) = x_1 $, with an inter-observation time interval $\upDelta = \frac{1}{5}$. Further details of the discretisation and priors used are given in the \appcref{app:fitzhugh-nagumo-model-details}.

To measure sampling efficiency in the experiments we use two complementary metrics: the average computation time per constrained integrator step $\hat{\tau}_{\textrm{step}}$ and the estimated computation time per effective sample $\hat{\tau}_{\textrm{eff}}$, i.e.~the total chain computation time divided by the estimated \ac{ess} for the chain for each parameter. \cref{the:comp-cost-with-conditioning} closely relates to $\hat{\tau}_{\textrm{step}}$, and so by estimating how this quantity varies with $\intg{R}$, $\intg{S}$ and $\intg{T}$ we can empirically test whether the proposed scaling holds in practice. While our analysis only considers the computational cost of the constrained integrator, ultimately we are interested in the overall sampling efficiency, which also depends on the mixing performance of the chains; by measuring $\hat{\tau}_{\textrm{eff}}$ we therefore also gain insight into how our approach performs on this metric. In order to empirically verify that Assumption \ref{ass:comp-cost-newton-iterations-conditioned} holds in practice we additionally record the average number of Newton iterations per constrained integrator step (averaged over both forward and time-reversed $\opfunc{\Xi}^{h_2}$ calls) which we denote $\bar{\intg{n}}$.

The \ac{ess} estimates were computed using the Python package \emph{ArviZ} \citep{arviz2019} using the rank-normalisation approach proposed by \citet{vehtari2019rank}. The chain computation times were measured by counting the calls of the key expensive operations in \cref{alg:constrained-hmc} and separately timing the execution of these operations - details are given in \appcref{app:chain-computation-times}. The Python package \emph{JAX} \citep{jax2018github} was used to allow automatic computation of the derivatives of model functions  and all plots were produced using the Python package \emph{Matplotlib} \citep{hunter2007matplotlib}.

For all experiments we use chains which alternate between Markov transitions which condition on the states at observation time indices
\(
  \lbrace
      \intg{t}_{\intg{b}} : \intg{b}\intg{R}
       ~~\forall\,\intg{b} \in\range{\intg{B}}
  \rbrace
\) and %
\(
  \lbrace
      \intg{t}_{\intg{b}} : \floor{\textstyle\frac{(2\intg{b}-1)\intg{R}}{2}}
       ~~\forall\,\intg{b} \in\range{\intg{B}}
  \rbrace \cup \lbrace \intg{T} \rbrace
\) %
with $\intg{B} = \intg{T} / \intg{R}$. For the experiments in this subsection we ran all chains with constrained integrators using both the Gaussian and St\"ormer--Verlet splittings to allow comparison of their relative performance. We use the parameter values $\sigma = 0.3$, $\epsilon = 0.1$, $\gamma =  1.5$ and $\beta = 0.8$ and initial state $\vct{x}_0 = [-0.5; 0.2]$ to generate the simulated data $\vct{y}_{\range{\intg{T}}}$ for all experiments.

To allow measuring how performance of our approach varies with $\intg{R}$, $\intg{S}$ and $\intg{T}$, we ran experiments over a grid values for each of these parameters with the other two kept fixed, specifically:
  $\intg{R} \in \lbrace 2, 5, 10, 20, 50, 100 \rbrace$ with $\intg{S} = 25$ and $\intg{T} = 100$,
  $\intg{S} \in \lbrace 25, 50, 100, 200, 400 \rbrace$ with $\intg{R} = 5$ and $\intg{T} = 100$,
  $\intg{T} \in \lbrace 25, 50, 100, 200, 400 \rbrace$ with $\intg{R} = 5$ and $\intg{S} = 25$.
For all $(\intg{R},\intg{S},\intg{T})$ values and splittings tested we ran three sets of four chains of 1250 iterations each with independent initialisations (details of the initializations are given in \appcref{app:chain-initialisation}), with the first 250 iterations of each set of four chains an adaptive warm-up phase used to tune the integrator step-size $t$, with the samples from these warm-up iterations omitted from the \ac{ess} estimates but included in the computation time estimates. For all sets of chains the split-$\hat{R}$ convergence diagnostic values computed from the (non-warm-up iterations of the) four chains for all parameter values using rank-normalisation and folding were less than $1.01$ as recommended in \citet{vehtari2019rank}. A dynamic \ac{hmc} implementation \citep{betancourt2017conceptual} was used to set the number of integrator steps per trajectory in each transition. A summary of all the algorithmic parameter values used in the numerical experiments is given in \appcref{app:algorithmic-parameters}.

\begin{figure}[t]
  \includegraphics[width=\textwidth]{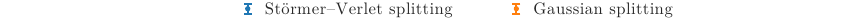}
  \includegraphics[width=\textwidth]{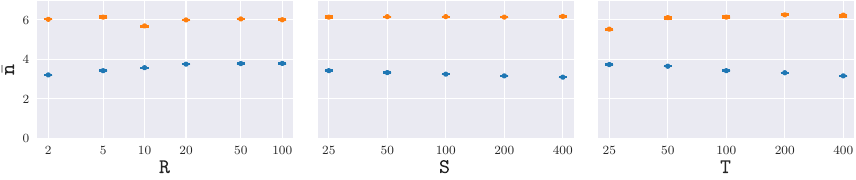}
  \\
  \includegraphics[width=\textwidth]{figures/fhn-noiseless-chmc-splitting-legend}
  \includegraphics[width=\textwidth]{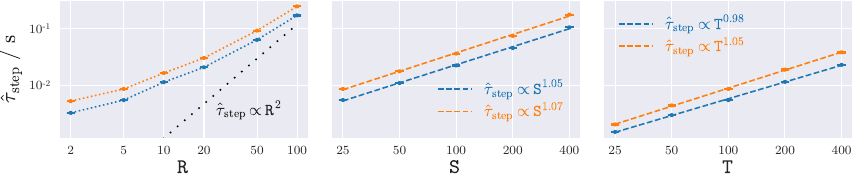}
  \caption{\emph{FitzHugh--Nagumo model (noiseless observations)}: Average number of Newton iterations per integrator step $\bar{\intg{n}}$ (top) and computation time per integrator step $\hat{\tau}_{\textrm{step}}$ in seconds (bottom) for varying $\intg{R}$, $\intg{S}$ and $\intg{T}$. The markers show the minimum, median and maximum across 3 independent runs. The dashed lines in the bottom right two plots show a log-domain least squares fit to the medians for each splitting.}
  \label{fig:fhn-noiseless-av-newton-iterations-and-computation-time-per-integrator-step}
\end{figure}

The top panels in \cref{fig:fhn-noiseless-av-newton-iterations-and-computation-time-per-integrator-step} show how the number of Newton iterations required to solve \eqref{eq:h2-step-position-constraint-condition} in each of the forward and reverse $\opfunc{\Xi}^{h_2}$ steps, averaged across the chains, varies with $\intg{R}$, $\intg{S}$ and $\intg{T}$ and for the two different Hamiltonian splittings. We see that for a given splitting, the number of Newton iterations is close to constant in all cases, providing empirical support for \cref{ass:comp-cost-newton-iterations,ass:comp-cost-newton-iterations-conditioned}.
The bottom panels in \cref{fig:fhn-noiseless-av-newton-iterations-and-computation-time-per-integrator-step} instead show the average time per integrator step $\hat{\tau}_{\textrm{step}}$ varies with $\intg{R}$, $\intg{S}$ and $\intg{T}$ for both splittings. We see that the log-domain least-square fits show a very close to linear scaling of $\hat{\tau}_{\textrm{step}}$ with both $\intg{S}$ and $\intg{T}$, verifying these aspects of the $\mathcal{O}(\intg{R}^2\intg{ST})$ scaling claimed in \cref{the:comp-cost-with-conditioning}. The growth of $\hat{\tau}_{\textrm{step}}$ with $\intg{R}$ over the range here is sub-quadratic (the dotted line shows a quadratic trend for reference), however there is visible acceleration in the growth. An inspection of a breakdown of the total computation time spent on different individual operations revealed that for smaller $\intg{R}$ the $\mathcal{O}(\intg{RST})$ computation of the constraint Jacobian is dominating, with the $\mathcal{O}(\intg{R}^2\intg{ST})$ linear algebra operations only becoming significant for larger $\intg{R}$.

\begin{figure}[t]
  \begin{subfigure}[b]{\linewidth}
    \centering
    \includegraphics[width=\textwidth]{figures/fhn-noiseless-chmc-splitting-legend}
    \includegraphics[width=\textwidth]{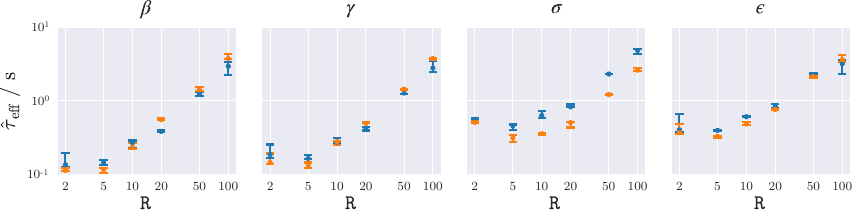}
    \label{sfig:computation-time-per-ess-vs-num-obs-per-block}
  \end{subfigure}%
  \\[-2ex]
  \begin{subfigure}[b]{\linewidth}
    \centering
    \includegraphics[width=\textwidth]{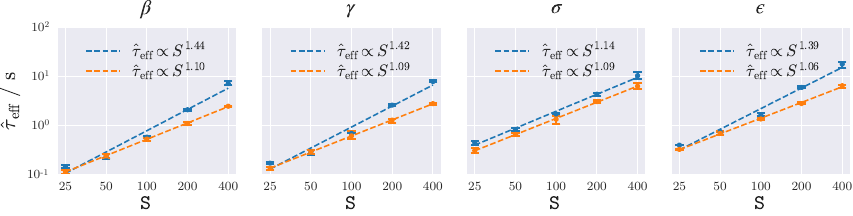}
    \label{sfig:computation-time-per-ess-vs-num-steps-per-obs}
  \end{subfigure}%
  \\[-2ex]
  \begin{subfigure}[b]{\linewidth}
    \centering
    \includegraphics[width=\textwidth]{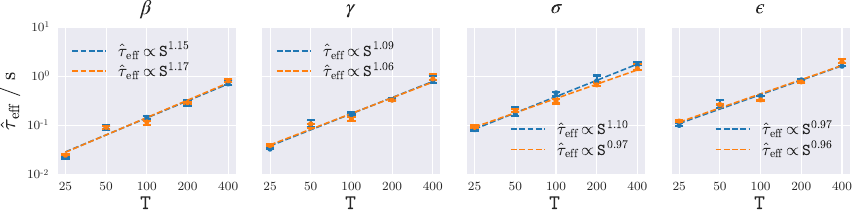}
  \end{subfigure}
  \caption{\emph{FitzHugh--Nagumo model (noiseless observations):} Computation time per effective sample $\hat{\tau}_{\mathrm{eff}}$ in seconds for varying $\intg{R}$, $\intg{S}$ and $\intg{T}$ for each model parameter, in all cases on a log-log scale. The markers shows the minimum, median and maximum across 3 independent runs. The dashed lines in the plots in bottom two rows show log-domain least-square fits to the medians for each splitting.
  }
  \label{fig:fhn-noiseless-computation-time-per-ess}
\end{figure}

\cref{fig:fhn-noiseless-computation-time-per-ess} shows how the estimated computation time per effective sample $\hat{\tau}_{\mathrm{eff}}$ varies with each of $\intg{R}$, $\intg{S}$ and $\intg{T}$, for each of the four model parameters and for each of the two Hamiltonian splittings. First considering the results for varying number of observations per subsequence $\intg{R}$ we see the efficiency is maximised ($\hat{\tau}_{\mathrm{eff}}$ minimised) for both splittings for an intermediate value of $\intg{R} \approx 5$, with a small drop-off in efficiency for $\intg{R} = 2$ and a larger decrease in efficiency as $\intg{R}$ is increased beyond 5. This reflects the competing effects of the reduced cost of each constrained integrator step as $\intg{R}$ is made smaller versus the reduced chain mixing performance in each transition for smaller $\intg{R}$ due to the extra states being (artificially) conditioned on. Importantly we see however that the latter effect is less significant (in this model at least), meaning that performance is still close to optimal for $\intg{R} = 2$, suggesting performance will not be too adversely effected if a too small $\intg{R}$ value is chosen.

Now turning our attention to the plots of $\hat{\tau}_{\mathrm{eff}}$ versus the number of discrete time steps per inter-observation interval $\intg{S}$, we see that there is a clear difference in the scaling of $\hat{\tau}_{\mathrm{eff}}$ with $\intg{S}$ for the two Hamiltonian splittings, with the Gaussian splitting giving a only slightly above linear scaling across all four parameters with exponents in the range 1.06--1.10 compared to 1.14--1.44 for the St\"ormer--Verlet splitting. Inspection of the integrator step sizes $t$ (not shown), which were adaptively tuned in a warm-up phase to control the average acceptance statistic for the chains to be fixed, reveals that for the Gaussian splitting the step size $t$ is in the range $t = 0.29 \pm 0.01$ for all $\intg{S}$ while for the St\"ormer--Verlet splitting shows a decrease from $t = 0.20$ for $\intg{S} = 25$ to $t = 0.10$ for $\intg{S} = 400$, consistent with results suggesting that the step size of the St\"ormer--Verlet integrator needs to scale with $\intg{Q}^{-\sfrac{1}{4}}$ to maintain a constant accept probability of a static integration time \ac{hmc} algorithm in the unconstrained setting \citep{neal2011mcmc,beskos2013optimal}, compared to a dimension-free dependence of the acceptance probability on $t$ for integrators using the Gaussian splitting in appropriate targets \citep{beskos2011hybrid}. While we have emphasised here the superior performance of the Gaussian splitting, we note that the growth of $\hat{\tau}_{\mathrm{eff}}$ with $\intg{S}$ for both methods is very favourable, and shows our approach is able to remain efficient for fine time-discretisations of the continuous time model.

Finally we consider the bottom row of plots in \cref{fig:fhn-noiseless-computation-time-per-ess}, showing how $\hat{\tau}_{\mathrm{eff}}$ varies with the number of observation times $\intg{T}$ for each model parameter. We see that in this case both splittings give very similar scalings, with a close to linear growth in $\hat{\tau}_{\mathrm{eff}}$ with $\intg{T}$ for all four parameters. The (infinite-dimensional) target posterior being approximated for each $\intg{T}$ value differs here unlike the case for varying $\intg{R}$ and $\intg{S}$), in particular becoming more concentrated as $\intg{T}$ increases. The increase in $\hat{\tau}_{\mathrm{eff}}$ with $\intg{T}$ seems to be largely attributable to the increase in the computational cost of each constrained integrator step with $\intg{T}$, and so the mixing performance of the chains seems to be largely independent of $\intg{T}$. This suggests that the constrained \ac{hmc} algorithm is able to efficiently explore posterior distributions with varying geometries. While here the concentration of the posterior is due to an increasing number of observations, in the following section we will see that our approach is also robust to varying informativeness of the individual observations.

\subsection{FitzHugh--Nagumo model with additive observation noise}
\label{subsec:fhn-noisy-experiments}

\begin{figure}[t]
  \includegraphics[width=\textwidth]{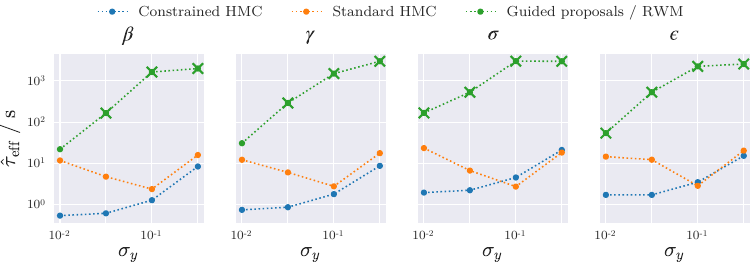}
  \caption{\emph{FitzHugh--Nagumo model (noisy observations)}: Computation time per effective sample $\hat{\tau}_{\mathrm{eff}}$ for varying observation noise standard deviation $\sigma_y$ for each model parameter, in all cases on a log-log scale. Points with cross markers indicate chains with an estimated split-$\hat{R}$ value of greater than $1.01$, indicative of non-convergence.}
  \label{fig:fhn-noisy-computation-time-per-ess}
\end{figure}

As a second example we consider the same hypoelliptic diffusion model as in the preceding section, but now with observations subject to additive Gaussian noise of standard deviation $\sigma_y$, i.e. $\vctfunc{h}(\vct{x},\vct{z},w) = x_1 + \sigma_y w$ and $\eta = \nrm(0, 1)$. The presence of additive observation noise means that the posterior on $(\rvct{u},\rvct{v}_{\range[0]{\intg{ST}}})$ given $\rvct{y}_{\range{\intg{T}}} = \vct{y}_{\range{\intg{T}}}$ has a tractable Lebesgue density. We therefore compare our constrained \ac{hmc} approach to running a standard (unconstrained) \ac{hmc} algorithm targeting the posterior on $(\rvct{u},\rvct{v}_{\range[0]{\intg{ST}}})$ with details of the posterior density and \ac{hmc} algorithm used given in \appcref{app:hmc-additive-noise-models}. As a further baseline we also compare to the approach of \citet{meulen2020continuous}, which uses a Metropolis-within-Gibbs scheme alternating \emph{guided proposal} Metropolis--Hastings updates to the latent path $\rvct{x}_{\range[0]{\intg{ST}}}$ given parameters $\rvct{z}$, with \ac{rwm} updates to the parameters $\rvct{z}$ given the path $\rvct{x}_{\range[0]{\intg{ST}}}$. An application of this approach to the FitzHugh--Nagumo model considered here is described in \citet{meulen2020bayesian}, and we use the Julia code accompanying that article to run the experiments.

We use the same priors and time-discretisation as in the previous section, and fix $\intg{S} = 40$ and $\intg{T} = 100$. Simulated observed sequences $\vct{y}_{\range{\intg{T}}}$ were generated for each of the observation noise variances $\sigma_y^2 \in \lbrace 10^{-4}, 10^{-3}, 10^{-2}, 10^{-1}\rbrace$. In all cases $\vct{y}_{\range{\intg{T}}}$ was generated using the same parameters and initial state as in the previous section and sharing the same values for $\rvct{v}_{\range{\intg{ST}}}$ and $\rvct{w}_{\range{\intg{T}}}$ (sampled from their standard normal priors). Chains targeting the resulting posteriors were run for each $\sigma_y^2$ value and for each of the three \ac{mcmc} methods being considered.

For our constrained \ac{hmc} algorithm we used $\intg{R} = 5$ and ran chains using a constrained integrator based on the St\"ormer--Verlet splitting, with results instead using the Gaussian splitting showing a similar pattern of performance and hence omitted here to avoid duplication. For the standard \ac{hmc} algorithm a diagonal metric matrix representation $\mtx{M}$ was adaptively tuned in the warm-up iterations with this found to uniformly outperform using a fixed identity matrix for all $\sigma_y$ values tested here. For both the standard and constrained \ac{hmc} algorithms we run four chains of 3000 iterations with the first 500 iterations an adaptive warm-up phase used to tune the integrator step-size $t$ (and $\mtx{M}$ for the standard \ac{hmc} case). For the guided proposals / \ac{rwm} case we ran four chains of $3\times 10^5$ iterations, with the first $5 \times 10^4$ iterations an adaptive warm-up phase where the persistence parameter of the guided proposals update to $\rvct{x}_{\range{\intg{ST}}} \gvn \rvct{z}$ and step sizes of the random-walk proposals for the update to $\rvct{z} \gvn \rvct{x}_{\range{\intg{ST}}}$ were adapted as described in \citet{meulen2020continuous}.

Estimated computation time per effective sample $\hat{\tau}_{\textrm{eff}}$ values were calculated for the chains of each of the three \ac{mcmc} methods and each of the observation noise variance values $\sigma_y^2$. The \ac{ess} estimates were calculated as described in the preceding section, however, the true total wall-clock run times were used for the chain computation times here due the difficulty in ensuring a consistent treatment of different \ac{mcmc} algorithms in the approach used in the previous section. To ensure as fair a comparison as possible all chains were run on the same computer and limited to a single processor core to avoid differences due to varying exploitation of parallel computation in the implementations.

\cref{fig:fhn-noisy-computation-time-per-ess} summarises the results. We first note that despite the large number of iterations used for the guided proposals / \ac{rwm} chains ($3\times 10^5$), the per-parameter split-$\hat{R}$ diagnostics \citep{vehtari2019rank} for the chains indicated non-convergence of the chains in nearly all cases, with only the chains for $\sigma_y = 10^{-2}$ appearing to be approaching convergence with $\smash{\hat{R}}$ values in the range [1.01, 1.05] compared to $\smash{\hat{R}}$ values in the range [1.39, 1.87] for $\sigma_y = 10^{-0.5}$. The poor convergence here seems to be at least in part due to the difficulty in finding globally appropriate values of the \ac{rwm} step sizes, with the step sizes still changing significantly in the final iterations of the warm-up phase and the final adapted values differing significantly across chains for the same $\sigma_y$.

Given the poor convergence the estimated \ac{ess} values must be treated with some caution, however even for the $\sigma_y = 10^{-2}$ case where the chains appeared to be closest to convergence the estimated $\hat{\tau}_{\textrm{eff}}$ values for the guided proposals / \ac{rwm} chains are between 30 and 80 times larger than the corresponding values for the constrained \ac{hmc} chains. As Julia implementations of numerical algorithms generally significantly outperform Python equivalents \citep{bezanson2017julia}, the superior sampling performance of the (Python) constrained \ac{hmc} implementation compared to the (Julia) guided proposals / \ac{rwm} implementation here seems unlikely to be just due to differences in the efficiency of the implementations, but rather reflects significantly improved mixing of the joint gradient-informed updates to the latent variables by the constrained \ac{hmc} algorithm, compared to the non-gradient-informed Metropolis-within-Gibbs updates of the guided proposals / \ac{rwm} algorithm.

Comparing now the results for the standard and constrained \ac{hmc} algorithms, we see that while both algorithms perform similarly for larger $\sigma_y$ values (i.e. less informative observations), the constrained \ac{hmc} algorithm provides significantly better sampling efficiency for smaller $\sigma_y$ values. Inspecting the integrator step size $t$ set at the end of the adaptive warm-up phase for each of $\sigma_y$ values reveals that, while for constrained \ac{hmc} all step sizes $t$ fall in the range 0.17--0.18 and so seem invariant to $\sigma_y$, for the standard \ac{hmc} chains, $t$ ranges from $5.1\times 10^{-4}$ for $\sigma_y = 10^{-2}$ to $9.1 \times 10^{-3}$ for $\sigma_y = 10^{-1}$, resulting in a need to take more integrator steps per transition to make moves of the same distance in the latent space and hence a decreasing sampling efficiency as $\sigma_y$ becomes smaller.

The results for $\sigma_y = 10^{-0.5}$ break the trend of increasing $\sigma_y$ leading to increased efficiency for the standard \ac{hmc} chains, with a significant increase in $\hat{\tau}_{\textrm{eff}}$ compared to $\sigma_y = 10^{-1}$. This seems to be due to a roughly halving of the integrator step size $t$ set in the adaptive warm-up phase to $5.2\times10^{-3}$ for $\sigma_y = 10^{-0.5}$, which, combined with the more diffuse posterior for the larger $\sigma_y$ value, led to a significant increase in the average number of integrator steps per transition and so  computational cost per effective sample. A potential explanation for the decrease in the adapted step size is that the more diffuse posterior extends to regions where the posterior density has higher curvature necessitating a smaller step size to control the Hamiltonian error. In contrast, for the constrained \ac{hmc} chains, the Hamiltonian error is controlled with a close to constant step size for all $\sigma_y$, however, there is a drop in efficiency as $\sigma_y$ becomes larger, which seems to be due to the more diffuse posterior requiring a greater number of integrator steps to explore and so higher computational cost per effective sample on average.

\subsection{Susceptible-infected-recovered model with additive observation noise}
\label{subsec:sir-experiments}

As a final example we perform inference in an epidemiological compartmental model given real observations of the time course of the number of infected patients in an influenza outbreak in a boarding school \citep{bmj1978influenza}. Specifically we consider a diffusion approximation of a \ac{sir} model (see, e.g., the derivation in \citet[\S 5.1.3]{fuchs2013inference}), with a time-varying contact rate parameter itself modelled as a diffusion process as proposed in \citet{ryder2018black}, resulting in the following three-dimensional elliptic \ac{sde} system
\begin{equation}\label{eq:sir-sde}
  \begin{bmatrix}
      \dr \rvar{s} \\
      \dr \rvar{i} \\
      \dr \rvar{c} \\
  \end{bmatrix} =
  \begin{bmatrix}
    -N^{-1} \rvar{c} \mathsf{s} \rvar{i} \\
    N^{-1} \rvar{c} \mathsf{s} \rvar{i} - \gamma \mathsf{i} \\
    (\alpha(\beta - \log \rvar{c}) + \frac{\sigma^2}{2})\rvar{c}
  \end{bmatrix}\dr\tau +
  \begin{bmatrix}
    \sqrt{N^{-1}\rvar{c}\rvar{s} \rvar{i}} & 0 & 0\\
    -\sqrt{N^{-1}\rvar{c} \rvar{s} \rvar{i}} & \sqrt{\gamma\rvar{i}} & 0 \\
    0 & 0 & \sigma\rvar{c}
  \end{bmatrix}
  \begin{bmatrix}
    \dr \rvar{w}_1 \\
    \dr \rvar{w}_2 \\
    \dr \rvar{w}_3 \\
\end{bmatrix}
\end{equation}
where $\tau$ is the time in days, $\rvar{s}$  and $\rvar{i}$ are the number of susceptible and infected individuals respectively, $\rvar{c}$ the contact rate, $N$ the population size and $\gamma$ the recovery rate parameter. The \ac{sde} for $\rvar{c}$ arises from $\log\rvar{c}$ following an Ornstein--Uhlenbeck process with reversion rate $\alpha$, long term mean $\beta$ and instantaneous volatility $\sigma$.

As each of $\rvar{s}$, $\rvar{i}$ and $\rvar{c}$ represent positive-valued quantities, the diffusion state is defined to be $\rvct{x} = [\log\rvar{s}; \log\rvar{i};\log\rvar{c}] \in \reals^3$ with drift $\vctfunc{a}$ and diffusion coefficient $\opfunc{B}$ functions derived from the above \acp{sde} via It\^o's lemma. By computing the time-discretisation in this log-transformed space, the positivity of $\rvar{s}$, $\rvar{i}$ and $\rvar{c}$ is enforced and the numerical issues arising when evaluating the square-root terms in the diffusion coefficient for negative $\rvar{s}$, $\rvar{i}$ or $\rvar{c}$ are avoided. The observed data $\vct{y}_{\range{\intg{T}}}$ corresponds to measurements of the number of infected individuals $\rvar{i} = \exp(\rvar{x}_2)$ at daily intervals, i.e. $\Delta = 1$, over a period of $\intg{T} = 14$ days, with the observations assumed to be subject to additive noise of unknown standard deviation $\sigma_y$, i.e. $\rvar{y}_\intg{t} = \exp(\rvar{x}_2(\rvar{t})) + \sigma_y \rvar{w}_\intg{t}$. The $\intg{Z}= 5$ dimensional parameter vector is then $\rvct{z} = [\gamma; \alpha; \beta; \sigma; \sigma_y]$. Details of the priors and discretisation used are given in \appcref{app:sir-model-details}.

\begin{figure}[!t]
  \includegraphics[width=0.31\textwidth]{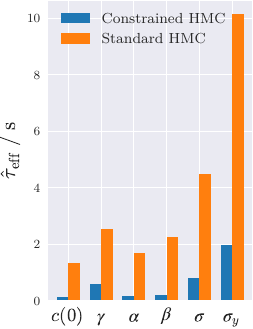}\qquad
  \includegraphics[width=0.62\textwidth]{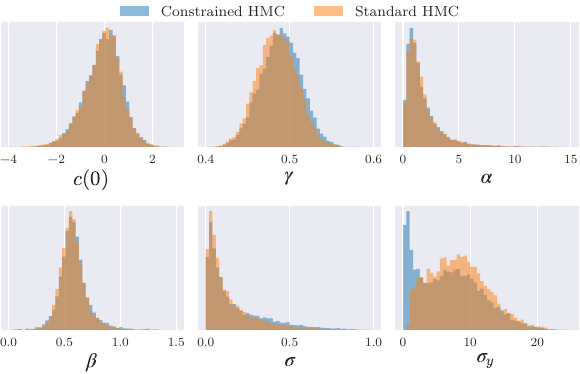}
  \caption{\emph{SIR model}: Computation time per effective sample $\hat{\tau}_{\mathrm{eff}}$ (leftmost panel) and estimated posterior marginals (right panels) for each model parameter computed using constrained and standard \ac{hmc} chain samples.}
  \label{fig:sir-noisy-computation-time-per-ess-and-marginals}
\end{figure}

We compare the performance of our proposed constrained \ac{hmc} approach to a standard \ac{hmc} algorithm, with the noise in the observations meaning that the posterior on $\rvct{u}$ and $\rvct{v}_{\range[0]{\intg{ST}}}$ admits a Lebesgue density. For each algorithm we run four chains of 3000 iterations with the first 500 iterations an adaptive warm-up phase. For our constrained \ac{hmc} algorithm due to the small number of, and high correlations between, the observations we do not introduce any artificial conditioning on intermediate states i.e. $\intg{R} = \intg{T} = 14$. Chains using constrained integrators based on both the St\"ormer--Verlet and Gaussian splitting show very similar performance here so we show only the St\"ormer--Verlet case to avoid duplication. For the standard \ac{hmc} algorithm a diagonal metric matrix representation $\mtx{M}$ was adaptively tuned in the warm-up iterations. The estimated computation time per effective sample $\hat{\tau}_{\textrm{eff}}$ values were calculated using the total wall-clock run times for the chains.

The results are summarised in \cref{fig:sir-noisy-computation-time-per-ess-and-marginals}. The left plot shows the estimated per-parameter $\hat{\tau}_{\textrm{eff}}$ values for each of the two \ac{mcmc} methods: we see that the constrained \ac{hmc} algorithm is able to give significantly improved sampling efficiency over standard \ac{hmc} here. More importantly however it appears that the standard \ac{hmc} algorithm is in fact failing to explore the full posterior. The grid of six plots in the right of \cref{fig:sir-noisy-computation-time-per-ess-and-marginals} shows the estimated posterior marginals for each parameter computed from either the constrained or standard \ac{hmc} chain samples. There is a clear discrepancy in the estimated posterior marginal of $\sigma_y$ between the two methods, with the standard \ac{hmc} chains having many fewer samples at smaller $\sigma_y$ values compared to the constrained \ac{hmc} chains. \appcref{fig:sir-noisy-pair-plots} shows the corresponding estimated pairwise marginals with $\sigma_y$ on a log-scale, with the poorer coverage of small $\sigma_y$ values by the standard \ac{hmc} chains more apparent.

While the per-parameter $\hat{R}$ diagnostics for the standard \ac{hmc} chains are all below 1.01, some hint of the underlying issue being encountered here is given by the high number of iterations in which the integration of the Hamiltonian dynamics diverged for the standard \ac{hmc} chains - roughly 4\% of the non-warm-up iterations for each chain. Such divergences are indicative of the presence of regions of high curvature in the posterior distribution that result in the numerical simulation of the Hamiltonian trajectories becoming unstable, and in some cases may be ameliorated by use of a smaller integrator step size $t$ \citep{betancourt2017conceptual}.

Here specifically the adaptive tuning of the step size $t$ in the warm-up phase has led to a step size which is too large for exploring the regions of the posterior in which $\sigma_y$ is small. Although by setting a higher target acceptance statistics for the step size adaptation algorithm or hand tuning $t$ to a smaller value we could potentially fix this issue, this would be at the cost of an associated decrease in sampling efficiency, leading to even poorer performance relative to the constrained \ac{hmc} chains. As seen in the results for the FitzHugh--Nagumo model in Section \ref{subsec:fhn-noisy-experiments}, if $\sigma_y$ is fixed the integrator step size $t$ for the standard \ac{hmc} algorithm needs to be decreased as $\sigma_y$ is decreased to control the acceptance rate resulting in a higher computation cost per effective sample - \appcref{fig:sir-noisy-computation-time-per-ess} illustrates this directly for the \ac{sir} model. When $\sigma_y$ instead is unknown as here, standard \ac{hmc} needs to use a step size appropriate for the smallest $\sigma_y$ value `typical' under the posterior, which if $\sigma_y$ is poorly informed by the data (as is the case for this model) can require using very small integrator step sizes $t$. In contrast as the constrained \ac{hmc} algorithm is able to use an integrator step size $t$ which is independent of $\sigma_y$, the sampling efficiency of the chains is not limited by the need to use a small step size $t$ to explore regions of the posterior in which $\sigma_y$ is small. %
\section{Conclusions \& further directions}
\label{sec:conclusion}

We have introduced a methodology for calibrating a wide class of diffusion-driven models. Our approach is based on recasting the inferential problem as one of exploring a posterior supported on a manifold, the structure of the latter determined by the observational constraints on the generative model. Once this viewpoint is adopted, available techniques from the literature on constrained  \ac{hmc} can be called upon to allow for effective traversing of the high-dimensional latent space. We have further shown that the Markovian structure of the model can be exploited to design a methodology with computational complexity that scales linearly with both the resolution of the time-discretisation and the number of observation times.

A critical argument put forward via the methodology developed in this work is that practitioners working with \ac{sde} models are now provided  with the option to refer to a \emph{single} and \emph{highly automated}, algorithmic framework for Bayesian calibration of their models. This algorithmic framework employs efficient Hamiltonian dynamics and adheres to all sought out criteria listed in Remark~\ref{rem:criteria}.

When exploring distributions with rapidly varying curvatures, standard \ac{hmc} methods with a fixed step size can yield trajectories that either require too small of a step size (as in the FitzHugh--Nagumo model with noise in Section \ref{subsec:fhn-noisy-experiments}), or become unstable and diverge if the step size is not small enough in areas of high curvature of the posterior on the latent space  (as with our \ac{sir} example in Section~\ref{subsec:sir-experiments} where variations in the scale parameter $\sigma_y$ have strong effect on the curvature). In both cases, particularly strong effects can render standard  \ac{hmc} non-operational (as in the \ac{sir} case). Although the methodology presented in \cite{girolami2011riemann} can in principle be helpful in such contexts, this class of algorithms is intrinsically constructed to induce good performances in the centre of the target distribution as it involves an expectation over the data, and not the \emph{given} data, for the specification of the employed Riemannian metric. Constrained \ac{hmc} dynamics can provide a more appropriate approach for dealing with rapidly varying curvature across the whole of the support of the target distribution. When combined with efficient discretisations of the dynamics --- as in the case of the class of diffusion models we have studied in this work --- they can provide statistically efficient methods.

The viewpoint adopted  in this paper is potentially relevant to a larger class of stochastic models for time series (e.g.~random ordinary differential equations), as well as other Markovian model classes (e.g.~Markov networks). Some of the authors are currently involved in applying such \ac{mcmc} methods to Bayesian inverse problems; manifold structures naturally appear in the low noise regime \citep{beskos2018asymptotic}. In general, we believe that the approach presented in this paper warrants further investigation, with a corresponding study of critical algorithmic aspects, e.g., computational complexity and mixing properties.
\section*{Acknowledgements}

MMG acknowledges support from the Lloyd's Register Foundation programme on data-centric engineering at the Alan Turing Institute. AHT and MMG acknowledge support from the Singapore Ministry of Education Tier 2 (MOE2016-T2-2-135) and a Young Investigator Award Grant (NUSYIA FY16 P16; R-155-000-180-133). AB acknowledges support from the Leverhulme Trust Prize. This research made use of the Rocket High Performance Computing service at Newcastle University, UK.

\appendix

\section{Proof of \cref{the:comp-cost-without-conditioning}}
\label{app:comp-cost-without-conditioning}

We will use the following standard results from \ac{ad}. For more details see for example \citet{griewank1993some} and \citet[Chapter 4]{griewank2008evaluating}.

\begin{lemma}
  \label{lem:ad-comp-cost-jvp-and-vjp}
  For a differentiable function $\vctfunc{f} : \reals^{\intg{M}} \to \reals^\intg{N}$ the \emph{Jacobian-vector product} operator $\textsc{jvp}(\vctfunc{f})(\vct{v})(\vct{x}) := \jacob\vctfunc{f}(\vct{x})\vct{v}$ and the \emph{vector-Jacobian product} operator $\textsc{vjp}(\vctfunc{f})(\vct{v})(\vct{x}) := \vct{v}\tr\jacob\vctfunc{f}(\vct{x})$ can each be evaluated at $\bigo(1)$ times the operation cost of evaluating $\vctfunc{f}(\vct{x})$ using respectively forward- and reverse-mode \ac{ad}.
\end{lemma}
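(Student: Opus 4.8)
The plan is to invoke the standard computational-graph (Wengert list) analysis underlying algorithmic differentiation, following \citet[Chapter 4]{griewank2008evaluating}. First I would fix an evaluation procedure for $\vctfunc{f}$ and decompose it into a sequence of elementary operations $v_\intg{i} = \phi_\intg{i}\big((v_\intg{j})_{\intg{j}\prec\intg{i}}\big)$, where $\prec$ denotes the predecessor (dependency) relation on the evaluation graph, the first $\intg{M}$ nodes hold the inputs $\vct{x}$, and the last $\intg{N}$ nodes yield $\vctfunc{f}(\vct{x})$. Each $\phi_\intg{i}$ is an arithmetic operation or an elementary library function with bounded fan-in, and I would work under the usual cost model in which the local partial derivatives $\partial\phi_\intg{i}/\partial v_\intg{j}$ are computable at $\bigo(1)$ times the cost of evaluating $\phi_\intg{i}$ itself. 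Let $W$ denote the total operation count of this procedure, so that evaluating $\vctfunc{f}(\vct{x})$ costs $\Theta(W)$.

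For the forward-mode (\textsc{jvp}) claim I would carry a tangent $\dot{v}_\intg{i}$ alongside each primal $v_\intg{i}$, seed the input tangents with $\vct{v}$, and propagate by the chain rule $\dot{v}_\intg{i} = \sum_{\intg{j}\prec\intg{i}} (\partial\phi_\intg{i}/\partial v_\intg{j})\,\dot{v}_\intg{j}$ in the same order as the primal sweep; the output tangents then equal $\jacob\vctfunc{f}(\vct{x})\vct{v}$. Since each node contributes a number of local-derivative and multiply--add operations proportional to its fan-in, and the fan-ins are bounded and sum to $\Theta(W)$, the tangent sweep costs $\bigo(W)$, giving the claimed $\bigo(1)$ relative overhead.

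For the reverse-mode (\textsc{vjp}) claim I would first run the primal sweep (recording the trace), then propagate adjoints $\bar{v}_\intg{j}$ in reverse order via $\bar{v}_\intg{j} = \sum_{\intg{i}:\,\intg{j}\prec\intg{i}} \bar{v}_\intg{i}\,(\partial\phi_\intg{i}/\partial v_\intg{j})$, seeding the output adjoints with $\vct{v}$; the input adjoints then equal $\vct{v}\tr\jacob\vctfunc{f}(\vct{x})$. The identical fan-in accounting shows the reverse sweep also costs $\bigo(W)$, so here too the overhead relative to evaluating $\vctfunc{f}$ is $\bigo(1)$.

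The main obstacle is really one of modelling rather than mathematics: one must pin down the class of admissible elementary operations and the assumption that their local derivatives are computable at $\bigo(1)$ relative cost, and — for reverse mode — observe that the bound concerns \emph{operation count}, the storage of the trace being a separate memory consideration that does not affect the stated arithmetic-cost claim. Correctness of both sweeps is merely the multivariate chain rule applied along the graph; the substance of the lemma lies entirely in the cost bound, which is the classical cheap Jacobian-vector / vector-Jacobian product principle.
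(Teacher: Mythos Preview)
Your proposal is correct and follows the standard computational-graph argument from \citet[Chapter~4]{griewank2008evaluating}. The paper does not actually prove this lemma: it simply states it as a standard result from algorithmic differentiation and refers the reader to \citet{griewank1993some} and \citet[Chapter~4]{griewank2008evaluating}, which is exactly the source your sketch unpacks.
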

\begin{corollary}
  \label{cor:ad-comp-cost-jacobian}
  For a differentiable function $\vctfunc{f} : \reals^{\intg{M}} \to \reals^\intg{N}$ the full Jacobian $\jacob\vctfunc{f}(\vct{x}) = [(\textsc{jvp}(\vctfunc{f})(\vct{e}_{\intg{m}}^{\intg{M}})(\vct{x})\tr)_{\intg{m}=1}^{\intg{M}}]\tr = [(\textsc{vjp}(\vctfunc{f})(\vct{e}_{\intg{n}}^{\intg{N}})(\vct{x}))_{\intg{n}=1}^{\intg{N}}]$ ($\vct{e}^\intg{K}_{\range{\intg{K}}}$ are the standard basis vectors of $\reals^{\intg{K}}$) can be evaluated at $\bigo(\min(\intg{M},\intg{N}))$ times the cost of evaluating $\vctfunc{f}(\vct{x})$.
\end{corollary}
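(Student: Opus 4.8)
The plan is to obtain the two matrix identities stated in the corollary by a column-wise and a row-wise decomposition of $\jacob\vctfunc{f}(\vct{x})$, and then to invoke \cref{lem:ad-comp-cost-jvp-and-vjp} once per column (forward route) and once per row (reverse route), finally taking whichever route is cheaper. There is essentially no hard part here: the result is a direct bookkeeping consequence of the lemma, and the only point needing care is confirming that the standard-basis JVPs and VJPs really do pick out the columns and rows of the Jacobian, so that the quantities fed into the lemma are exactly the ones appearing in the claimed identities.

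First I would recall that, for $\jacob\vctfunc{f}(\vct{x}) \in \reals^{\intg{N}\times\intg{M}}$, its $\intg{m}$th column is $\jacob\vctfunc{f}(\vct{x})\vct{e}_{\intg{m}}^{\intg{M}} = \textsc{jvp}(\vctfunc{f})(\vct{e}_{\intg{m}}^{\intg{M}})(\vct{x})$ and its $\intg{n}$th row is $(\vct{e}_{\intg{n}}^{\intg{N}})\tr\jacob\vctfunc{f}(\vct{x}) = \textsc{vjp}(\vctfunc{f})(\vct{e}_{\intg{n}}^{\intg{N}})(\vct{x})$, directly from the definitions of the two operators. Stacking the $\intg{M}$ columns (after transposition) reproduces the first displayed expression for $\jacob\vctfunc{f}(\vct{x})$, and stacking the $\intg{N}$ rows reproduces the second; this establishes both identities and shows the full Jacobian can be assembled either from its $\intg{M}$ columns or from its $\intg{N}$ rows.

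Next I would bound each route. For the forward route, \cref{lem:ad-comp-cost-jvp-and-vjp} guarantees that each of the $\intg{M}$ evaluations $\textsc{jvp}(\vctfunc{f})(\vct{e}_{\intg{m}}^{\intg{M}})(\vct{x})$, for $\intg{m}\in\range{\intg{M}}$, costs $\bigo(1)$ times the cost of a single evaluation of $\vctfunc{f}(\vct{x})$; summing over $\intg{m}$ yields a total of $\bigo(\intg{M})$ evaluation-costs. Symmetrically, the reverse route assembles the Jacobian from the $\intg{N}$ evaluations $\textsc{vjp}(\vctfunc{f})(\vct{e}_{\intg{n}}^{\intg{N}})(\vct{x})$ at a total cost of $\bigo(\intg{N})$ evaluation-costs. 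Since both routes produce the same matrix $\jacob\vctfunc{f}(\vct{x})$, one is free to choose whichever is cheaper, giving the claimed bound of $\bigo(\min(\intg{M},\intg{N}))$ times the cost of evaluating $\vctfunc{f}(\vct{x})$ and completing the proof.
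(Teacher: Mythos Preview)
Your proposal is correct and takes essentially the same approach as the paper, which treats the corollary as an immediate consequence of \cref{lem:ad-comp-cost-jvp-and-vjp} without providing a separate proof. The column-wise and row-wise decompositions via standard-basis JVPs and VJPs, followed by choosing the cheaper of the two routes, is exactly the intended argument.
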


For the purposes of analysing the cost of a single constrained integrator step in \cref{alg:constrained-hmc} we will ignore the operations associated with the reversibility check: as the additional operations are a repeat of a subset of the `forward' step, they will only alter the complexity by a constant factor. As our interest is in the scaling of the cost with number of observation times $\intg{T}$ and number of time steps per interobservation interval $\intg{S}$ we will express the complexity of the operations making up a step in terms of only these two dimensions using \emph{Big O} notation $\bigo(\cdot)$, with the other problem dimensions assumed fixed. \cref{alg:constrained-step-no-conditioning} shows a detailed breakdown of the operations involved in a single constrained integrator step, annotated with the computational complexity of each line in terms of $\intg{S}$ and $\intg{T}$ and the maximum number of times each line will be called (relevant for lines within loops). Below we justify the stated complexities.

\stepcounter{algorithm}
\begin{algorithm}[!t]
\caption{Annotated constrained integrator step (no blocking / conditioning)}
\label{alg:constrained-step-no-conditioning}
\begin{algorithmic}
      \Require (current state and cached quantities)\\
        $\vct{q}$ : current position, $\vct{p}$ : current momentum,
        $\vct{g} = \grad h_1(\vct{q})$,
        $\mtx{L} = \chol(\opfunc{G}_\mtx{M}(\vct{q}))$.
      \Ensure (next state and cached quantities)\\
      $\bar{\vct{q}}$ : next position, $\bar{\vct{p}}$ : next momentum,
      $\bar{\vct{g}} = \grad h_1(\bar{\vct{q}})$,
      $\bar{\mtx{L}} = \chol(\opfunc{G}_\mtx{M}(\bar{\vct{q}}))$.
    \end{algorithmic}
    \hrule
  \begin{tabularx}{\textwidth}{@{}cXcc@{}}
  &  & Complexity $\bigo(\cdot)$ & Max. calls \\
  \rownumber \label{ln:vector-addition-scalar-mult-1} & $\tilde{\vct{p}} \gets \vct{p} - \tfrac{t}{2} \vct{g}$ & $\intg{ST}$ & 1 \\
  \rownumber \label{ln:multiply-metric-inv-1} & $\tilde{\vct{\nu}} \gets \mtx{M}^{-1}\tilde{\vct{p}}$ & $\intg{ST}$ & 1 \\
  \rownumber \label{ln:constraint-jvp-1} & $\tilde{\vct{\omega}} \gets \jacob\vctfunc{c}(\vct{q})\tilde{\vct{\nu}}$ & $\intg{ST}$ & 1 \\
  \rownumber \label{ln:cholesky-solve-1} & $\vct{\lambda}_1 \gets \mtx{L}^{-\mathsf{T}}(\mtx{L}^{-1}\tilde{\vct{\omega}})$ & $\intg{T^2}$ & 1 \\
  \rownumber & \textbf{for} $\intg{j} \in \range{\intg{J}}$ & & \\
    \rownumber \label{ln:constraint-vjp-1} & \quad $\vct{p}' \gets \tilde{\vct{p}} - \jacob\vctfunc{c}(\vct{q})\tr\vct{\lambda}_{\intg{j}}$ & $\intg{ST}$ & $\intg{J}$ \\
    \rownumber \label{ln:h2-update-1} & \quad $\vct{q}_{\intg{j}} \gets \opfunc{\Gamma}_t^{q,q}\vct{q} + \opfunc{\Gamma}_t^{q,p}\vct{p}'$ & $\intg{ST}$ & $\intg{J}$ \\
    \rownumber \label{ln:h2-update-2} & \quad $\vct{p}_{\intg{j}} \gets \opfunc{\Gamma}_t^{p,q}\vct{q} + \opfunc{\Gamma}_t^{p,p}\vct{p}'$ & $\intg{ST}$ & $\intg{J}$ \\
    \rownumber \label{ln:constraint-function} & \quad $\vct{e}_{\intg{j}} \gets \vctfunc{c}(\vct{q}_{\intg{j}})$ & $\intg{ST}$ & $\intg{J}$ \\
    \rownumber \label{ln:constraint-jacobian} & \quad $\mtx{C}_{\intg{j}} \gets \jacob\vctfunc{c}(\vct{q}_{\intg{j}})$ & $\intg{ST}^2$ & $\intg{J}$ \\
    \rownumber \label{ln:vector-norm} & \quad \textbf{if} $\Vert\vct{e}_{\intg{j}}\Vert < \theta_c$ \textbf{and} ($\intg{j} \equiv 1$ \textbf{or} $\Vert\vct{q}_{\intg{j}} - \vct{q}_{\intg{j}-1}\Vert < \theta_q$) & $\intg{ST}$ & $\intg{J}$ \\
      \rownumber & \qquad $(\bar{\vct{q}}, \bar{\vct{p}}) \gets (\vct{q}_{\intg{j}}, \vct{p}_{\intg{j}})$ & & \\[1pt]
      \rownumber \label{ln:gram} & \qquad $\mtx{G} \gets \mtx{C}_{\intg{j}}\mtx{M}^{-1}\mtx{C}_{\intg{j}}\tr$ & $\intg{ST}^3$ & $1$ \\
      \rownumber \label{ln:gram-cholesky} & \qquad $\bar{\mtx{L}} \gets \chol(\mtx{G})$ & $\intg{T}^3$ & $1$ \\
      \rownumber & \qquad \textbf{break} & & \\
    \rownumber \label{ln:newton-matrix} & \quad $\mtx{H}_{\intg{j}} \gets \mtx{C}_{\intg{j}} (\opfunc{\Gamma}_t^{q,p})^{-1} \mtx{C}\tr$ & $\intg{ST}^3$ & $\intg{J}$ \\
    \rownumber \label{ln:newton-solve} & \quad $\vct{\lambda}_{\intg{j}+1} \gets \vct{\lambda}_{\intg{j}} + \mtx{H}_{\intg{j}}^{-1}\vct{e}_{\intg{j}}$ & $\intg{T}^3$ & $\intg{J}$ \\
  \rownumber \label{ln:h1-gradient} & $\bar{\vct{g}} \gets \grad h_1(\bar{\vct{q}})$ & $\intg{S}\intg{T}^3$ & 1\\
  \rownumber \label{ln:vector-addition-scalar-mult-2} & $\tilde{\vct{p}} \gets \bar{\vct{p}} - \tfrac{t}{2} \bar{\vct{g}}$ & $\intg{ST}$ & 1 \\
  \rownumber \label{ln:multiply-metric-inv-2} & $\tilde{\vct{\nu}} \gets \mtx{M}^{-1}\tilde{\vct{p}}$ & $\intg{ST}$ & 1 \\
  \rownumber \label{ln:constraint-jvp-2}  & $\tilde{\vct{\omega}} \gets \jacob\vctfunc{c}(\vct{q})\tilde{\vct{\nu}}$ & $\intg{ST}$ & 1 \\
  \rownumber \label{ln:cholesky-solve-2} & $\tilde{\vct{\lambda}} \gets \bar{\mtx{L}}^{-\mathsf{T}}(\bar{\mtx{L}}^{-1}\tilde{\vct{\omega}})$ & $\intg{T^2}$ & 1 \\
  \rownumber \label{ln:constraint-vjp-2} & $\bar{\vct{p}} \gets \tilde{\vct{p}} - \jacob\vctfunc{c}(\bar{\vct{q}})\tr\tilde{\vct{\lambda}}$ & $\intg{ST}$ & 1 \\
  \end{tabularx} \end{algorithm}

The overall dimensionality of the latent space is $\intg{Q} = \intg{U} + \intg{V}_0 + \intg{STV} + \intg{TW} = \bigo(\intg{ST})$ and the number of constraints is $\intg{C} = \intg{TY} = \bigo(\intg{T})$. Vector operations with a linear cost in dimension such as addition and/ scalar multiplication (\cref{ln:vector-addition-scalar-mult-1,ln:vector-addition-scalar-mult-2}) or evaluating an $\infty$-norm (\cref{ln:vector-norm}) will therefore be of complexity $\bigo(\intg{ST})$.

The metric matrix representation $\mtx{M}$ is assumed to be equal to the prior precision matrix as recommended in \cref{subsec:metric-choice}. As $(\rvct{u},\rvct{v}_0,\rvct{v}_{\range{\intg{ST}}},\rvct{w}_{\range{\intg{T}}})$ are independent under the prior, $\mtx{M}$ thus has a block-diagonal structure $\mtx{M} = \diag(\mtx{M}_1, \mtx{M}_2, \mtx{M}_3, \mtx{M}_4)$ with $\mtx{M}_1$ a $\intg{U} \times \intg{U}$ matrix, $\mtx{M}_2$ a $\intg{V}_0 \times \intg{V}_0$ matrix, $\mtx{M}_3$ a $\intg{STV} \times \intg{STV}$ diagonal matrix and $\mtx{M}_4$ a $\intg{TW} \times \intg{TW}$ block-diagonal matrix with block size $\intg{W}$. Multiplication of a $\intg{Q}$ dimensional vector by $\mtx{M}^{-1}$ (\cref{ln:multiply-metric-inv-1,ln:multiply-metric-inv-2}) will therefore have a $\bigo(\intg{ST})$ cost.

Each evaluation of the constraint function in \cref{eq:constraint-function} requires $\intg{ST}$ evaluations of $\vctfunc{f}_\delta$, $\intg{T}$ evaluations of $\vctfunc{h}$ and a single evaluation of each of $\vctfunc{g}_{\rvct{z}}$ and $\vctfunc{g}_{\rvct{x}_0}$; evaluating $\vctfunc{c}(\vct{q})$ therefore has a $\bigo(\intg{ST})$ cost (\cref{ln:constraint-function}). By \cref{lem:ad-comp-cost-jvp-and-vjp} evaluating a Jacobian vector product $\jacob\vctfunc{c}(\vct{q})\vct{v}$ for a $\intg{Q}$ dimensional vector $\vct{v}$ therefore also has a $\bigo(\intg{ST})$ cost (\cref{ln:constraint-jvp-1,ln:constraint-jvp-2}), as does evaluating a vector Jacobian product $\vct{v}\tr\jacob\vctfunc{c}(\vct{q})$ for a $\intg{C}$ dimensional vector $\vct{v}$ (\cref{ln:constraint-vjp-1,ln:constraint-vjp-2}). By \cref{cor:ad-comp-cost-jacobian} evaluating the full Jacobian of the constraint function has a $\bigo(\intg{ST}^2)$ cost (\cref{ln:constraint-jacobian}).

The constraint Jacobian $\jacob\vctfunc{c}(\vct{q})$ is a $\bigo(\intg{T}) \times \bigo(\intg{ST})$ matrix. Postmultiplying a $\bigo(\intg{T}) \times \bigo(\intg{ST})$ matrix by the metric inverse $\mtx{M}^{-1}$ will have a $\bigo(\intg{ST}^2)$ due to its block-diagonal structure, and the multiplication of a $\bigo(\intg{T}) \times \bigo(\intg{ST})$ matrix with a $\bigo(\intg{ST}) \times \bigo(\intg{T})$ matrix will have a $\bigo(\intg{ST}^3)$ cost, therefore given the full Jacobian of the constraint function, evaluating the Gram matrix has an overall $\bigo(\intg{ST}^3)$ complexity (\cref{ln:gram}). The Gram matrix $\opfunc{G}_{\mtx{M}}(\vct{q})$ a $\bigo(\intg{T}) \times \bigo(\intg{T})$ matrix and so its Cholesky factorisation can be computed with $\bigo(\intg{T}^3)$ complexity (\cref{ln:gram-cholesky}). Given a lower-triangular Cholesky factor, a linear system in the Gram matrix can be solved with two triangular solves at $\bigo(\intg{T}^2)$ cost (\cref{ln:cholesky-solve-1,ln:cholesky-solve-2}).

For the St\"ormer--Verlet splitting $\grad h_1(\vct{q}) = \grad\ell(\vct{q})$ while for the Gaussian splitting $\grad h_1(\vct{q}) = \grad \ell(\vct{q}) - \vct{q}$. By \cref{cor:ad-comp-cost-jacobian} evaluating $\grad \ell(\vct{q})$ has the same complexity as evaluating  $\ell(\vct{q}) = -\log\inlinetd{\rho}{\lebm{\intg{Q}}}(\vct{q}) - \sfrac{1}{2}\log\det{\opfunc{G}_{\mtx{M}}(\vct{q})}$. Evaluating the log prior density $\log\inlinetd{\rho}{\lebm{\intg{Q}}}(\vct{q})$ has a $\bigo(\intg{ST})$ complexity while evaluating the Gram matrix log-determinant $\log\det{\opfunc{G}_{\mtx{M}}(\vct{q})}$ has a $\bigo(\intg{ST}^3)$ complexity. For both splittings, evaluating $\grad h_1(\vct{q})$ therefore has a $\bigo(\intg{ST}^3)$ complexity (\cref{ln:h1-gradient}). For the St\"ormer--Verlet splitting $\opfunc{\Gamma}^{q,q}_t = \idmtx_{\intg{Q}}$, $\opfunc{\Gamma}^{q,p}_t = t\mtx{M}^{-1}$, $\opfunc{\Gamma}^{p,q}_t = \mtx{0}$ and $\opfunc{\Gamma}^{p,p}_t = \idmtx_{\intg{Q}}$. For the Gaussian splitting $\opfunc{\Gamma}^{q,q}_t = \mtx{R}\cos(\mtx{\Omega}t)\mtx{R}\tr$, $\opfunc{\Gamma}^{q,p}_t = \mtx{R}\mtx{\Omega}\sin(\mtx{\Omega}t)\mtx{R}\tr$, $\opfunc{\Gamma}^{p,q}_t = -\mtx{R}\mtx{\Omega}^{-1}\sin(\mtx{\Omega}t)\mtx{R}\tr$ and $\opfunc{\Gamma}^{p,p}_t = \mtx{R}\cos(\mtx{\Omega}t)\mtx{R}\tr$, with $\mtx{R}$ an orthogonal matrix with columns formed by the normalised eigenvectors of $\mtx{M}^{-1}$ and $\mtx{\Omega}$ a diagonal matrix of the square-roots of the corresponding eigenvalues such that $\mtx{M}^{-1} = \mtx{R} \mtx{\Omega}^2 \mtx{R}\tr$. The orthogonal matrix $\mtx{R}$ will inherit the block-diagonal structure of $\mtx{M}^{-1}$ and so multiplying a $\intg{Q}$-dimensional vector by $\mtx{R}$ or $\mtx{R}\tr$ will have the same $\bigo(\intg{ST})$ complexity as multiplying by $\mtx{M}^{-1}$. For both splittings, evaluating $\opfunc{\Gamma}^{x,y}_t\vct{v}$ for a $\intg{Q}$-dimensional vector $\vct{v}$ and for any $(x,y) \in \lbrace q, p \rbrace^2$ will therefore have at most a $\bigo(\intg{ST})$ complexity (\cref{ln:h2-update-1,ln:h2-update-2}).

From the complexities in \cref{alg:constrained-step-no-conditioning}, the overall complexity per step is $\bigo(\intg{J}\intg{S}\intg{T}^3)$, with \cref{ln:newton-matrix} being the dominant cost. Under Assumption \ref{ass:comp-cost-newton-iterations} $\intg{J}$ can be kept constant with respect to $\intg{S}$ and $\intg{T}$, so the per step complexity simplifies to $\bigo(\intg{S}\intg{T}^3)$. \qed %
\section{Proof of \cref{the:comp-cost-with-conditioning}}
\label{app:comp-cost-with-conditioning}

By exploiting the structure in the constraint Jacobian and Gram matrix afforded by conditioning on intermediate states $(\vct{x}_{\intg{S}\intg{t}_{\intg{b}}})_{\intg{b}=1}^{\intg{B}}$ and using the identities in \cref{eq:conditioned-gram-determinant} and \cref{eq:conditioned-gram-inverse}, all of the operations with quadratic or cubic complexity in $\intg{T}$ in the previous analysis in \cref{app:comp-cost-without-conditioning} can be evaluated at only linear cost in $\intg{T}$ (and $\intg{S}$). As in \cref{app:comp-cost-without-conditioning}, we ignore the operations associated with the reversibility check as these introduce only a constant factor overhead. Under \cref{ass:comp-cost-conditioning-partition}, we have that $\intg{T} = \intg{B}\intg{R}$ for some \emph{block size} $\intg{R}$ and we also include the scaling of operations with $\intg{R}$ in the analysis here. \cref{alg:constrained-step-with-conditioning} summarises the operations involved in each constrained integrator step when conditioning on intermediate states and the associated complexities. In the following we describe how the quadratic / cubic in $\intg{T}$ complexity operations in \cref{alg:constrained-step-no-conditioning} are altered to linear cost in $\intg{T}$.

\begin{algorithm}[!p]
  \caption{Annotated constrained integrator step (with blocking / conditioning)}
  \label{alg:constrained-step-with-conditioning}
  \vspace{-1.6pt}
\begin{algorithmic}
    \Require (current state and cached quantities)\\
        $\vct{q} = [\vct{u}; \vct{v}; \vct{w}]$ : current position, $\vct{p}$ : current momentum,
        $\vct{g} = \grad h_1(\vct{q})$,\\
        $(\mtx{U}, \mtx{V}, \mtx{W}) = \jacob\bar{\vctfunc{c}}(\vct{u},\vct{v},\vct{w})$,
        $\mtx{D} = \mtx{V} \mtx{M}_v^{-1} \mtx{V}\tr + \mtx{W} \mtx{M}_w^{-1} \mtx{W}\tr$,
        $\mtx{C} = \mtx{M}_u + \mtx{U}\tr \mtx{D}^{-1}\mtx{U}$.
    \Ensure (next state and cached quantities)\\
        $\bar{\vct{q}} = [\bar{\vct{u}}; \bar{\vct{v}}; \bar{\vct{w}}]$ : next position, $\vct{p}$ : next momentum,
        $\bar{\vct{g}} = \grad h_1(\bar{\vct{q}})$,\\
        $(\bar{\mtx{U}}, \bar{\mtx{V}}, \bar{\mtx{W}}) = \jacob\bar{\vctfunc{c}}(\bar{\vct{u}},\bar{\vct{v}},\bar{\vct{w}})$,
        $\bar{\mtx{D}} = \bar{\mtx{V}} \mtx{M}_v^{-1} \bar{\mtx{V}}\tr + \bar{\mtx{W}} \mtx{M}_w^{-1} \bar{\mtx{W}}\tr$,
        $\bar{\mtx{C}} = \mtx{M}_u + \bar{\mtx{U}}\tr \bar{\mtx{D}}^{-1}\bar{\mtx{U}}$.
\end{algorithmic}
\hrule
\begin{tabularx}{\textwidth}{@{}cXcc@{}}
  &  & Complexity $\bigo(\cdot)$ & Max. calls \\
  \rownumber & $\tilde{\vct{p}} \gets \vct{p} - \tfrac{t}{2} \vct{g}$ & $\intg{ST}$ & 1 \\
  \rownumber & $\tilde{\vct{\nu}} \gets \mtx{M}^{-1}\tilde{\vct{p}}$ & $\intg{ST}$ & 1 \\
  \rownumber & $\tilde{\vct{\omega}} \gets \jacob\vctfunc{c}(\vct{q})\tilde{\vct{\nu}}$ & $\intg{ST}$ & 1 \\
  \rownumber \label{ln:premultiply-D-inv-1} & $\tilde{\vct{\chi}} \gets \mtx{D}^{-1} \tilde{\vct{\omega}}$ & $\intg{R}^2\intg{T}$ & 1 \\
  \rownumber \label{ln:premultiply-U-tr-1} & $\tilde{\vct{\psi}} \gets \mtx{U}\tr \tilde{\vct{\chi}}$ & $\intg{T}$ & 1 \\
  \rownumber \label{ln:premultiply-C-inv-1} & $\tilde{\vct{\phi}} \gets \mtx{C}^{-1} \tilde{\vct{\psi}}$ & $1$ & 1 \\
  \rownumber \label{ln:premultiply-U-1} & $\vct{\zeta} \gets \mtx{U}\tilde{\vct{\phi}}$ & $\intg{T}$ & 1 \\
  \rownumber \label{ln:premultiply-D-inv-2} & $\vct{\lambda}_1 \gets \mtx{D}^{-1}(\tilde{\vct{\omega}} - \tilde{\vct{\zeta}})$ & $\intg{R}^2\intg{T}$ & 1 \\
  \rownumber & \textbf{for} $\intg{j} \in \range{\intg{J}}$ & & \\
    \rownumber & \quad $\vct{p}' \gets \tilde{\vct{p}} - \jacob\vctfunc{c}(\vct{q})\tr\vct{\lambda}_{\intg{j}}$ & $\intg{ST}$ & $\intg{J}$ \\
    \rownumber & \quad $(\vct{q}_{\intg{j}}, \vct{p}_{\intg{j}}) \gets (\opfunc{\Gamma}_t^{q,q}\vct{q} + \opfunc{\Gamma}_t^{q,p}\vct{p}', \opfunc{\Gamma}_t^{p,q}\vct{q} + \opfunc{\Gamma}_t^{p,p}\vct{p}')$ & $\intg{ST}$ & $\intg{J}$ \\
    \rownumber & \quad $\vct{e}_{\intg{j}} \gets \vctfunc{c}(\vct{q}_{\intg{j}})$ & $\intg{ST}$ & $\intg{J}$ \\
    \rownumber \label{ln:split-position-vector} & \quad $(\vct{u}_{\intg{j}}, \vct{v}_{\intg{j}}, \vct{w}_{\intg{j}}) \gets \textsc{split}(\vct{q}_{\intg{j}}, (\intg{U}, \intg{V}_0 + \intg{STV}, \intg{TW}))$ & & \\
    \rownumber \label{ln:constraint-jacobian-blocks} & \quad $(\mtx{U}_{\intg{j}}, \mtx{V}_{\intg{j}}, \mtx{W}_{\intg{j}}) \gets \jacob\bar{\vctfunc{c}}(\vct{u}_{\intg{j}},\vct{v}_{\intg{j}},\vct{w}_{\intg{j}})$ & $\intg{RST}$ & $\intg{J}$ \\
    \rownumber & \quad \textbf{if} $\Vert\vct{e}_{\intg{j}}\Vert < \theta_c$ \textbf{and} ($\intg{j} \equiv 1$ \textbf{or} $\Vert\vct{q}_{\intg{j}} - \vct{q}_{\intg{j}-1}\Vert < \theta_q$) & $\intg{ST}$ & $\intg{J}$ \\
      \rownumber & \qquad $(\bar{\vct{q}}, \bar{\vct{p}}, \bar{\mtx{U}}) \gets (\vct{q}_{\intg{j}}, \vct{p}_{\intg{j}}, \mtx{U}_{\intg{j}})$ & & \\[1pt]
      \rownumber \label{ln:block-diagonal-matrix-D} & \qquad $\bar{\mtx{D}} \gets \mtx{V}_{\intg{j}} \mtx{M}_v^{-1} \mtx{V}_{\intg{j}}\tr + \mtx{W}_{\intg{j}} \mtx{M}_w^{-1} \mtx{W}_{\intg{j}}\tr$ & $\intg{R}^2\intg{ST}$ & $1$ \\
      \rownumber \label{ln:capacitance-matrix-C} & \qquad $\bar{\mtx{C}} \gets \mtx{M}_u + \bar{\mtx{U}}\tr \bar{\mtx{D}}^{-1}\bar{\mtx{U}}$ & $\intg{R}^2\intg{T}$ & $1$ \\
      \rownumber & \qquad \textbf{break} & & \\
    \rownumber \label{ln:block-diagonal-matrix-E} & \quad $\mtx{E}_{\intg{j}} \gets \mtx{V}_{\intg{j}} \mtx{M}_v^{-1} \mtx{V}\tr + \mtx{W}_{\intg{j}} \mtx{M}_w^{-1} \mtx{W}\tr$ & $\intg{R}^2\intg{ST}$ & $\intg{J}$ \\
    \rownumber \label{ln:premultiply-E-inv-1} & \quad $\vct{\chi}_{\intg{j}} \gets \mtx{E}_{\intg{j}}^{-1} \vct{e}_{\intg{j}}$ & $\intg{R}^2\intg{T}$ & $\intg{J}$ \\
    \rownumber \label{ln:premultiply-U-tr-2} & \quad $\vct{\psi}_{\intg{j}} \gets \mtx{U}\tr \vct{\chi}_{\intg{j}}$ & $\intg{T}$ & $\intg{J}$ \\
    \rownumber \label{ln:premultiply-C-inv-2} & \quad $\vct{\phi}_{\intg{j}} \gets (\mtx{M}_u + \mtx{U}\tr \mtx{E_{\intg{j}}}^{-1}\mtx{U}_{\intg{j}})^{-1} \vct{\psi}_{\intg{j}}$ & $\intg{R}^2\intg{T}$ & $\intg{J}$ \\
    \rownumber \label{ln:premultiply-U-2} & \quad $\vct{\zeta}_{\intg{j}} \gets \mtx{U}_{\intg{j}}\vct{\psi}_{\intg{j}}$ & $\intg{T}$ & $\intg{J}$ \\
    \rownumber \label{ln:premultiply-E-inv-2} & \quad $\vct{\lambda}_{\intg{j}+1} \gets \vct{\lambda}_{\intg{j}} + \mtx{E}^{-1}_{\intg{j}}(\vct{e}_{\intg{j}} - \vct{\zeta}_{\intg{j}})$ & $\intg{R}^2\intg{T}$ & $\intg{J}$ \\
  \rownumber \label{ln:h1-gradient-conditioned} & $\bar{\vct{g}} \gets \grad h_1(\bar{\vct{q}})$ & $\intg{R}^2\intg{S}\intg{T}$ & 1\\
  \rownumber & $\tilde{\vct{p}} \gets \bar{\vct{p}} - \tfrac{t}{2} \bar{\vct{g}}$ & $\intg{ST}$ & 1 \\
  \rownumber & $\tilde{\vct{\nu}} \gets \mtx{M}^{-1}\tilde{\vct{p}}$ & $\intg{ST}$ & 1 \\
  \rownumber  & $\tilde{\vct{\omega}} \gets \jacob\vctfunc{c}(\vct{q})\tilde{\vct{\nu}}$ & $\intg{ST}$ & 1 \\
  \rownumber \label{ln:premultiply-D-inv-3} & $\tilde{\vct{\chi}} \gets \bar{\mtx{D}}^{-1} \tilde{\vct{\omega}}$ & $\intg{R}^2\intg{T}$ & 1 \\
  \rownumber \label{ln:premultiply-U-tr-3} & $\tilde{\vct{\psi}} \gets \bar{\mtx{U}}\tr \tilde{\vct{\chi}}$ & $\intg{T}$ & 1 \\
  \rownumber \label{ln:premultiply-C-inv-3} & $\tilde{\vct{\phi}} \gets \bar{\mtx{C}}^{-1} \tilde{\vct{\psi}}$ & $1$ & 1 \\
  \rownumber \label{ln:premultiply-U-3} & $\tilde{\vct{\zeta}} \gets \bar{\mtx{U}}\tilde{\vct{\phi}}$ & $\intg{T}$ & 1 \\
  \rownumber \label{ln:premultiply-D-inv-4} & $\tilde{\vct{\lambda}} \gets \bar{\mtx{D}}^{-1}(\tilde{\vct{\omega}} - \tilde{\vct{\zeta}})$ & $\intg{R}^2\intg{T}$ & 1 \\
  \rownumber & $\bar{\vct{p}} \gets \tilde{\vct{p}} - \jacob\vctfunc{c}(\bar{\vct{q}})\tr\tilde{\vct{\lambda}}$ & $\intg{ST}$ & 1 \\
\end{tabularx} \end{algorithm}

Under \cref{ass:comp-cost-conditioning-partition}, each of the $\intg{B}$ partial constraint functions $\vctfunc{c}_{\range{\intg{B}}}$ outputs a vector of dimension $\intg{R}\intg{Y} + \intg{X} = \bigo(\intg{R})$ and requires $\intg{R}\intg{S}$ evaluations of $\vctfunc{f}_\delta$ and $\intg{R}$ evaluations of $\vctfunc{h}$ and so has a complexity of $\bigo(\intg{R}\intg{S})$. By \cref{cor:ad-comp-cost-jacobian} the partial Jacobians $(\partial_1\vctfunc{c}_{\intg{b}}(\vct{u},\bar{\vct{v}}_{\intg{b}},\bar{\vct{w}}_\intg{b}), \partial_2\vctfunc{c}_{\intg{b}}(\vct{u},\bar{\vct{v}}_{\intg{b}},\bar{\vct{w}}_\intg{b}),\partial_3\vctfunc{c}_{\intg{b}}(\vct{u},\bar{\vct{v}}_{\intg{b}},\bar{\vct{w}}_\intg{b}))$ can therefore be evaluated for each $\intg{b} \in \range{\intg{B}}$ at a $\bigo(\intg{R}^2\intg{S})$ complexity, resulting in an overall $\bigo(\intg{R}\intg{S}\intg{T})$ complexity to evaluate all $\intg{B}$ tuples of partial Jacobians. These partial Jacobians correspond to the non-zero blocks of the full constraint Jacobian $\jacob\vctfunc{c}(\vct{u};\bar{\vct{v}}_{\range{\intg{B}}};[\bar{\vct{w}}_{\range{\intg{B}}}]) = [\mtx{U}~\mtx{V}~\mtx{W}]$,
\begin{align*}
  \mtx{U}
  =
  \partial_1\bar{\vctfunc{c}}(\vct{u},[\bar{\vct{v}}_{\range{\intg{B}}}],[\bar{\vct{w}}_{\range{\intg{B}}}])
  &=
  [(\partial_1\vctfunc{c}_{\intg{b}}(\vct{u},\bar{\vct{v}}_{\intg{b}},\bar{\vct{w}}_\intg{b}))_{\intg{b}=1}^{\intg{B}}],
  \\
  \mtx{V} =
  \partial_2\bar{\vctfunc{c}}(\vct{u},[\bar{\vct{v}}_{\range{\intg{B}}}],[\bar{\vct{w}}_{\range{\intg{B}}}])
  &=
  \diag(\partial_2\vctfunc{c}_{\intg{b}}(\vct{u},\bar{\vct{v}}_{\intg{b}},\bar{\vct{w}}_\intg{b}))_{\intg{b}=1}^{\intg{B}}],
  \\
  \mtx{W}
  =
  \partial_3\bar{\vctfunc{c}}(\vct{u},[\bar{\vct{v}}_{\range{\intg{B}}}],[\bar{\vct{w}}_{\range{\intg{B}}}])
  &=
  \diag(\partial_3\vctfunc{c}_{\intg{b}}(\vct{u},\bar{\vct{v}}_{\intg{b}},\bar{\vct{w}}_\intg{b}))_{\intg{b}=1}^{\intg{B}}],
\end{align*}
with $\mtx{U}$ a $\intg{C} \times \intg{U}$ matrix, $\mtx{V}$ a $\intg{C} \times (\intg{V}_0 + \intg{STV})$ block-diagonal matrix and $\mtx{W}$ a $\intg{C} \times \intg{TW}$ block-diagonal matrix. Evaluating \cref{ln:constraint-jacobian-blocks} therefore has complexity $\bigo(\intg{R}\intg{S}\intg{T})$.

As $(\rvct{u},\rvct{v}_{\range[0]{\intg{ST}}}, \rvct{w}_{\range{\intg{T}}})$ are independent under the prior $\rho$, under the recommendation in \cref{subsec:metric-choice} the metric matrix representation is $\mtx{M} = \diag(\mtx{M}_u,\mtx{M}_v,\mtx{M}_w)$ with $\mtx{M}_u$ a $\intg{U}\times\intg{U}$ matrix, $\mtx{M}_v$ a $(\intg{V}_0 + \intg{S}\intg{T}\intg{V})\times(\intg{V}_0 + \intg{S}\intg{T}\intg{V})$ block-diagonal matrix and $\mtx{M}_w$ a $\intg{T}\intg{W}\times\intg{T}\intg{W}$ block-diagonal matrix. The matrix $\mtx{M}_v$ further decomposes as $\mtx{M}_v = \diag(\mtx{M}_{v_0}, \idmtx_{\intg{STV}})$, where $\mtx{M}_{v_0}$ is the prior precision matrix of $\rvct{v}_0 \sim \tilde{\nu}$, due to the standard normal prior distributions of $\rvct{v}_{\range{\intg{ST}}}$. The matrix product $\mtx{V}\mtx{M}_v$ can therefore be computed at a $\bigo(\intg{ST})$ complexity. The product $(\mtx{V} \mtx{M}_v)\mtx{V}\tr$ can be decomposed into $\intg{B}$ products of a dense blocks of size $\bigo(\intg{R}) \times \bigo(\intg{RS})$ and $\bigo(\intg{RS}) \times \bigo(\intg{R})$, with overall complexity $\bigo(\intg{R^2}\intg{ST})$. The block structure of the rectangular block-diagonal matrix $\mtx{W}$ is compatible with that of $\mtx{M}_w$ such that the matrix product $\mtx{W}\mtx{M}_w\mtx{W}\tr$ can be decomposed into $\intg{B}$ products of blocks of size $\bigo(\intg{R}) \times \bigo(\intg{R})$ resulting in an overall complexity $\bigo(\intg{R}^2\intg{T})$. Combining the previous results together, the complexity of evaluating a symmetric block-diagonal matrix $\mtx{D} = \mtx{V}\mtx{M}_v\mtx{V}\tr + \mtx{W}\mtx{M}_w\mtx{W}\tr$ (\cref{ln:block-diagonal-matrix-D}) is therefore $\bigo(\intg{R}^2\intg{ST})$ and by an equivalent argument the complexity of evaluating a non-symmetric block-diagonal matrix $\mtx{E} = \mtx{V}'\mtx{M}_v\mtx{V}\tr + \mtx{W}'\mtx{M}_w\mtx{W}\tr$ (\cref{ln:block-diagonal-matrix-E}) is also $\bigo(\intg{R}^2\intg{ST})$. As $\mtx{D}$ and $\mtx{E}$ are both block-diagonal with $\intg{B}$ blocks of size $\bigo(\intg{R}) \times \bigo(\intg{R})$, multiplying a vector by their inverses (\cref{ln:premultiply-D-inv-1,ln:premultiply-D-inv-2,ln:premultiply-D-inv-3,ln:premultiply-D-inv-4,ln:premultiply-E-inv-1,ln:premultiply-E-inv-2}) has a $\bigo(\intg{R}^2\intg{T})$ complexity.

As $\mtx{U}$ is of size $\bigo(\intg{T}) \times \bigo(1)$, by the same argument computing $\mtx{D}^{-1}\mtx{U}$ has complexity $\bigo(\intg{R}^2\intg{T})$. Computing $\mtx{C} = \mtx{M}_u + \mtx{U}\tr \mtx{D}^{-1} \mtx{U}$ therefore has complexity $\bigo(\intg{R}^2\intg{T})$ (\cref{ln:capacitance-matrix-C}) and the same complexity applies to the non-symmetric equivalent $\mtx{M}_u + \mtx{U}\tr \mtx{D}^{-1} \mtx{U}'$ (\cref{ln:premultiply-C-inv-2}). As $\mtx{C}$ is of size $\bigo(1) \times \bigo(1)$, premultiplying by its inverse (\cref{ln:premultiply-C-inv-1,ln:premultiply-C-inv-3}) has $\bigo(1)$ complexity. Premultiplying a vector with $\mtx{U}$ (\cref{ln:premultiply-U-1,ln:premultiply-U-2,ln:premultiply-U-3}) or $\mtx{U}\tr$ (\cref{ln:premultiply-U-tr-1,ln:premultiply-U-tr-2,ln:premultiply-U-tr-3}) has $\bigo(\intg{T})$ complexity.

Assuming the gradient of $h_1$ is computed using \ac{ad}, by \cref{cor:ad-comp-cost-jacobian} the cost is again the same as evaluating $h_1(\vct{q})$. For both the Gaussian and St\"ormer--Verlet splittings, the dominant cost in evaluating $h_1(\vct{q})$ is in computing the Gram matrix log determinant $\log\det{\opfunc{G}_{\mtx{M}}(\vct{q})}$; using the identity in \eqref{eq:conditioned-gram-determinant} this can be reduced to computing the log determinants of the matrices $\mtx{C}$ and $\mtx{D}$. Evaluating $\mtx{C}$ and $\mtx{D}$ has complexities $\bigo(\intg{R}^2\intg{T})$ and $\bigo(\intg{R}^2\intg{ST})$ respectively, as described previously. As $\mtx{C}$ is of size $\bigo(1) \times \bigo(1)$ computing its log determinant has $\bigo(1)$ complexity, while as $\mtx{D}$ is block-diagonal with $\intg{B}$ blocks of size $\bigo(\intg{R}) \times \bigo(\intg{R})$, its log determinant can be evaluated as the sum of the log determinants of the blocks at $\bigo(\intg{R}^2\intg{T})$ complexity. The overall complexity for evaluating $h_1$ and so $\grad h_1$ (\cref{ln:h1-gradient-conditioned}) is therefore $\bigo(\intg{R}^2\intg{ST})$.

The complexity of each constrained step is therefore $\bigo(\intg{J}\intg{R}^2\intg{S}\intg{T})$, with \cref{ln:block-diagonal-matrix-E} the dominant cost. Under \cref{ass:comp-cost-newton-iterations-conditioned}, $\intg{J} = \bigo(1)$ and the complexity is $\bigo(\intg{R}^2\intg{S}\intg{T})$. \qed
\section{Proof of \cref*{prop:manifold-density}}
\label{app:proof-manifold-density}

To prove the posterior distribution has a density of the form given in \cref*{prop:manifold-density} we use the co-area formula \cite{federer1969geometric}, an extension of Fubini's theorem.
\begin{lemma}[Co-area formula]
  \label{lem:co-area-formula}
  For any measurable function $f : \reals^{\intg{Q}} \to \reals$ and continuously differentiable function $\vctfunc{c} : \reals^{\intg{Q}} \to \reals^\intg{C}$ with $\intg{D} = \intg{Q} - \intg{C} > 0$  then
    $$
      \int_{\reals^\intg{Q}} f(\vct{q}) \det{\jacob\vctfunc{c}(\vct{q})\mtx{M}^{-1}\jacob\vctfunc{c}(\vct{q})}^{\frac{1}{2}} \det{\mtx{M}}^{\frac{1}{2}} \lebm{\intg{Q}}(\diff\vct{q}) =
      \int_{\reals^\intg{C}} \int_{\vctfunc{c}^{-1}[\vct{y}]} f(\vct{q}) \haum{\mtx{M}}{\intg{D}}(\diff\vct{q}) \,\lebm{\intg{C}}(\diff\vct{y})
    $$
  where $\haum{\mtx{M}}{\intg{D}}$ is the $\intg{D}$-dimensional Hausdorff measure on $\reals^{\intg{Q}}$ equipped with a metric with positive definite matrix representation $\mtx{M} \in \reals^{\intg{Q}\times\intg{Q}}$ and $\vctfunc{c}^{-1}[\vct{y}] = \lbrace \vct{q}\in\reals^\intg{Q} : \vctfunc{c}(\vct{q}) = \vct{y} \rbrace$ is the preimage of $\vct{y}$ under $\vctfunc{c}$.
\end{lemma}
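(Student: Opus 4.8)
The plan is to deduce this weighted co-area formula from the classical Euclidean co-area theorem of \citet{federer1969geometric} by means of a single global linear change of variables that renders the ambient metric $\mtx{M}$ Euclidean. Since $\mtx{M}$ is positive definite by \cref{ass:latent-space-metric}, I would factor $\mtx{M} = \mtx{L}\mtx{L}\tr$ (e.g.\ a Cholesky factor) and consider the linear map $T : \reals^\intg{Q} \to \reals^\intg{Q}$, $T\vct{q} = \mtx{L}\tr\vct{q}$. Since $(\mtx{L}\tr\vct{u})\tr(\mtx{L}\tr\vct{v}) = \vct{u}\tr\mtx{M}\vct{v}$, the map $T$ is an isometry from $(\reals^\intg{Q},\mtx{M})$ onto Euclidean $\reals^\intg{Q}$; in particular it carries the $\intg{D}$-dimensional Hausdorff measure $\haum{\mtx{M}}{\intg{D}}$ built from the $\mtx{M}$-metric to the standard Euclidean Hausdorff measure $\mathcal{H}^\intg{D}$, in every dimension.

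Next I would transport the data of the formula through $T$. Setting $\tilde{\vctfunc{c}} := \vctfunc{c} \circ T^{-1}$ and $\tilde{f} := f \circ T^{-1}$, the chain rule gives $\jacob\tilde{\vctfunc{c}}(T\vct{q}) = \jacob\vctfunc{c}(\vct{q})\mtx{L}^{-\mathsf{T}}$, whence the \emph{Euclidean} co-area factor of $\tilde{\vctfunc{c}}$ is
\[
  \det{\jacob\tilde{\vctfunc{c}}(T\vct{q})\,\jacob\tilde{\vctfunc{c}}(T\vct{q})\tr}^{\frac{1}{2}}
  = \det{\jacob\vctfunc{c}(\vct{q})\,\mtx{L}^{-\mathsf{T}}\mtx{L}^{-1}\,\jacob\vctfunc{c}(\vct{q})\tr}^{\frac{1}{2}}
  = \det{\jacob\vctfunc{c}(\vct{q})\,\mtx{M}^{-1}\,\jacob\vctfunc{c}(\vct{q})\tr}^{\frac{1}{2}},
\]
using $\mtx{L}^{-\mathsf{T}}\mtx{L}^{-1} = \mtx{M}^{-1}$. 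Applying the classical co-area theorem to the continuously differentiable map $\tilde{\vctfunc{c}}$ and the measurable function $\tilde{f}$ then yields
\[
  \int_{\reals^\intg{Q}} \tilde{f}(\tilde{\vct{q}})\,
  \det{\jacob\tilde{\vctfunc{c}}(\tilde{\vct{q}})\,\jacob\tilde{\vctfunc{c}}(\tilde{\vct{q}})\tr}^{\frac{1}{2}}\,
  \lebm{\intg{Q}}(\diff\tilde{\vct{q}})
  = \int_{\reals^\intg{C}} \int_{\tilde{\vctfunc{c}}^{-1}[\vct{y}]} \tilde{f}\,\mathcal{H}^{\intg{D}}(\diff\tilde{\vct{q}})\,\lebm{\intg{C}}(\diff\vct{y}).
\]

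Finally I would pull both sides back to the original coordinates. On the left, the substitution $\tilde{\vct{q}} = T\vct{q}$ gives $\lebm{\intg{Q}}(\diff\tilde{\vct{q}}) = \det{\mtx{L}\tr}\,\lebm{\intg{Q}}(\diff\vct{q}) = \det{\mtx{M}}^{\frac{1}{2}}\,\lebm{\intg{Q}}(\diff\vct{q})$ together with $\tilde{f}(\tilde{\vct{q}}) = f(\vct{q})$, reproducing exactly the integrand $f(\vct{q})\,\det{\jacob\vctfunc{c}(\vct{q})\mtx{M}^{-1}\jacob\vctfunc{c}(\vct{q})\tr}^{\frac{1}{2}}\det{\mtx{M}}^{\frac{1}{2}}$ of the claimed left-hand side. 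On the right, $\tilde{\vctfunc{c}}^{-1}[\vct{y}] = T(\vctfunc{c}^{-1}[\vct{y}])$, and because $T$ is an isometry sending $\haum{\mtx{M}}{\intg{D}}$ on $\vctfunc{c}^{-1}[\vct{y}]$ to $\mathcal{H}^{\intg{D}}$ on $\tilde{\vctfunc{c}}^{-1}[\vct{y}]$, the inner fibre integrals coincide, delivering the claimed right-hand side.

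The genuinely deep ingredient is Federer's Euclidean co-area theorem, which I would invoke rather than reprove; the remaining effort is purely the bookkeeping of how the Lebesgue, Hausdorff and fibre measures transform under the linear isometry $T$. The one point requiring care is the isometry-invariance of the $\intg{D}$-dimensional Hausdorff measure: since $\haum{\mtx{M}}{\intg{D}}$ is by definition the Hausdorff measure of the metric space $(\reals^\intg{Q},\mtx{M})$, the identity $T_{\#}\haum{\mtx{M}}{\intg{D}} = \mathcal{H}^{\intg{D}}$ is immediate from $T$ preserving $\mtx{M}$-distances, and one notes the construction is independent of the chosen factor $\mtx{L}$ because any two such factors differ by a Euclidean orthogonal map, which leaves both $\mathcal{H}^{\intg{D}}$ and $\lebm{\intg{Q}}$ invariant. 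No rank hypothesis on $\jacob\vctfunc{c}$ is needed at this stage, as the classical formula holds for arbitrary continuously differentiable maps.
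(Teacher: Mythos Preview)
Your proposal is correct and follows essentially the same approach as the paper: the paper cites the standard Euclidean co-area theorem (Evans--Gariepy) and remarks that the stated version follows via the change of variables $\vct{q}' = \mtx{M}^{-\frac{1}{2}}\vct{q}$, which is exactly your isometry $T$ up to the choice of square-root factor. Your write-up is more detailed than the paper's one-line reference, but the idea is identical.
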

\begin{proof}
  See for example Theorem 1 in \S 3.4.2 in \citet{evans1992measure}. Compared to the standard statement, the result here includes a change of variables $\vct{q}' = \mtx{M}^{-\frac{1}{2}}\vct{q}$ with corresponding transform in the Euclidean metric $\vct{q}\tr\vct{q} = (\vct{q}')\tr\mtx{M}\vct{q}'$.
\end{proof}
Although \cref{lem:co-area-formula} states the co-area formula in terms of a Hausdorff measure on the ambient space, an equivalent result can be stated in terms of a Riemannian measure on the submanifold using the correspondence in \cref{lem:equivalence-of-riemannian-and-hausdorff-measures} below.
\begin{definition}
  \label{def:riemannian-measure}
  Let $\set{M}$ be a $\intg{D}$-dimensional $\set{C}^1$ sub-manifold of an ambient space $\reals^\intg{Q}$ equipped with a metric with positive-definite matrix representation $\mtx{M}$. Then for any local parametrisation $\vctfunc{\phi}: \reals^\intg{D} \supseteq \set{U} \to \reals^\intg{Q}$ of $\set{M}$ we define $\riem{\mtx{M}}{\set{M}}$, the Riemannian measure on $\set{M}$ equipped with metric induced from the ambient metric, of a measurable $\set{A} \subseteq \vctfunc{\phi}(\set{U})$ as
  $$
    \riem{\mtx{M}}{\set{M}}(\set{A}) :=
    \int_{\vctfunc{\phi}^{-1}(\set{A})} \left| \jacob\vctfunc{\phi}(\vct{m})\tr\mtx{M}\jacob\vctfunc{\phi}(\vct{m})\right|^{\frac{1}{2}} \lebm{\intg{D}}(\diff\vct{m}).
  $$
\end{definition}
This definition can be extended to any $\set{A} \subseteq \set{M}$ using a partition of unity argument.
\begin{lemma}
  \label{lem:equivalence-of-riemannian-and-hausdorff-measures}
  Let $\set{M}$ be a $\intg{D}$-dimensional sub-manifold of an ambient space $\reals^\intg{Q}$. For any positive definite matrix $\mtx{M}$ and measurable $\set{A} \subseteq \set{M}$ we have
  $$
      \haum{\mtx{M}}{\intg{D}}(\set{A}) = \riem{\mtx{M}}{\set{M}}(\set{A}).
  $$
\end{lemma}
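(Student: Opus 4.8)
The plan is to reduce the statement to the classical Euclidean area formula by exploiting that the metric with matrix representation $\mtx{M}$ is the pullback of the Euclidean metric under the global linear map $T := \mtx{M}^{\frac{1}{2}}$. First I would record how each of the two measures transforms under $T$. Writing $d_{\mtx{M}}(\vct{x},\vct{y}) := \det{\mtx{M}^{\frac{1}{2}}(\vct{x}-\vct{y})}$ for the distance induced by $\mtx{M}$ (here $\det{\cdot}$ denoting the Euclidean norm), the map $T$ is an isometry from $(\reals^\intg{Q}, d_{\mtx{M}})$ to $(\reals^\intg{Q},$ Euclidean$)$, since $\det{\mtx{M}^{\frac{1}{2}}(\vct{x}-\vct{y})} = \det{T\vct{x}-T\vct{y}}$. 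As the $\intg{D}$-dimensional Hausdorff measure depends only on the underlying metric and is preserved by isometries, this gives $\haum{\mtx{M}}{\intg{D}}(\set{A}) = \haum{\idmtx}{\intg{D}}(T\set{A})$ for every measurable $\set{A}$.

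On the Riemannian side, if $\vctfunc{\phi}$ is a local parametrisation of $\set{M}$ then $T\circ\vctfunc{\phi}$ is a local parametrisation of the image submanifold $T\set{M}$, and the chain rule gives $\jacob(T\circ\vctfunc{\phi})\tr\jacob(T\circ\vctfunc{\phi}) = \jacob\vctfunc{\phi}\tr T\tr T\jacob\vctfunc{\phi} = \jacob\vctfunc{\phi}\tr\mtx{M}\jacob\vctfunc{\phi}$. Since $T$ is a bijection, $(T\circ\vctfunc{\phi})^{-1}(T\set{A}) = \vctfunc{\phi}^{-1}(\set{A})$, so comparing the two integrals in \cref{def:riemannian-measure} yields $\riem{\mtx{M}}{\set{M}}(\set{A}) = \riem{\idmtx}{T\set{M}}(T\set{A})$. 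These two identities reduce the claim to the Euclidean case: it suffices to show $\haum{\idmtx}{\intg{D}}(\set{B}) = \riem{\idmtx}{\set{N}}(\set{B})$ for any $\intg{D}$-dimensional $\set{C}^1$ submanifold $\set{N}$ of $\reals^\intg{Q}$, applied with $\set{N} = T\set{M}$ and $\set{B} = T\set{A}$. For this I would invoke the area formula for injective $\set{C}^1$ maps \citep[see e.g.][\S 3.3.2]{evans1992measure}: for an injective parametrisation $\vctfunc{\phi}:\set{U}\subseteq\reals^\intg{D}\to\reals^\intg{Q}$ and measurable $\set{C}\subseteq\set{U}$,
\[
  \haum{\idmtx}{\intg{D}}(\vctfunc{\phi}(\set{C})) = \int_{\set{C}} \det{\jacob\vctfunc{\phi}(\vct{m})\tr\jacob\vctfunc{\phi}(\vct{m})}^{\frac{1}{2}} \lebm{\intg{D}}(\diff\vct{m}),
\]
whose right-hand side is exactly the expression defining $\riem{\idmtx}{\set{N}}(\vctfunc{\phi}(\set{C}))$ in \cref{def:riemannian-measure}. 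This establishes equality for any $\set{A}$ contained in a single coordinate chart.

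To pass to an arbitrary measurable $\set{A}\subseteq\set{M}$, I would use that a $\set{C}^1$ submanifold is second countable, hence covered by countably many chart images $\vctfunc{\phi}_\intg{k}(\set{U}_\intg{k})$; writing $\set{A}$ as a countable disjoint union of measurable pieces each lying in one chart and using countable additivity of both measures, the chartwise identity extends to all of $\set{A}$ (equivalently, one subordinates a partition of unity to the cover, as indicated after \cref{def:riemannian-measure}). For this decomposition to be well defined one checks that $\riem{\idmtx}{\set{N}}$ does not depend on the chosen parametrisation, which follows from the change-of-variables formula for $\lebm{\intg{D}}$ together with the chain rule.

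The main obstacle is conceptual rather than computational: it lies in cleanly handling the non-Euclidean matrix $\mtx{M}$ sitting inside the Hausdorff measure. The cleanest route, as above, is to recognise that $\haum{\mtx{M}}{\intg{D}}$ is literally the $\intg{D}$-dimensional Hausdorff measure built from the distance $d_{\mtx{M}}$, and to transport everything through the single global isometry $\mtx{M}^{\frac{1}{2}}$, after which only the classical Euclidean area formula is required. What remains is bookkeeping --- verifying that the Hausdorff and Riemannian measures push forward identically under $T$, and that the local-to-global patching respects the measurable structure.
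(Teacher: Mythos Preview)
Your argument is correct and complete as a proof sketch. The paper, however, does not actually prove this lemma: it simply cites Theorem~IV.1.8 in \citet{chavel2001isoperimetric} (with a remark about a normalising constant in the Hausdorff measure definition). So your proposal is strictly more than what the paper does --- you supply the standard argument behind the cited result, namely the reduction to the Euclidean case via the global isometry $\mtx{M}^{1/2}$ followed by the area formula, whereas the paper treats the identity as a known black box. Both routes are legitimate; yours is self-contained and makes the role of $\mtx{M}$ transparent, while the paper's citation is appropriate given that the lemma is auxiliary and classical.
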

\begin{proof}
  See for example Theorem IV.1.8 in \citet{chavel2001isoperimetric} (note that the definition in \citet{chavel2001isoperimetric} of the Hausdorff measure differs by a normalising constant).
\end{proof}

We are now in a position to prove \cref*{prop:manifold-density} in the main text which we restate in a self-contained form below for ease of reference.
\begin{proposition}
  If $\rvct{q} \sim \rho$ for $\rho \in \borelprob(\reals^\intg{Q})$ absolutely continuous to $\lebm{\intg{Q}}$ and $\vctfunc{c} : \reals^\intg{Q} \to \reals^\intg{C}$ is of class $\mathcal{C}^1$ with $\jacob\vctfunc{c}$ full row-rank $\rho$-almost surely. Then
  $$
    \rvct{q} \gvn \vctfunc{c}(\rvct{q}) = \vct{0} \sim \pi,
    \quad
    \td{\pi}{\riem{\mtx{M}}{\set{M}}}(\vct{q}) \propto
    \td{\rho}{\lebm{\intg{Q}}}(\vct{q}) \det{\jacob\vctfunc{c}(\vct{q})\mtx{M}^{-1}\jacob\vctfunc{c}(\vct{q})\tr}^{-\frac{1}{2}},
    \quad
    \set{M} = \vctfunc{c}^{-1}[\vct{0}].
  $$
\end{proposition}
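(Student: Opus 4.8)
The plan is to obtain the claimed density by disintegrating the prior $\rho$ along the fibres of $\vctfunc{c}$ via the co-area formula in \cref{lem:co-area-formula}, and then to identify the resulting family of fibre measures with the regular conditional distribution of $\rvct{q}$ given $\vctfunc{c}(\rvct{q})$. The Riemannian-versus-Hausdorff bookkeeping is then handled by \cref{lem:equivalence-of-riemannian-and-hausdorff-measures}.

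First I would fix an arbitrary bounded measurable test function $\varphi : \reals^\intg{Q} \to \reals$ and write $\expc_\rho[\varphi(\rvct{q})] = \int_{\reals^\intg{Q}} \varphi(\vct{q}) \td{\rho}{\lebm{\intg{Q}}}(\vct{q}) \lebm{\intg{Q}}(\diff\vct{q})$. The key manoeuvre is to apply \cref{lem:co-area-formula} to
\[ f(\vct{q}) := \varphi(\vct{q}) \, \td{\rho}{\lebm{\intg{Q}}}(\vct{q}) \, \det{\jacob\vctfunc{c}(\vct{q})\mtx{M}^{-1}\jacob\vctfunc{c}(\vct{q})\tr}^{-\frac{1}{2}} \det{\mtx{M}}^{-\frac{1}{2}}, \]
which is well defined $\rho$-almost surely, since full row-rank of $\jacob\vctfunc{c}$ $\rho$-a.s. makes the Gram determinant strictly positive $\rho$-a.s. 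With this choice the Jacobian weight and the factor $\det{\mtx{M}}^{1/2}$ on the left-hand side of the co-area identity cancel exactly, leaving $\expc_\rho[\varphi(\rvct{q})]$, while the right-hand side becomes an iterated integral over the level sets $\vctfunc{c}^{-1}[\vct{y}]$.

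This yields the disintegration
\[ \expc_\rho[\varphi(\rvct{q})] = \int_{\reals^\intg{C}} \int_{\vctfunc{c}^{-1}[\vct{y}]} \varphi(\vct{q}) \, \gamma_{\vct{y}}(\diff\vct{q}) \, \lebm{\intg{C}}(\diff\vct{y}), \]
where the unnormalised fibre measure $\gamma_{\vct{y}}(\diff\vct{q}) := \td{\rho}{\lebm{\intg{Q}}}(\vct{q}) \det{\jacob\vctfunc{c}(\vct{q})\mtx{M}^{-1}\jacob\vctfunc{c}(\vct{q})\tr}^{-\frac{1}{2}} \haum{\mtx{M}}{\intg{D}}(\diff\vct{q})$ is supported on $\vctfunc{c}^{-1}[\vct{y}]$ (the constant $\det{\mtx{M}}^{-1/2}$ being absorbed into normalisation). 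Specialising to $\varphi = g \circ \vctfunc{c}$ for bounded measurable $g$, and using that $\vctfunc{c} \equiv \vct{y}$ on the fibre, shows the law of $\vctfunc{c}(\rvct{q})$ has $\lebm{\intg{C}}$-density $\vct{y} \mapsto \gamma_{\vct{y}}(\vctfunc{c}^{-1}[\vct{y}])$; hence the normalised fibre measures $\pi_{\vct{y}} := \gamma_{\vct{y}} / \gamma_{\vct{y}}(\vctfunc{c}^{-1}[\vct{y}])$ constitute a version of the regular conditional distribution of $\rvct{q}$ given $\vctfunc{c}(\rvct{q})$. Setting $\vct{y} = \vct{0}$, so $\vctfunc{c}^{-1}[\vct{0}] = \set{M}$, and invoking \cref{lem:equivalence-of-riemannian-and-hausdorff-measures} to rewrite $\haum{\mtx{M}}{\intg{D}}$ as $\riem{\mtx{M}}{\set{M}}$, gives precisely the claimed density.

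The main obstacle is the measure-theoretic status of conditioning on the null event $\vctfunc{c}(\rvct{q}) = \vct{0}$: a regular conditional distribution is determined only $\lebm{\intg{C}}$-almost everywhere in $\vct{y}$, so a priori its value at the single point $\vct{0}$ is ambiguous. I would resolve this exactly as the paper implicitly does, by \emph{defining} $\pi$ to be the distinguished disintegration furnished by the co-area formula; the computation above shows this family is consistent with the joint law of $(\rvct{q}, \vctfunc{c}(\rvct{q}))$ and is therefore a genuine version of the conditional law. The remaining points — measurability of $\vct{y} \mapsto \gamma_{\vct{y}}(\vctfunc{c}^{-1}[\vct{y}])$, positivity and finiteness of the normalising constant $\gamma_{\vct{0}}(\set{M})$, and a monotone-class argument extending the identity from bounded test functions to all measurable sets — are routine once the disintegration is in hand.
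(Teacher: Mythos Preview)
Your proposal is correct and follows essentially the same route as the paper: apply the co-area formula (\cref{lem:co-area-formula}) with the test function weighted by the inverse square-root of the Gram determinant so that the Jacobian factor cancels, read off a disintegration of $\rho$ along the fibres of $\vctfunc{c}$, identify the fibre measures with the regular conditional distribution via comparison with the law of total expectation, and finally invoke \cref{lem:equivalence-of-riemannian-and-hausdorff-measures} to pass from the Hausdorff to the Riemannian measure. Your treatment is in fact slightly more careful than the paper's in flagging the null-set ambiguity of conditioning at a single $\vct{y}$ and resolving it by taking the co-area disintegration as the canonical version.
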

\begin{proof}
Let $f(\vct{q}) = h(\vct{q})\,\td{\rho}{\lebm{\intg{Q}}}(\vct{q}) \det{\jacob\vctfunc{c}(\vct{q})\mtx{M}^{-1}\jacob\vctfunc{c}(\vct{q})}^{-\frac{1}{2}} \det{\mtx{M}}^{-\frac{1}{2}}$ for a measurable function $h$. Then from \cref{lem:co-area-formula} and the equivalence in \cref{lem:equivalence-of-riemannian-and-hausdorff-measures} we have
\begin{multline*}
    \int_{\reals^\intg{Q}} h(\vct{q}) \rho(\diff\vct{q}) =
    \int_{\reals^\intg{C}} \int_{\vctfunc{c}^{-1}[\vct{y}]}
      \mkern-3mu h(\vct{q})
      \frac{
        \td{\rho}{\lebm{\intg{Q}}}(\vct{q})
        \det{
          \jacob\vctfunc{c}(\vct{q})
          \mtx{M}^{-1}
          \jacob\vctfunc{c}(\vct{q})\tr
        }^{-\frac{1}{2}}
      }{\det{\mtx{M}}^{\frac{1}{2}}}
      \riem{\mtx{M}}{\vctfunc{c}^{-1}[\vct{y}]}(\diff\vct{q}) \,\lebm{\intg{C}}(\diff\vct{y}).
\end{multline*}
Define $w(\vct{y}) =  \int_{\vctfunc{c}^{-1}[\vct{y}]} \td{\rho}{\lebm{\intg{Q}}}(\vct{q}) \det{\jacob\vctfunc{c}(\vct{q})\mtx{M}^{-1}\jacob\vctfunc{c}(\vct{q})}^{-\frac{1}{2}} \det{\mtx{M}}^{-\frac{1}{2}} \riem{\mtx{M}}{\vctfunc{c}^{-1}[\vct{y}]}(\diff\vct{q})$ then from \\ \cref{lem:co-area-formula} we have $\int_{\reals^{\intg{C}}} w(\vct{y}) \lebm{\intg{C}}(\diff\vct{y}) = \int_{\reals^{\intg{Q}}} \rho(\diff\vct{q}) = 1$ and%
\begin{multline*}
    \int_{\reals^\intg{Q}} h(\vct{q}) \rho(\diff\vct{q}) =\\
    \int_{\reals^\intg{C}} \int_{\vctfunc{c}^{-1}[\vct{y}]} %
      h(\vct{q})
      \frac{
        \td{\rho}{\lebm{\intg{Q}}}(\vct{q})
        \det{
          \jacob\vctfunc{c}(\vct{q})
          \mtx{M}^{-1}
          \jacob\vctfunc{c}(\vct{q})\tr
        }^{-\frac{1}{2}}
      }{\det{\mtx{M}}^{\frac{1}{2}}w(\vct{y})}
    \riem{\mtx{M}}{\vctfunc{c}^{-1}[\vct{y}]}(\diff\vct{q})
    \, w(\vct{y})\,\lebm{\intg{C}}(\diff\vct{y}).
\end{multline*}
As this applies for arbitrary $h$, defining $\omega: \borelprob(\reals^\intg{C})$ as the marginal law of $\vctfunc{c}(\rvct{q})$ and $\varpi : \reals^\intg{C} \to \borelprob(\real^\intg{Q})$ the law of $\rvct{q}$ given $\vctfunc{c}(\rvct{q})$, comparing with the law of total expectation
$$
  \int_{\reals^\intg{Q}} h(\vct{q}) \rho(\diff\vct{q}) =
  \int_{\reals^\intg{C}} \int_{\vctfunc{c}^{-1}[\vct{y}]} h(\vct{q}) \varpi(\vct{y})(\diff\vct{q})\, \omega(\diff \vct{y})
$$
we recognise $\omega(\diff\vct{y}) = w(\vct{y})\,\lebm{\intg{C}}(\diff\vct{y})$ and
$$
  \varpi(\vct{y})(\diff\vct{q}) =
  (w(\vct{y})\det{\mtx{M}}^{\frac{1}{2}})^{-1}
  \td{\rho}{\lebm{\intg{Q}}}(\vct{q})
  \det{
    \jacob\vctfunc{c}(\vct{q})
    \mtx{M}^{-1}
    \jacob\vctfunc{c}(\vct{q})\tr
  }^{-\frac{1}{2}}
  \riem{\mtx{M}}{\vctfunc{c}^{-1}[\vct{y}]}(\diff\vct{q}).
$$
The density given for $\pi = \varpi(\vct{0})$ in the proposition then follows directly.
\end{proof}
\section{Non-degeneracy of the symplectic form}
\label{app:non-degeneracy-of-symplectic-form}

For a general set of constraints $\phi_{\intg{i}}: \reals^{\intg{Q}} \times \reals^{\intg{Q}} \to \reals,~\intg{i} \in 1:\intg{M}$ on both the position and momenta, there is no guarantee that the restriction of the symplectic form is non-degenerate. In general, the restricted symplectic form will be non-degenerate if and only if all the constraints are \emph{second class} \citep{dirac2001lectures,bojowald2003poisson} which requires that each constraint has a non-zero \emph{Poisson bracket} with at least one other constraint, that is, for each $\intg{i} \in 1{:}\intg{M}$ there exists $\intg{j} \neq \intg{i}$ such that
\[
    \lbrace \phi_{\intg{i}}, \phi_{\intg{j}}\rbrace(\vct{q}, \vct{p}) =
    \partial_1\phi_{\intg{i}}(\vct{q},\vct{p})\tr\partial_2\phi_{\intg{j}}(\vct{q},\vct{p}) -
    \partial_2\phi_{\intg{i}}(\vct{q},\vct{p})\tr\partial_1\phi_{\intg{j}}(\vct{q},\vct{p})
    \neq 0,~
    \forall (\vct{q}, \vct{p}) \in \reals^{\intg{Q}\times\intg{Q}}.
\]

For arbitrary constraints, particularly those involving both position and momentum, this may not be the case however we consider only the special case of \emph{holonomic} primary constraints \(\phi_{\intg{i}}(\vct{q}, \vct{p}) = c_{\intg{i}}(\vct{q})\) with \(c_{\intg{i}} : \reals^{\intg{Q}} \to \reals, ~ \intg{i} \in 1{:}\intg{C}\) with a set of secondary constraints \(\phi_{\intg{i}+\intg{C}}(\vct{q}, \vct{p}) = \partial c_{\intg{i}}(\vct{q})\tr\mtx{M}^{-1}\vct{p}, ~ \intg{i} \in 1{:}\intg{C}\), due to the requirement that the time derivative of the constraint is zero. We then have that
\[
  \lbrace \phi_{\intg{i}}, \phi_{\intg{i}+\intg{C}} \rbrace(\vct{q}, \vct{p}) = \partial c_{\intg{i}}(\vct{q})\tr\mtx{M}^{-1}\partial c_{\intg{i}}(\vct{q}) > 0
\]
that is the Poisson bracket between each primary constraint and the corresponding secondary constraint is non-zero almost everywhere under \cref{ass:differentiability-and-surjectivity-constraint-function} (that $\jacob\vctfunc{c}$ is full row-rank $\rho$-almost everywhere) and \cref{ass:latent-space-metric} (that $\mtx{M}$ is positive definite). Hence all the constraints are second-class and the restriction of the symplectic form is non-degenerate. %
\section{Proof of \cref*{prop:hamiltonian-conservation}}
\label{app:proof-hamiltonian-conservation}

It is sufficient to show that the time-derivative of the Hamiltonian is zero under the dynamics described by the flow map $\opfunc{\Phi}^{h_{\vctfunc{c}}}_t$. We have that
\begin{align*}
  \td{h}{t}(\vct{q},\vct{p}) 
  &= 
  \grad_1 h(\vct{q},\vct{p})\tr \td{\vct{q}}{t} + \grad_2 h(\vct{q},\vct{p})\tr \td{\vct{p}}{t} \\
  &=
  \grad\ell(\vct{q})\tr \mtx{M}^{-1}\vct{p} - \vct{p}\tr\mtx{M}^{-1}\grad\ell(\vct{q}) - \vct{p}\tr\mtx{M}^{-1}\jacob\vctfunc{c}(\vct{q})\tr\vctfunc{\lambda}(\vct{p},\vct{q}) \\
  &=
  -\vctfunc{\lambda}(\vct{p},\vct{q})\tr\jacob\vctfunc{c}(\vct{q})\mtx{M}^{-1}\vct{p}
  = 0
\end{align*}
for all $(\vct{p},\vct{q}) \in \set{T}^*\set{M}$ as $\jacob\vctfunc{c}(\vct{q})\mtx{M}^{-1}\vct{p} = \vct{0}$. \qed
\section{Proof of \cref*{prop:symplecticness-of-constrained-dynamic}}
\label{app:proof-symplecticness-of-constrained-dynamic}

We will make use of the following definitions and lemma.

\begin{definition}
  \label{def:wedge-product}
  The bilinear and skew-symmetric \emph{wedge product} between the \emph{differentials} $\diff q_{\intg{i}} : \reals^{\intg{Q}} \to \reals$ and $\diff p_{\intg{j}}: \reals^{\intg{Q}} \to \reals$ is
  \[(
    \diff q_{\intg{i}} \wedge \diff p_{\intg{j}})(\vct{u}, \vct{v}) := \diff p_{\intg{j}}(\vct{u})\diff q_{\intg{i}}(\vct{v}) - \diff q_{\intg{i}}(\vct{u})\diff p_{\intg{j}}(\vct{v}),
  \]
  and between the vectors of differentials $\diff \vct{q} = [(\diff q_{\intg{i}})_{\intg{i}=1}^\intg{Q}]$ and $\diff \vct{p} = [(\diff p_{\intg{i}})_{\intg{i}=1}^\intg{Q}]$ is
  \(
    \diff\vct{q} \wedge \diff\vct{p} :=
    {\textstyle \sum_{\intg{i}=1}^{\intg{Q}}} \diff q_\intg{i} \wedge \diff p_\intg{i},
  \)
  with $\diff\vct{q} \wedge \diff\vct{p}$ equal to the \emph{symplectic form} on $\reals^{\intg{Q}}\times\reals^{\intg{Q}}$.
\end{definition}

\begin{definition}
A map $\opfunc{\Phi} : \tangent^*\set{M} \to \tangent^*\set{M}$ is \emph{symplectic} if the symplectic form on $\tangent^*\set{M}$, that is the restriction of $\diff\vct{q} \wedge \diff\vct{p}$ to $\tangent^*\set{M}$, is preserved by the map.

\end{definition}

\begin{lemma}
 \label{lem:symplectic-intermediate}
 The vector of differential 1-forms $\diff\vct{q}$ on the manifold $\set{M}$ satisfies $\diff\vct{q}\wedge\diff\left(\jacob\vctfunc{c}(\vct{q})\tr\vctfunc{\lambda}(\vct{q},\vct{p})\right) = 0$ for arbitrary $\vctfunc{\lambda} : \set{M}\times\reals^{\intg{Q}} \to \reals^{\intg{C}}$.
\end{lemma}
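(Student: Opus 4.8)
The plan is to expand the $\intg{Q}$-vector $\vct{r} := \jacob\vctfunc{c}(\vct{q})\tr\vctfunc{\lambda}(\vct{q},\vct{p})$ in terms of the columns $\grad c_\intg{k}(\vct{q})$ of $\jacob\vctfunc{c}(\vct{q})\tr$, writing $\vct{r} = \sum_{\intg{k}=1}^{\intg{C}}\lambda_\intg{k}(\vct{q},\vct{p})\,\grad c_\intg{k}(\vct{q})$, and then to compute $\diff\vct{q}\wedge\diff\vct{r} = \sum_{\intg{i}=1}^{\intg{Q}}\diff q_\intg{i}\wedge\diff r_\intg{i}$ directly from the definition of the wedge product. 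Writing the components $r_\intg{i} = \sum_\intg{k}\lambda_\intg{k}\,(\grad c_\intg{k})_\intg{i}$ and applying the Leibniz rule for the exterior derivative of each scalar product gives $\diff r_\intg{i} = \sum_\intg{k}\lambda_\intg{k}\,\diff\big((\grad c_\intg{k})_\intg{i}\big) + \sum_\intg{k}(\grad c_\intg{k})_\intg{i}\,\diff\lambda_\intg{k}$, where $\diff\big((\grad c_\intg{k})_\intg{i}\big) = \sum_\intg{j}(\grad^2 c_\intg{k})_{\intg{i}\intg{j}}\,\diff q_\intg{j}$ because each entry of $\grad c_\intg{k}$ depends on $\vct{q}$ alone.

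Substituting this into $\sum_\intg{i}\diff q_\intg{i}\wedge\diff r_\intg{i}$ and using bilinearity of the wedge product splits the result into two terms. The first is $\sum_\intg{k}\lambda_\intg{k}\sum_{\intg{i},\intg{j}}(\grad^2 c_\intg{k})_{\intg{i}\intg{j}}\,\diff q_\intg{i}\wedge\diff q_\intg{j}$; for each fixed $\intg{k}$ the coefficient matrix $\grad^2 c_\intg{k}$ is symmetric in $(\intg{i},\intg{j})$ whereas $\diff q_\intg{i}\wedge\diff q_\intg{j} = -\diff q_\intg{j}\wedge\diff q_\intg{i}$ is antisymmetric, so contracting a symmetric tensor against an antisymmetric form annihilates the double sum and this term vanishes identically. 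The second term is $\sum_\intg{k}\big(\sum_\intg{i}(\grad c_\intg{k})_\intg{i}\,\diff q_\intg{i}\big)\wedge\diff\lambda_\intg{k} = \sum_\intg{k}\diff c_\intg{k}\wedge\diff\lambda_\intg{k}$, where $\diff c_\intg{k} = \grad c_\intg{k}(\vct{q})\tr\diff\vct{q}$ is the $\intg{k}$-th component of $\jacob\vctfunc{c}(\vct{q})\diff\vct{q}$. Restricting the differentials to $\tangent^*\set{M}$ (equivalently, working on $\set{M}$), the position constraint $\jacob\vctfunc{c}(\vct{q})\diff\vct{q} = \vct{0}$ from the Remark following \cref{def:wedge-product} forces $\diff c_\intg{k} = 0$ for every $\intg{k}$, so the second term vanishes as well, yielding $\diff\vct{q}\wedge\diff\big(\jacob\vctfunc{c}(\vct{q})\tr\vctfunc{\lambda}(\vct{q},\vct{p})\big) = 0$.

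I expect the only delicate points to be bookkeeping rather than conceptual. First, the symmetric/antisymmetric cancellation in the first term needs the constraint function to be twice continuously differentiable, so that $\grad^2 c_\intg{k}$ exists and Schwarz's theorem supplies its symmetry; although \cref{ass:differentiability-and-surjectivity-constraint-function} only asks for $\mathcal{C}^1$, the Hessians $\grad^2 c_\intg{i}$ already feature in the symplectic-form description preceding \cref{def:wedge-product}, so this stronger regularity is implicit throughout the section and I would state it explicitly. Second, the vanishing of the second term must be read as an identity of forms restricted to the (co)tangent bundle, so I would be careful to note that $\diff\vct{q}$ denotes the restricted differentials for which $\jacob\vctfunc{c}(\vct{q})\diff\vct{q} = \vct{0}$ holds; the dependence of $\vctfunc{\lambda}$ on $\vct{p}$ (so that $\diff\lambda_\intg{k}$ carries $\diff\vct{p}$ components) is immaterial precisely because its wedge partner $\diff c_\intg{k}$ already vanishes.
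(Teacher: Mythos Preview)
Your argument is correct and follows essentially the same route as the paper's proof: both expand $\diff\big(\jacob\vctfunc{c}(\vct{q})\tr\vctfunc{\lambda}\big)$ via the Leibniz rule into a Hessian term $\sum_\intg{k}\lambda_\intg{k}\,\diff\vct{q}\wedge\grad^2 c_\intg{k}\,\diff\vct{q}$ that vanishes by symmetry of $\grad^2 c_\intg{k}$ against antisymmetry of the wedge, plus a term $\jacob\vctfunc{c}(\vct{q})\,\diff\vct{q}\wedge\diff\vctfunc{\lambda}$ that vanishes because $\jacob\vctfunc{c}(\vct{q})\,\diff\vct{q}=\vct{0}$ on $\set{M}$. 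Your version simply unpacks the same computation in index notation, and your remarks on the implicit $\mathcal{C}^2$ regularity and on reading the identity as one for restricted forms are apt but not points of departure from the paper.
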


\begin{proof}
  Omitting arguments to $\vctfunc{c}$ and $\vctfunc{\lambda}$ for compactness we have that
\begin{align*}
    \diff\vct{q}\wedge\diff(\jacob\vctfunc{c}\tr\vctfunc{\lambda}) &=
    \diff\vct{q}\wedge\jacob\vctfunc{c}\tr\diff\vctfunc{\lambda} +
    {\textstyle \sum_{\intg{i}=1}^{\intg{C}}} \lambda_{\intg{i}}\diff\vct{q} \wedge\hess c_{\intg{i}}\diff\vct{q}\\
    &=
    \jacob\vctfunc{c} \diff\vct{q}\wedge\diff\vctfunc{\lambda} +
    {\textstyle \sum_{\intg{i}=1}^{\intg{C}}} \lambda_{\intg{i}}\diff\vct{q} \wedge\hess c_{\intg{i}}\diff\vct{q}.
\end{align*}
For $\vct{q}$ restricted to $\set{M}$ and so satisfying the constraint equation $\vctfunc{c}(\vct{q}) = \vct{0}$ the vector of differential 1-forms $\diff\vct{q}$ satisfies $\jacob\vctfunc{c} \diff\vct{q} = \vct{0}$. Further the Hessians $\hess c_{\range{\intg{C}}}$ are all symmetric therefore $\diff\vct{q} \wedge\hess c_{\intg{i}}\diff\vct{q} = 0$ for all $\intg{i}\in\range{\intg{C}}$ due to the skew-symmetry of the wedge product. Therefore $\diff\vct{q}\wedge\diff(\jacob\vctfunc{c}\tr\vctfunc{\lambda}) = 0$ as required. \qed
\end{proof}

To prove \cref*{prop:symplecticness-of-constrained-dynamic} it is sufficient to show that the time-derivative of $\diff\vct{q}\wedge\diff\vct{p}$ is identical to zero under the dynamics described by the flow map $\opfunc{\Phi}^{h_{\vctfunc{c}}}_t$ as from this preservation of the symplectic form directly follows. We have that
\begin{align*}
  \td{}{t}(\diff\vct{q}\wedge\diff\vct{p}) &=
  \diff\left(\td{\vct{q}}{t}\right)\wedge\diff\vct{p} + \diff\vct{q}\wedge\diff\left(\td{\vct{p}}{t}\right)\\
  &=
  \diff(\mtx{M}^{-1}\vct{p})\wedge\diff\vct{p} - \diff\vct{q}\wedge\diff(\grad\ell(\vct{q}) + \jacob\vctfunc{c}(\vct{q})\tr\vctfunc{\lambda})\\
  &=
  \mtx{M}^{-1}\diff\vct{p}\wedge\diff\vct{p} - \diff\vct{q} \wedge \hess\ell(\vct{q})\diff\vct{q} - \diff\vct{q}\wedge\diff(\jacob\vctfunc{c}(\vct{q})\tr\vctfunc{\lambda}).
\end{align*}
As both $\mtx{M}^{-1}$ and $\hess\ell$ are symmetric matrices then from the skew-symmetry of the wedge product the first two terms in the last line are zero as is the third term from \cref*{lem:symplectic-intermediate}. Therefore $\td{}{t}(\diff\vct{q}\wedge\diff\vct{p}) = 0$. \qed
\section{Proof of \cref*{prop:momentum-resampling-correctness}}
\label{app:proof-momentum-resampling-correctness}

For $\tilde{\rvct{p}} \sim \nrm(\vct{0},\mtx{M})$ and $\rvct{p} = \opfunc{P}_{\mtx{M}}(\rvct{q})\tilde{\rvct{p}}$ we have for any measurable $\set{A} \subseteq \reals^\intg{Q}$
$$
  \prob(\rvct{p} \in \set{A} \gvn \rvct{q} = \vct{q}) \propto
  \int_{\reals^{\intg{Q}}}
    \ind{\set{A}}(\opfunc{P}_{\mtx{M}}(\vct{q})\tilde{\vct{p}})
    \exp\left(-\frac{1}{2}\tilde{\vct{p}}\tr\mtx{M}^{-1}\tilde{\vct{p}}\right)
    \lebm{\intg{Q}}(\diff\tilde{\vct{p}}).
$$
As $\mtx{M}$ is positive definite it has a non-singular symmetric square-root $\mtx{M}^{\frac{1}{2}}$. Further, as $\jacob\vctfunc{c}(\vct{q})$ is full row-rank $\rho$-almost surely, then we can find a decomposition
$$
  \mtx{M}^{-\frac{1}{2}}\jacob\vctfunc{c}(\vct{q}) =
  \begin{bmatrix}\mtx{Q}_{\bot} &\mtx{Q}_{\|}\end{bmatrix}
  \begin{bmatrix}\mtx{R} \\ \mtx{0}\end{bmatrix} =
  \mtx{Q}_{\bot}\mtx{R}
$$
where $\mtx{Q}_{\bot}$ and $\mtx{Q}_{\|}$ are respectively $\intg{Q}\times\intg{C}$ and $\intg{Q}\times(\intg{Q}-\intg{C})$ matrices with orthonormal columns (i.e. $\mtx{Q}_{\perp}\tr\mtx{Q}_{\perp}=\idmtx_{\intg{C}}$, $\mtx{Q}_{\|}\tr\mtx{Q}_{\|}=\idmtx_{\intg{Q}-\intg{C}}$ and $\mtx{Q}_{\perp}\tr\mtx{Q}_{\|}=\mtx{0}$) and $\mtx{R}$ is a non-singular $\intg{C}\times\intg{C}$ upper-triangular matrix. From the definition of $\opfunc{P}_{\mtx{M}}$ in \cref*{eq:projector-operator} we then have that
$$
  \opfunc{P}_{\mtx{M}}(\vct{q}) =
  \idmtx_{\intg{Q}} -
  \mtx{M}^{\frac{1}{2}}\mtx{Q}_{\bot}\mtx{R}(\mtx{R}\tr\mtx{Q}_{\bot}\tr\mtx{Q}_{\bot}\mtx{R})^{-1}\mtx{R}\tr\mtx{Q}_{\bot}\tr\mtx{M}^{-\frac{1}{2}} =
  \idmtx_{\intg{Q}} - \mtx{M}^{\frac{1}{2}}\mtx{Q}_{\bot}\mtx{Q}_{\bot}\tr\mtx{M}^{-\frac{1}{2}}.
$$
Defining the linear change of variables $\tilde{\vct{p}} = \mtx{M}^{\frac{1}{2}}\mtx{Q}_{\bot}\vct{n} + \mtx{M}^{\frac{1}{2}}\mtx{Q}_{\|}\vct{m}$ we then have that $\opfunc{P}_{\mtx{M}}(\vct{q})\tilde{\vct{p}} = (\idmtx_{\intg{Q}} - \mtx{M}^{\frac{1}{2}}\mtx{Q}_{\bot}\mtx{Q}_{\bot}\tr\mtx{M}^{-\frac{1}{2}})(\mtx{M}^{\frac{1}{2}}\mtx{Q}_{\bot}\vct{n} + \mtx{M}^{\frac{1}{2}}\mtx{Q}_{\|}\vct{m}) = \mtx{M}^{\frac{1}{2}}\mtx{Q}_{\|}\vct{m}$ and
$$
  \prob(\rvct{p} \in \set{A} \gvn \rvct{q} = \vct{q}) \propto
  \int_{\reals^{\intg{C}}} \int_{\reals^{\intg{Q}-\intg{C}}}
    \mkern-16mu
    \ind{\set{A}}(\mtx{M}^{\frac{1}{2}}\mtx{Q}_{\|}\vct{m})
    \exp\left(-\frac{1}{2}(\vct{n}\tr\vct{n} + \vct{m}\tr\vct{m})\right)
    \lebm{\intg{Q}-\intg{C}}(\diff\vct{m})
    \lebm{\intg{C}}(\diff\vct{n}).
$$
Integrating out the density on $\vct{n}$ to a constant, defining $\phi(\vct{m}) := \mtx{M}^{\frac{1}{2}}\mtx{Q}_{\|}\vct{m}$ and using $\vctfunc{\phi}(\vct{m})\tr\mtx{M}^{-1}\vctfunc{\phi}(\vct{m}) = \vct{m}\tr\vct{m}$ and $\det{\jacob\vctfunc{\phi}(\vct{m})\tr\mtx{M}^{-1}\jacob\vctfunc{\phi}(\vct{m})} = 1$ we have
$$
  \prob(\rvct{p} \in \set{A} \gvn \rvct{q} = \vct{q}) \propto
  \int_{\vctfunc{\phi}^{-1}(\set{A})}
    \det{\jacob\vctfunc{\phi}\tr\mtx{M}^{-1}\jacob\vctfunc{\phi}}^{\frac{1}{2}}
    \exp\left(-\frac{1}{2}\vctfunc{\phi}(\vct{m})\tr\mtx{M}^{-1}\vctfunc{\phi}(\vct{m})\right)
    \lebm{\intg{Q}-\intg{C}}(\diff\vct{m}).
$$
Recognising that $\vctfunc{\phi}$ defines a (global) parametrisation of $\set{T}^*_{\vct{q}}\set{M}$, comparing to the definition of the Riemannian measure $\riem{\mtx{M}^{-1}}{\set{T}^*_{\vct{q}}\set{M}}$ (see \cref{def:riemannian-measure}) we then have that
$$
  \prob(\rvct{p} \in \set{A} \gvn \rvct{q} = \vct{q}) \propto
  \int_{\set{A}}
    \exp\left(-\frac{1}{2}\vct{p}\tr\mtx{M}^{-1}\vct{p}\right)
  \riem{\mtx{M}^{-1}}{\set{T}^*_{\vct{q}}\set{M}}(\diff\vct{p}).
$$
As this holds for any measurable $\set{A}$ we have that $\rvct{p} \gvn \rvct{q} =\vct{q}$ is conditionally distributed according to $\zeta(\diff\vct{q},\diff\vct{p})\propto\exp(-\ell(\vct{q})-\vct{p}\tr\mtx{M}^{-1}\vct{p} / 2)\riem{\mtx{M}}{\set{M}}(\diff\vct{q})\riem{\mtx{M}^{-1}}{\set{T}^*_{\vct{q}}\set{M}}(\diff\vct{p})$. \qed
\section{Details of integrator implementation and relation to previous work}
\label{app:integrator-implementation-details}

In this section we give some additional details on the implementation of the constrained integrator described in the main text in \cref{subsec:constrained-hamiltonian-dynamics-numerical-discretisation}.
The iterative solver in \cref{eq:newton-iteration} is terminated once the norm of the left-hand-side of the constraint equation in \eqref{eq:h2-step-position-constraint-condition} is below a tolerance $\theta_c > 0$ and the norm of the change in position between iterations is less than a tolerance $\theta_q > 0$. A maximum of $\intg{J}$ iterations are performed with an error being raised if the solver has not converged by the end of these. The reversibility check is similarly relaxed to requiring the norm of the difference between the initial position and position computed by applying forward and then reversed steps is less than $2\theta_q$, based on the intuition that we control the error norm in the positions in forward and reversed steps to around $\theta_q$ so we expect the composition to have error norm of approximately $2 \theta_q$ or less.

In our double-precision floating-point arithmetic implementation, the $\infty$-norm was used in all cases, and $\intg{J} = 50$, $\theta_q = 10^{-8}$ and $\theta_c = 10^{-9}$. These values were found to give a good trade-off between accuracy and robustness. While smaller tolerances closer to machine precision ($\approx 10^{-16}$) can often be used successfully, in some settings we found that accumulated floating-point error, particularly in cases where the matrices involved in the linear algebra operations in each constrained integrator step were ill-conditioned, could result in the Newton solver being unable to converge within the prescribed tolerance at some points in the latent space.

Without the reversibility check, the integrator $\opfunc{\Xi}^{h_1}_{\sfrac{t}{2}} \circ R(\opfunc{\Xi}^{h_2}_{t}) \circ \opfunc{\Xi}^{h_1}_{\sfrac{t}{2}}$ is equivalent for the St\"ormer--Verlet splitting to an instance of the geodesic integrator proposed by \citet{leimkuhler2016efficient} (in their nomenclature a g-BAB composition with $K_r = 1$). Compared to the composition $R(\opfunc{\Pi}_{\vct{\lambda}'} \circ \opfunc{\Phi}^{h_1}_{\sfrac{t}{2}} \circ \opfunc{\Phi}^{h_2}_{t} \circ \opfunc{\Phi}^{h_1}_{\sfrac{t}{2}} \circ \opfunc{\Pi}_{\vct{\lambda}})$, i.e. the integrator proposed in \citet{reich1996symplectic} with a reversibility check on the whole step as analysed in \citet{lelievre2019hybrid}, the step here, $\opfunc{\Pi}_{\vct{\lambda}'} \circ \opfunc{\Phi}^{h_1}_{\sfrac{t}{2}} \circ R(\opfunc{\Pi}_{\vct{\lambda}'} \circ \opfunc{\Phi}^{h_2}_{t} \circ \opfunc{\Pi}_{\vct{\lambda}}) \circ \opfunc{\Pi}_{\vct{\lambda}'} \circ \opfunc{\Phi}^{h_1}_{\sfrac{t}{2}}$, includes additional intermediate projections of the momentum onto the co-tangent space, after the initial $\smash{\opfunc{\Phi}^{h_1}_{\sfrac{t}{2}}}$ step and after the $\smash{\opfunc{\Phi}^{h_2}_{t}}$ step.

If the iterative solver for the system of equations in \eqref{eq:h2-step-position-constraint-condition} solved in the forward and reverse $\opfunc{\Xi}^{h_2}_{t}$ steps always converged from any initialisation these additional projection steps would be redundant, as they are equivalent to omitting the extra projections but using a non-zero initialization $\vct{\lambda}_0 = \opfunc{G}_{\mtx{M}}({\vct{q}})^{-1}\jacob\vctfunc{c}({\vct{q}})\mtx{M}^{-1}{\vct{p}}$ in the iterative solver. In practice, however, convergence of the iterative solver is not guaranteed.
Empirically we observe that starting both the forward and reverse $\opfunc{\Xi}^{h_2}_{t}$ steps from position-momentum pairs in the co-tangent bundle rather than a pair in which the momentum is not necessarily in the co-tangent space (which is the case for the composition proposed by \citet{reich1996symplectic}), leads to fewer cases of rejections due to the iterative solves failing to converge or non-reversible steps being flagged for a given integrator step size, thus improving sampling efficiency. By performing the reversibility check on the `inner' $\opfunc{\Xi}^{h_2}_{t}$ sub-step, rather than the whole step as considered by \citet{lelievre2019hybrid} there is also a computational saving of avoiding additional $h_1$ gradient computations in the time-reversed step. %
\section{HMC in models with additive observation noise}
\label{app:hmc-additive-noise-models}

In models where the observations are subject to additive noise (as defined in \cref{rem:observation-models}) and $\intg{W} = \intg{Y}$, $\opfunc{L} : \set{Z} \to \reals^{\intg{Y}\times\intg{Y}}$ is non-singular $\mu$-almost surely and $\eta$ is absolutely continuous with respect to $\lebm{\intg{Y}}$, then the posterior distribution (under the non-centred parametrisation described in \cref{subsec:non-centred-parametrisation}) on $\rvct{u}$, $\rvct{v}_0$ and $\rvct{v}_{\range{\intg{ST}}}$ given observations $\rvct{y}_{\range{\intg{T}}} = \vct{y}_{\range{\intg{T}}}$ has a tractable density with respect to the Lebesgue measure.

Specifically if we define the concatenated latent state $\bar{\rvct{q}} = [\rvct{u};\rvct{v}_0;\rvct{v}_{\range{\intg{ST}}}] \in \reals^{\bar{\intg{Q}}}$ with $\bar{\intg{Q}} = \intg{U}+\intg{V}_0+\intg{STV}$ then a-priori we have that $\bar{\rvct{q}} \sim \bar{\rho}$ with Lebesgue density
\begin{equation}\label{eq:standard-hmc-prior-lebesgue-density}
    \td{\bar{\rho}}{\lebm{\bar{\intg{Q}}}}([\vct{u};\vct{v}_0;\vct{v}_{\range{\intg{ST}}}]) \propto
    \td{\tilde\mu}{\lebm{\intg{U}}}(\vct{u})
    \td{\tilde\nu}{\lebm{\intg{V}_0}}(\vct{v}_0)
    \prod_{\intg{s}=1}^{\intg{S}{\intg{T}}}
    \exp\big(
      -\tfrac{1}{2}
        \vct{v}_{\intg{s}}\tr\vct{v}_{\intg{s}}
    \big).
\end{equation}
Defining $\bar{\vct{y}}_{\range{\intg{T}}} = \vctfunc{g}_{\bar{\rvct{y}}_:}\mkern-2mu(\vct{u},\vct{v}_0, \vct{v}_{\range{\intg{ST}}})$ with $\vctfunc{g}_{\bar{\rvct{y}}_:}$ defined by
{
\vspace{3pt}
\begin{algorithmic}
  \Function{$\vctfunc{g}_{\bar{\rvct{y}}_{:}}\mkern-1mu$}
    {$\vct{u},\vct{v}_0,\vct{v}_{\range[1]{\intg{St}}}$}
  \State $\vct{z} \gets \vctfunc{g}_{\rvct{z}}(\vct{u})$
  \State $\vct{x}_0 \gets \vctfunc{g}_{\rvct{x}_0}\mkern-2mu(\vct{z},\vct{v}_0)$
  \For{$\intg{s} \in \range[1]{\intg{St}}$}
    \State $\vct{x}_{\intg{s}} \gets \vctfunc{f}_\delta(\vct{z},\vct{x}_{\intg{s}-1},\vct{v}_s)$
    \If{$\intg{s} \bmod \intg{S} \equiv 0$}
      \State $\bar{\vct{y}}_{\intg{s}/\intg{S}} \gets \vcthbar(\vct{x}_{\intg{s}})$
    \EndIf
  \EndFor
  \vspace{-3pt}
  \State \Return $\bar{\vct{y}}_{\range[1]{\intg{t}}}$
  \EndFunction
\end{algorithmic} }
\noindent the posterior $\bar{\pi}$ on $\bar{\rvct{q}}$ given $\rvct{y}_{\range{\intg{T}}} = \vct{y}_{\range{\intg{T}}}$ then has a density with respect to $\bar{\rho}$
\begin{equation}\label{eq:standard-hmc-posterior-lebesgue-density}
    \td{\bar{\pi}}{\bar{\rho}}([\vct{u};\vct{v}_0;\vct{v}_{\range{\intg{ST}}}]) \propto
    \prod_{\intg{t}=1}^{\intg{T}}
    |\opfunc{L} \circ \vctfunc{g}_{\rvct{z}}(\vct{u}))|^{-1}
    \td{\eta}{\lebm{\intg{Y}}}\big(
      (\opfunc{L} \circ \vctfunc{g}_{\rvct{z}}(\vct{u}))^{-1}
      (\vct{y}_t - \bar{\vct{y}}_t)
    \big)
\end{equation}
In the common special case of additive isotropic Gaussian observation noise, i.e. $\eta = \nrm(\mathbf{0},\idmtx_{\intg{Y}})$ and $\opfunc{L}(\vct{z}) = \sigma_y(\vct{z})\,\idmtx_{\intg{Y}}$ with $\sigma_y : \set{Z} \to \reals_{>0}$, then we have
\begin{equation}
    \td{\bar{\pi}}{\bar{\rho}}([\vct{u};\vct{v}_0;\vct{v}_{\range{\intg{ST}}}]) \propto
    (\sigma_y\circ\,\vctfunc{g}_{\rvct{z}}(\vct{u}))^{-\intg{TY}}
    \exp\left(
      \frac{-\| [\vct{y}_{\range{\intg{T}}}] - [\vctfunc{g}_{\bar{\rvct{y}}_:}\mkern-2mu(\vct{u},\vct{v}_0, \vct{v}_{\range{\intg{ST}}})] \|^2}{2(\sigma_y\circ\,\vctfunc{g}_{\rvct{z}}(\vct{u}))^2}
    \right).
\end{equation}
As this posterior $\bar{\pi}$ has a density with respect to the Lebesgue measure which we can pointwise evaluate up to an unknown normalising constant, we can generate approximate samples from $\bar{\pi}$ using any of the large range of \ac{mcmc} methods which apply to target distributions of this form. Typically the density function will be differentiable and so \ac{hmc} methods offer a particularly convenient and efficient approach to inference.

For the numerical experiments in \cref{sec:numerical-experiments} in which \ac{hmc} is used as a baseline, we use a dynamic integration time \ac{hmc} implementation \citep{betancourt2017conceptual} with a dual-averaging algorithm to adapt the integrator step size \citep{hoffman2014no}. We consider two different options for the metric matrix representation $\mtx{M}$ (mass matrix): the identity matrix and a diagonal matrix with diagonal elements adaptively set in the warm-up chain iterations to online estimates of the precisions of $\bar{\rvct{q}}$ under the posterior using the windowed adaptation algorithm used for adapting $\mtx{M}$ in \emph{Stan} \citep{carpenter2017stan}. As the matrix-vector multiplies required if using a dense $\mtx{M}$ would incur a $\mathcal{O}(\intg{S}^2\intg{T}^2)$ cost in each integrator step we do not consider adapting $\mtx{M}$ based on an estimate of the full posterior precision matrix. All parameters for the algorithms used are summarized in \cref{app:algorithmic-parameters}.

\section{FitzHugh--Nagumo model details}
\label{app:fitzhugh-nagumo-model-details}

The stochastic FitzHugh--Nagumo model we use is defined by the \acp{sde}
\begin{equation}\label{eq:fhn-sde}
    \dr{\rvar{x}}_1 = \tfrac{1}{\epsilon}(\rvar{x}_1 - \rvar{x}_1^3 - \rvar{x}_2) \dr \tau,
    \quad
    \dr\rvar{x}_2 = (\gamma\rvar{x}_1 - \rvar{x}_2 + \beta)\dr \tau + \sigma \dr \rvar{w}.
\end{equation}
As in \citet{ditlevsen2019hypoelliptic} we use a strong-order 1.5 Taylor discretisation scheme corresponding for this model to a forward operator
\begin{equation}
  \begin{aligned}
  \vctfunc{f}_\delta(\vct{z},\vct{x},\vct{v}) =
  \vct{x} + \delta \vctfunc{a}(\vct{x},\vct{z})
  &+
  \tfrac{\delta^2}{2} \jacob_1\vctfunc{a}(\vct{x},\vct{z})\vctfunc{a}(\vct{x},\vct{z})
  \\
  &+
  \delta^{\frac{1}{2}} \opfunc{B}(\vct{x},\vct{z}) v_1
  +
  \tfrac{\delta^{\frac{3}{2}}}{2}\jacob_1\vctfunc{a}(\vct{x},\vct{z}) \opfunc{B}(\vct{x},\vct{z})(v_1 + \tfrac{v_2}{\sqrt{3}})
  \end{aligned}
\end{equation}
with a $\intg{V} = 2$ dimensional standard normal input vector $\vct{v}$. Note unlike the approach of \citet{ditlevsen2019hypoelliptic} our approach remains well-defined even when using a Euler--Maruyama discretisation of a hypoelliptic diffusion. We use the more accurate order 1.5 discretisation scheme here however as there is negligible additional computational cost or complexity compared to a Euler--Maruyama discretisation.

We use priors $\log\sigma \sim \nrm(0, 1)$, $\log\epsilon \sim \nrm(0, 1)$, $\log\gamma \sim \nrm(0, 1)$, $\beta \sim \nrm(0, 1)$ and $\rvct{x}_{0} \gvn \rvct{z} = [\sigma; \epsilon; \gamma; \beta] \sim \nrm([0; \beta], \idmtx_2)$. Under the transformation $\bar{\rvar{x}}_1 = \rvar{x}_1, \bar{\rvar{x}}_2 = \rvar{x}_2 - \beta$, $\bar{\epsilon}=\epsilon$, $\bar{\gamma} = \gamma$, $\bar{\sigma} = \sigma$, $\bar{s} = -\beta$ we have that an equivalent \ac{sde} to \eqref{eq:fhn-sde} in the transformed space is
\begin{equation}\label{eq:fhn-sde-alternative-form}
  \dr\bar{\rvar{x}}_1 = \tfrac{1}{\bar\epsilon}(\bar{\rvar{x}}_1 - \bar{\rvar{x}}_1^3 - \bar{\rvar{x}}_2 + \bar{s}) \dr \tau,
  \quad
  \dr\bar{\rvar{x}}_2 = (\bar\gamma\bar{\rvar{x}}_1 - \bar{\rvar{x}}_2)\dr \tau + \bar\sigma \dr \rvar{w}.
\end{equation}
This formulation corresponds to the \ac{sde} used in \citet{meulen2020bayesian}. For the comparison with the approach of \citet{meulen2020bayesian} in the noisy observation case in \cref{subsec:fhn-noisy-experiments}, we use a time discretisation of the \ac{sde} in \eqref{eq:fhn-sde-alternative-form} to run the experiments using the Julia code accompanying the article, with priors on the parameters $\log\bar\sigma \sim \nrm(0, 1)$, $\log\bar\epsilon \sim \nrm(0, 1)$, $\log\bar\gamma \sim \nrm(0, 1)$, $\bar{s} \sim \nrm(0, 1)$ and on the initial state $\bar{\rvct{x}}_{0} \sim \nrm(\vct{0}, \idmtx_2)$, giving an equivalent generative model in the transformed space to the model in the original space.
\section{Susceptible-infected-recovered model details}
\label{app:sir-model-details}

The system of \acp{sde} for the \ac{sir} in terms of the log-transformed state $\rvct{x} = [\log\rvar{s}; \log\rvar{i};\log\rvar{c}]$ can be derived from the \ac{sde} system in the original space in \eqref{eq:sir-sde} using It\^o's lemma as
\begin{multline*}\label{eq:sir-sde-transformed}
  \begin{bmatrix}
      \dr \rvar{x}_1 \\
      \dr \rvar{x}_2 \\
      \dr \rvar{x}_3 \\
  \end{bmatrix} =
  \begin{bmatrix}
    -\frac{1}{N}\exp(\rvar{x}_2 + \rvar{x}_3) - \frac{1}{2N}\exp(\rvar{x}_2 + \rvar{x}_3 - \rvar{x}_1) \\
    \frac{1}{N}\exp(\rvar{x}_1 + \rvar{x}_3) - \frac{1}{2N}\exp(\rvar{x}_1 + \rvar{x}_3 - \rvar{x}_2) - \frac{\gamma}{2}\exp(-\rvar{x}_2) - \gamma \\
    \alpha(\beta - \rvar{x}_3)
  \end{bmatrix}\dr\tau \\+
  \begin{bmatrix}
    N^{-\frac{1}{2}}\exp(\frac{\rvar{x}_2 + \rvar{x}_3 - \rvar{x}_1}{2}) & 0 & 0\\
    -N^{-\frac{1}{2}}\exp(\frac{\rvar{x}_1 + \rvar{x}_3 - \rvar{x}_2}{2}) & \gamma^{\frac{1}{2}}\exp(-\frac{\rvar{x}_2}{2}) & 0 \\
    0 & 0 & \sigma
  \end{bmatrix}
  \begin{bmatrix}
    \dr \rvar{w}_1 \\
    \dr \rvar{w}_2 \\
    \dr \rvar{w}_3 \\
\end{bmatrix}.
\end{multline*}
An Euler--Maruyama scheme was used for the time-discretisation with $\intg{S} = 20$ steps per inter-observation interval. The population size is fixed to the known value $N = 763$ and the initial values $\rvar{s}(0)$ and $\rvar{i}(0)$ to 762 and 1 respectively. The parameters are given priors $\log\gamma \sim \nrm(0,1)$, $\log\alpha \sim \nrm(0, 1)$, $\beta \sim \nrm(0, 1)$, $\log \sigma \sim \nrm(-3, 1)$ and $\log \sigma_y \sim \nrm(0, 1)$ and the initial value $\rvar{x}_3(0) = \log \rvar{c}(0) \sim \nrm(0, 1)$.

The observed data $\vct{y}$ were taken from \citet{bmj1978influenza}, with the number of cases \emph{confined to bed} manually digitized using a tool \emph{WebPlotDigitizer} \citep{rohatgi2020webplotdigitizer} from the points on the graph in the article. The specific values used were\\
\begin{tabular}{rcccccccccccccc}
  Day $\intg{t}$ & 0 & 1 & 2 & 3 & 4 & 5 & 6 & 7 & 8 & 9 & 10 & 11 & 12 & 13 \\
  Obs. $y_{\intg{t}}$ & 3 & 8 & 28 & 75 & 221 & 281 & 255 & 235 & 190 & 125 & 70 & 28 & 12 & 5
\end{tabular}
\section{Chain initialisation}
\label{app:chain-initialisation}

To initialise a constrained \acs{hmc} chain targeting the posterior distribution with density in \cref*{eq:manifold-density} supported on the manifold $\set{M}$, we need to find an initial $\vct{q} \in \set{M}$, which in practice we relax to the condition $\Vert\vctfunc{c}(\vct{q})\Vert_\infty < \theta_c$. In our experiments to find one or more initial points satisfying this condition, we use one of the following heuristics, depending on whether observation noise is present or not in the model.

\vspace{-10pt}
\subsection{Additive observation noise}
\label{app:chain-initialisation-noisy-obs}

In the case of additive observation noise with $\vctfunc{h}(\vct{x},\vct{z},\vct{w}) = \vcthbar(\vct{x}) + \opfunc{L}(\vct{z})\vct{w}$, if we further assume that $\intg{W}  = \intg{Y}$ and $\opfunc{L}$ is non-singular $\mu$-almost surely then we can almost surely find $\vct{w}_{\range{\intg{T}}}$ as a function of $\vct{u}$, $\vct{v}_0$ and $\vct{v}_{\range{\intg{ST}}}$ such that
$$
  \vctfunc{c}([\vct{u}; \vct{v}_0; \vct{v}_{\range{\intg{ST}}}; \vct{w}_{\range{\intg{T}}}]) = \vct{0}
  \iff
  [\vctfunc{g}_{\rvct{y}_{:}}\mkern-2mu(\vct{u}, \vct{v}_0, \vct{v}_{\range[1]{\intg{S}\intg{T}}}, \vct{w}_{\range{\intg{T}}})]
  = [\vct{y}_{\range{\intg{T}}}],
$$
by computing $\vct{x}_{\range{\intg{ST}}} = \vctfunc{g}_{\rvct{x}_{:}}\mkern-1mu(\vct{u},\vct{v}_0,\vct{v}_{\range[1]{\intg{St}}})$
where
\vspace{3pt}
\begin{algorithmic}
  \Function{$\vctfunc{g}_{\rvct{x}_{:}}\mkern-1mu$}
    {$\vct{u},\vct{v}_0,\vct{v}_{\range[1]{\intg{St}}}$}
  \State $\vct{z} \gets \vctfunc{g}_{\rvct{z}}(\vct{u})$
  \State $\vct{x}_0 \gets \vctfunc{g}_{\rvct{x}_0}\mkern-2mu(\vct{z},\vct{v}_0)$
  \For{$\intg{s} \in \range[1]{\intg{St}}$}
    \State $\vct{x}_{\intg{s}} \gets \vctfunc{f}_\delta(\vct{z},\vct{x}_{\intg{s}-1},\vct{v}_s)$
  \EndFor
  \vspace{-3pt}
  \State \Return $\vct{x}_{\range{\intg{ST}}}$
  \EndFunction
\end{algorithmic}
and setting $\vct{w}_{\intg{t}} = \opfunc{L}(\vct{z})^{-1}(\vct{y}_{\intg{t}} - \vcthbar(\vct{x}_{\intg{St}}))$ for each $\intg{t} \in \range{\intg{T}}$. We can therefore, for example, sample $\vct{u} \sim \tilde{\mu}$, $\vct{v}_0 \sim \tilde{\nu}$ and $\vct{v}_{\intg{s}} \sim \nrm(\vct{0},\idmtx_{\intg{V}})~\forall \intg{s}\in\range{\intg{ST}}$, and compute $\vct{w}_{\range{\intg{T}}}$ as just described to find a random initial point $\vct{q} = [\vct{u}; \vct{v}_0; \vct{v}_{\range{\intg{ST}}}; \vct{w}_{\range{\intg{T}}}] \in \set{M}$.

While this procedure is guaranteed to find a point on the manifold $\set{M}$, sampling $\vct{u}$, $\vct{v}_0$ and $\vct{v}_{\range{\intg{ST}}}$ from their priors will tend to produce points that are atypical under the posterior distribution. Often the geometry of the manifold at such points will be more challenging for constrained \ac{hmc} chains to navigate, with for example regions of high curvature which require a small integrator step size $t$ to control the proportion of trajectories which are terminated due to the Newton iteration failing to converge. This can lead to excessively long computation times for the early chain iterations in the adaptive warm-up phase where the step size is tuned to control an acceptance rate statistic, as the small step size combined with the dynamic \ac{hmc} implementation which expands the simulated trajectory until a termination condition is satisfied results in very long trajectories of many steps. %

Rather than directly initialising the chains at a point computed from a prior sample, we found that first running a number of steps of a gradient-descent optimisation algorithm on the negative logarithm of the Lebesgue density of the posterior distribution defined in \cref{eq:standard-hmc-posterior-lebesgue-density} from the prior sample and using the optimised $\bar{\vct{q}} = [\vct{u}; \vct{v}_0; \vct{v}_{\range{\intg{ST}}}; \vct{w}_{\range{\intg{T}}}]$ to compute $\vct{w}_{\range{\intg{T}}}$, and so $\vct{q} \in \set{M}$, as described above, led to more robust performance with lower variance in the times taken to complete the adaptive warm-up phase. As there is no need to perform iterative Newton solves during such an optimisation, the computational run time overhead of this optimisation was negligible compared to the run times of the warm-up and main chain phases. In our experiments we used the adaptive moments `Adam' algorithm \citep{kingma2015adam} to perform the optimisation, continuing the optimisation until the condition $\intg{T}^{-1}\sum_{\intg{t}=1}^\intg{T}\opfunc{L}(\vct{z})^{-1}(\vct{y}_{\intg{t}} - \vcthbar(\vct{x}_{\intg{St}})) < 1$ is met. The same procedure was also used to generate the initial states for the standard \ac{hmc} chains.

\vspace{-10pt}
\subsection{Noiseless observations}
\label{app:chain-initialisation-noiseless-obs}

In the case of noiseless observations we need to use an alternative approach to find an initial point $\vct{q}$ on the manifold $\set{M}$.
We first find a sequence of $\intg{T}$ states $\tilde{\vct{x}}_{\range{\intg{T}}}$ which is consistent with the observations, that is, for $\intg{t}\in\range{\intg{T}}$ we have that $\Vert \vcthbar(\tilde{\vct{x}}_{\intg{t}}) - \vct{y}_{\intg{t}}\Vert_{\infty} < \theta_c$. For linear observation functions $\vcthbar(\vct{x}) = \mtx{H}\vct{x}$, given an initial arbitrary sequence $\tilde{\vct{\chi}}_{\range{\intg{T}}}\in\set{X}^{\intg{T}}$, then
\begin{equation}\label{eq:linear-observation-solve-for-states}
  \tilde{\vct{x}}_{\intg{t}} =
  \vct{\chi}_{\intg{t}} - \mtx{H}\tr(\mtx{H}\mtx{H}\tr)^{-1}(\mtx{H}\vct{\chi}_{\intg{t}} - \vct{y}_{\intg{t}})
\end{equation}
will satisfy $\vcthbar(\tilde{\vct{x}}_{\intg{t}}) = \vct{y}_{\intg{t}}$ (modulo floating point error) for each $\intg{t} \in \range{\intg{T}}$. A simple mechanism for generating the initial sequence $\tilde{\vct{\chi}}_{\range{\intg{T}}}$ is to use a prior sample, i.e. generate $\vct{u} \sim \tilde{\mu}$, $\vct{v}_0 \sim \tilde{\nu}$ and $\vct{v}_{\intg{s}} \sim \nrm(\vct{0},\idmtx_{\intg{V}})~\forall \intg{s}\in\range{\intg{ST}}$, compute $\vct{x}_{\range{\intg{ST}}} = \vctfunc{g}_{\rvct{x}_{:}}\mkern-1mu(\vct{u},\vct{v}_0,\vct{v}_{\range[1]{\intg{St}}})$ and then set $\tilde{\vct{x}}_{\intg{t}} = \vct{x}_{\intg{St}}~\forall \intg{t} \in \range{\intg{T}}$. For non-linear observation functions $\vcthbar$, we can follow an analogous procedure but replacing the $\intg{T}$ linear solves in \eqref{eq:linear-observation-solve-for-states} with an iterative method to solve $\intg{T}$ independent systems of non-linear equations; for example, one choice would be the Gauss--Newton like iteration
\begin{equation}\label{eq:nonlinear-observation-solve-for-states}
  \tilde{\vct{x}}_{\intg{t}}^{\intg{j}+1} =
  \tilde{\vct{x}}_{\intg{t}}^{\intg{j}} - \jacob\vcthbar(\tilde{\vct{x}}_{\intg{t}}^{\intg{j}})\tr( \jacob\vcthbar(\tilde{\vct{x}}_{\intg{t}}^{\intg{j}}) \jacob\vcthbar(\tilde{\vct{x}}_{\intg{t}}^{\intg{j}})\tr)^{-1}( \jacob\vcthbar(\tilde{\vct{x}}_{\intg{t}}^{\intg{j}})\tilde{\vct{x}}_{\intg{t}}^{\intg{j}} - \vct{y}_{\intg{t}}),
  \quad
  \tilde{\vct{x}}_{\intg{t}}^{0} = \tilde{\chi}_{\intg{t}}.
\end{equation}
As the number of equations $\intg{Y}$ and variables $\intg{X}$ in each system will be relatively small for most diffusion models, solving the systems will not usually be overly burdensome.

Once we have a state sequence $\tilde{\vct{x}}_{\range{\intg{T}}}$ consistent with the observations, we can then use either a direct approach to solve for a corresponding constraint satisfying state $\vct{q}$ if the forward operator $f_\delta$ meets a linearity condition, or in the more general case we can use an iterative approach. Both methods are described below.

In many settings the forward operator $\vctfunc{f}_\delta$ is linear as a function of its third argument (state noise vector $\vct{v}$) with the two other arguments fixed, and has a Jacobian with respect to this argument, $\jacob_3 \vctfunc{f}_\delta$, which is full row-rank. This applies for example to the common case of using an Euler-Maruyama scheme to discretise an elliptic diffusion. It also applies to the strong-order 1.5 Taylor discretisation of the hypoelliptic FitzHugh--Nagumo model described in \cref{app:fitzhugh-nagumo-model-details}.

In these cases, the forward operator $f_\delta$ can be decomposed as
\begin{equation}
  \vctfunc{f}_\delta(\vct{z}, \vct{x}, \vct{v}) =
  \vctfunc{m}_\delta(\vct{z}, \vct{x}) + \mtx{S}_\delta(\vct{z}, \vct{x}) \vct{v}
\end{equation}
with $\vctfunc{m}_\delta : \set{Z} \times \set{X} \to \set{X}$ and $\mtx{S}_\delta : \set{Z} \times \set{X} \to \lbrace \mtx{A} \in \mathbb{R}^{\intg{X}\times\intg{V}} : \mathrm{rank}(\mtx{A}) = \intg{X} \rbrace$.

If this property is satisfied, then given values for the unconstrained vectors $\vct{u}$ and $\vct{v}_0$ (for example sampled from their priors) and a state sequence $\tilde{\vct{x}}_{\range{\intg{T}}}$, we can solve for a sequence of noise vectors $\vct{v}_{\range{\intg{ST}}}$ such that the full state sequence $\vct{x}_{\range{\intg{ST}}} = \vctfunc{g}_{\rvct{x}_{:}}\mkern-1mu(\vct{u},\vct{v}_0,\vct{v}_{\range[1]{\intg{ST}}})$ linearly interpolates $\tilde{\vct{x}}_{\range{\intg{T}}}$, that is $\vct{x}_{\intg{S}(\intg{t}-1) + \intg{s}} = \tilde{\vct{x}}_{\intg{t}-1} + \frac{\intg{s}}{\intg{S}}(\tilde{\vct{x}}_{\intg{t}} - \tilde{\vct{x}}_{\intg{t}-1})$ for $\intg{t} \in \range{\intg{T}}$, $\intg{s} \in \range{\intg{S}}$, as follows
\vspace{3pt}
\begin{algorithmic}
    \State $\vct{z} \gets \vctfunc{g}_{\rvct{z}}(\vct{u})$
    \State $\tilde{\vct{x}}_0 \gets \vct{x}_0 \gets \vctfunc{g}_{\rvct{x}_0}\mkern-2mu(\vct{z},\vct{v}_0)$
    \For{$\intg{t} \in \range[1]{\intg{T}}$}
      \For{$\intg{s} \in \range[1]{\intg{S}}$}
        \State $\vct{x}_{\intg{S}(\intg{t}-1)+\intg{s}} \gets \tilde{\vct{x}}_{\intg{t}-1} + \frac{\intg{s}}{\intg{S}}(\tilde{\vct{x}}_{\intg{t}} - \tilde{\vct{x}}_{\intg{t}-1})$
        \State $\vct{v}_{\intg{S}(\intg{t}-1) + \intg{s}} \gets \mtx{S}_\delta(\vct{z}, \vct{x}_{\intg{S}(\intg{t}-1)+\intg{s}-1}) \backslash (\vct{x}_{\intg{S}(\intg{t}-1)+\intg{s}} - \vctfunc{m}_\delta(\vct{z}, \vct{x}_{\intg{S}(\intg{t}-1)+\intg{s}-1}))$
      \EndFor
    \EndFor
\end{algorithmic}%
where $\mtx{A}\backslash\vct{b}$ returns $\vct{x}$ such that $\mtx{A}\vct{x} = \vct{b}$. If the state sequence $\tilde{\vct{x}}_{\range{\intg{T}}}$ is chosen such that $\vct{y}_{\intg{t}} = \vcthbar(\tilde{\vct{x}}_{\intg{t}})$ for all $\intg{t} \in \range{\intg{T}}$ as detailed above, then by construction the interpolated state sequence $\vct{x}_{\range{\intg{ST}}}$ will also be consistent with the observations, and $\vct{q} = [\vct{u};\vct{v}_0;\vct{v}_{\range{\intg{ST}}}]$ will be on the manifold $\set{M}$.

In more general settings, we can independently sample a point $\vct{q}$ from the prior $\rho$ on the ambient space (or reuse the existing prior sample $[\vct{u}; \vct{v}_0; \vct{v}_{\range{\intg{ST}}}]$ used to generate $\tilde{\vct{\chi}}_{\range{\intg{T}}}$) and run an adaptive gradient-descent algorithm \emph{Adam} \citep{kingma2015adam}, to minimise the following objective function $\gamma : \reals^{\intg{Q}} \times \set{X}^{\intg{T}} \to \reals$
\vspace{3pt}
\begin{algorithmic}
  \Function{$\gamma$}
      {$[\vct{u};\vct{v}_0;\vct{v}_{\range[1]{\intg{ST}}}]$, $\tilde{\vct{x}}_{\range{\intg{T}}}$}
    \State $\vct{z} \gets \vctfunc{g}_{\rvct{z}}(\vct{u})$
    \State $\vct{x}_0 \gets \vctfunc{g}_{\rvct{x}_0}\mkern-2mu(\vct{z},\vct{v}_0)$
    \For{$\intg{s} \in \range[1]{\intg{ST}}$}
      \State $\vct{x}_{\intg{s}} \gets \vctfunc{f}_\delta(\vct{z},\vct{x}_{\intg{s}-1},\vct{v}_s)$
    \EndFor
    \State \Return $\frac{1}{\intg{T}\intg{X}}\sum_{\intg{t=1}}^{\intg{T}} \Vert \vct{x}_{\intg{S}\intg{t}} - \tilde{\vct{x}}_{\intg{t}} \Vert^2_2$
  \EndFunction
\end{algorithmic}%
with respect to its first argument, initialised at $\vct{q}$ and with the second argument fixed at the computed $\tilde{\vct{x}}_{\range{\intg{T}}}$ value. The optimisation is continued until $\gamma(\vct{q},\tilde{\vct{x}}_{\range{\intg{T}}}) < \theta_\gamma$ with the optimisation restarted from a new $\vct{q} \sim \rho$ if this is not satisfied within $\intg{M}_g$ iterations (we used $\theta_\gamma = 10^{-6}$ and $\intg{M}_g = 1000$). We then run the Newton iteration in \eqref{eq:newton-iteration} (with $\vct{p} = \vct{0}$) to project $\vct{q}$ on to the manifold, within the tolerance $\Vert\vctfunc{c}(\vct{q})\Vert_\infty < \theta_c$.

We found this combination of gradient descent to find a point close to the manifold then Newton iteration to project to within the specified tolerance $\theta_c$ was more effective than either solely using gradient descent until within the constraint tolerance (with the gradient-descent iteration tending to converge slowly once close to the manifold) or using the iteration in \eqref{eq:newton-iteration} directly on $\vct{q}$ sampled from the prior, as for points far from the manifold the Newton iteration generally fails to converge.

It is also possible to use gradient-descent optimisation directly on the norm of the (non-conditioned) constraint function, i.e. $\gamma(\vct{q}) := \frac{1}{\intg{C}} \Vert \vctfunc{c}(\vct{q}) \Vert^2_2$, which sidesteps the requirement to find a constraint-satisfying state sequence $\tilde{\vct{x}}_{\range{\intg{T}}}$. However we found that as the number of observations times $\intg{T}$ becomes large this approach begins to suffer from the optimisation getting stuck in local minima, with the conditional independencies introduced by instead fixing the values of the states at the observation times seeming to make the optimisation problem simpler to solve.
\section{Measuring chain computation times}
\label{app:chain-computation-times}

To compute the chain computation times in the numerical experiments in \cref*{sec:numerical-experiments}, we recorded the number of evaluations in each chain of the key expensive operations in \cref*{alg:constrained-hmc} and multiplied these by estimated compute times for each operation calculated by separately timing the execution of a compiled loop iterating the operation a large number of times in a single threaded-process. Compared to directly using the wall-clock run times for the chain this approach eliminates the effect of the Python interpreter overhead in the implementation in the computation time estimates, removes the variability in run time estimates due to the effect of other background processes and allowed experiments to be run on multiple machines with differing hardware while remaining comparable. The operations monitored were: evaluations of the constraint function; evaluations of the constraint Jacobian; matrix decompositions to solve linear systems in the Gram matrix and evaluation of the log-determinant of the Gram matrix; evaluation of the gradient of the log-determinant of the Gram matrix.
\section{Summary of algorithmic parameters}
\label{app:algorithmic-parameters}

In this section we summarize the values of all algorithmic parameters required for reproducing the numerical experiments. Most of these have been provided elsewhere in the text (or in cited works), but we collate them together here for ease of reference.

\subsection*{Chain initialisation}

For the FitzHugh--Nagumo model with noiseless observations, the linear interpolation scheme described in \cref{app:chain-initialisation-noiseless-obs} was used to initialise the chain states for each experiment. The initial state sequences $\tilde{\vct{x}}_{\range{\intg{T}}}$ consistent with observations  were generated as $\tilde{\vct{x}}_{\intg{t}} = [y_{\intg{t}}, 0.5 r_{\intg{t}}], ~ r_{\intg{t}} \sim \mathcal{N}(0, 1)$ for all $\intg{t} \in \range{\intg{T}}$. For the FitzHugh--Nagumo model with noisy observations, the same scheme was used to generate the initial states with the observation noise vector components $\vct{w}_{\range{\intg{T}}}$ all set to zero. The same scheme was use for initialising both the constrained and standard \ac{hmc} chains, with the standard \ac{hmc} chain states simply not including the (zeroed) observation noise vector components $\vct{w}_{\range{\intg{T}}}$.

For the \ac{sir} model the scheme described in \cref{app:chain-initialisation-noisy-obs} was used to initialise the chain states for each experiment. The adaptive moments (\emph{Adam} \citep{kingma2015adam}) optimizer used to minimize the negative log posterior density on $\vct{u}$, $\vct{v}_0$ and $\vct{v}_{\range{\intg{ST}}}$, had parameters (following notation from Algorithm 1 in the original paper)
\begin{center}
\begin{tabular}{rl}
  Step size $\alpha$ & 0.1 \\
  Exponential decay rate for first moment estimate $\beta_1$ & 0.8 \\
  Exponential decay rate for first moment estimate $\beta_2$  & 0.999 \\
  Additive term for numerical stability $\epsilon$ & $1 \times 10^{-8}$
\end{tabular}
\end{center}
The optimization was terminated when $\frac{1}{T\sigma_y^2}\sum_{\intg{t}=1}^{\intg{T}} \left|\vct{y}_{\intg{t}} - \vcthbar(\vct{x}_{\intg{St}}) \right|^2 < 1$ where $\sigma_y$ and $\vct{x}_{\range{\intg{ST}}}$ correspond to the values generated from the current $\vct{u}$, $\vct{v}_0$ and $\vct{v}_{\range{\intg{ST}}}$ values. If this criterion was not met in 5000 iterations the optimization was restarted from a new point sampled from the prior. For the constrained \ac{hmc} chains the observation noise vector components of the initial state were set to $\vct{w}_{\intg{t}} = \frac{1}{\sigma^y} (\vct{y}_{\intg{t}} - \vcthbar(\vct{x}_{\intg{St}}))$ for all $\intg{t} \in \range{\intg{T}}$ from the $\sigma_y$ and $\vct{x}_{\range{\intg{ST}}}$ values corresponding to the optimized values of $\vct{u}$, $\vct{v}_0$ and $\vct{v}_{\range{\intg{ST}}}$, with this giving a constraint satisfying initial state $\vct{q} = [\vct{u}; \vct{v}_0; \vct{v}_{\range{\intg{ST}}}; \vct{w}_{\range{\intg{T}}}]$. For the standard \ac{hmc} chains the same scheme was used, other than the $\vct{w}_{\range{\intg{T}}}$ components of the state not being included, with initial states $\vct{q} = [\vct{u}; \vct{v}_0; \vct{v}_{\range{\intg{ST}}}]$

\subsection*{Reversible constrained leapfrog integrator}

For the constrained \ac{hmc} chains, the constrained leapfrog integrator described in \cref{alg:constrained-hmc} in the main paper was used to simulate the constrained Hamiltonian dynamics trajectories. The following parameters were used in all cases
\begin{center}
\begin{tabular}{rl}
Projection solver constraint tolerance $\theta_c$ & $10^{-9}$ \\
Projection solver position tolerance $\theta_q$ & $10^{-8}$ \\
Projection solver maximum iterations $\intg{J}$ & 50 \\
Reversibility check tolerance & $2 \times 10^{-8}$
\end{tabular}
\end{center}

\subsection*{Integrator step size adaptation}

Both constrained and standard \ac{hmc} chains used the dual-averaging algorithm described in Section 3.2 of \citet{hoffman2014no} to tune the integrator step size during the adaptive warm-up phase of the chains. The initial step size $\epsilon_0$ was set using the heuristic described in Algorithm 4 in \citet{hoffman2014no}. The dual-averaging algorithm was used within a dynamic \ac{hmc} algorithm which automatically selected the number of integrator steps per iteration (see next subsection for more details), with the algorithmic parameters used (corresponding to the notation in Algorithm 6 in \citet{hoffman2014no})
\begin{center}
\begin{tabular}{rl}
Target acceptance rate $\delta$ & 0.8 \\
Regularisation scale $\gamma$ & 0.1 \\
Relaxation exponent $\kappa$ & 0.75 \\
Iteration offset $t_0$ & 10 \\
Regularization target $\mu$ & $\log 10 \epsilon_0$
\end{tabular}
\end{center}
The values for the relaxation exponent $\kappa$, iteration offset $t_0$ and regularization target $\mu$ follow the suggested defaults recommended in \cite{hoffman2014no}. The target acceptance rate value $\delta = 0.8$ was based on the default value used in the current \emph{Stan} implementation of the dual-averaging step size tuning algorithm \citep{carpenter2017stan}. A larger regularisation scale value $\gamma = 0.1$ was used compared to the default of $\gamma = 0.05$ recommended in \citet{hoffman2014no} and used as the default in \emph{Stan}. Larger values of this parameter give stronger regularisation of the integrator step size towards the (usually relatively large) value $\exp(\mu)$. We found that a common problem with both constrained and standard \ac{hmc} chains in this setting, was that there was a tendency for there to be a large proportions of rejections in the initial few iterations of the warm-up phase. This tended to cause the step size to be adapted to very small values initially, which in turn led to many integrator steps being taken per iteration, and so very long computation times for these initial iterations. Increasing $\gamma$ so that the regularization has a stronger effect in these initial iterations reduced this behaviour while still leading to final adapted integrator step sizes that gave acceptance rates close to the target value.

In all cases the acceptance rate statistic used to tune the step size was the mean of the Metropolis accept probability of a move from the initial state the transition started at, to each of the states on the generated trajectory, with the acceptance rate statistic set to zero if the trajectory is terminated due to a convergence error in the integrator or if a non-reversible step was detected.

\subsection*{Metric (mass matrix) adaptation}

For the standard \ac{hmc} chains, as well as the integrator step size a diagonal metric matrix representation (mass matrix) was also adaptively tuned during the warm-up phase of the chains; the constrained \ac{hmc} chains used a fixed identity metric in all cases. Directly following the implementation used in \emph{Stan} \citep{carpenter2017stan}, the 500 iterations in the warm-up phases of the chains were split in to a number of sub-intervals. In an initial interval of 75 iterations only the step-size adaptation was active, with the metric initialised to the identity. A sequence of four growing intervals of 25, 50, 100 and 200 iterations were then used to tune the diagonal metric (with step size adaptation also active), with the marginal empirical posterior variances computed using the chain samples from each interval and the reciprocal of these values used to set the diagonal metric at the end of each interval (with the estimates starting afresh in each interval). In a final interval of 50 iterations only the step size was adapted with the metric fixed.

\subsection*{Dynamic Hamiltonian Monte Carlo algorithm}

The constrained and standard \ac{hmc} chains both used a dynamic \ac{hmc} implementation, that iteratively expands a binary tree corresponding to the current simulated trajectory by integrating forward and backward in time until a termination criterion is satisfied on the outermost states of the current sub-trees, or the overall tree depth reaches some predetermined maximum. The algorithm used is the same as that underlying the default \ac{hmc} implementation used in \emph{Stan} as of version 2.21. This is based on the \ac{nuts} algorithm proposed by \citep{hoffman2014no}, with updates to use a more efficient multinomial scheme to sample states from the generated trajectory \citep{betancourt2017conceptual}, a generalized termination criterion \citep{betancourt2013generalizing} and checks of the termination criterion on additional sub-tree for improved robustness.

A maximum binary tree depth of 10 (giving a maximum of $2^{10} - 1 = 1023$ integrator steps per trajectory) was used for the constrained \ac{hmc} chains, and a maximum binary tree depth of 20 (giving a maximum of $2^{20} - 1 = 1\,048\,575$ steps per trajectory) was used for the standard \ac{hmc} chains. The larger value used for the tree depth for the standard \ac{hmc} chains was motivated by the greater range of integrator step sizes required for standard \ac{hmc}, with smaller step sizes requiring a greater maximum tree depth to avoid repeatedly terminating the trajectory early because of the maximum tree depth being reached. For the constrained \ac{hmc} chains a maximum tree depth of 10 was already sufficiently large for this saturation to not occur in any of the experiments.

As in the original \ac{nuts} algorithm \citep{hoffman2014no}, if at any point during the trajectory expansion the Hamiltonian at the current state minus the Hamiltonian at the initial state exceeded a threshold $\Delta_{\mathrm{max}} = 1000$, with this indicative of the numerical integrator becoming unstable and diverging, the trajectory was terminated and the chain state set to the initial state at the beginning of the transition (that is a rejection occurs). Such early terminations and rejections were also triggered in the case of the constrained \ac{hmc} chains if a projection step failed to converge or a reversibility check indicated a step was non-reversible.
\newpage
\section{Additional SIR model results}
\vspace{-5pt}
\begin{figure}[!h]
  \centering
  \includegraphics[width=0.9\textwidth]{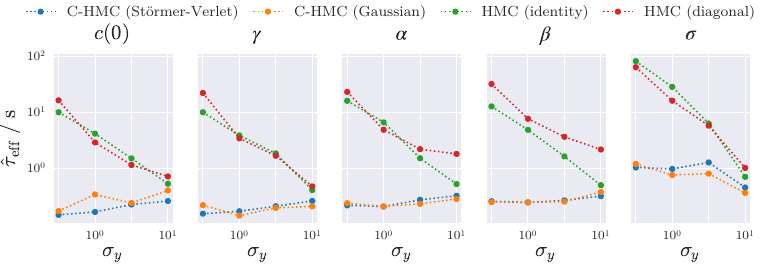}
  \caption{\emph{SIR model}: Computation time per effective sample $\hat{\tau}_{\mathrm{eff}}$ for different fixed $\sigma_y$ values for each model parameter, in all cases on a log-log scale. `\textsc{c-hmc} ($\circ$)' indicates constrained \ac{hmc} chains using an integrator based on the $\circ$ splitting, while `\textsc{hmc} ($\bullet$)' indicates standard \ac{hmc} chains with a $\bullet$ metric.
  }
  \label{fig:sir-noisy-computation-time-per-ess}
\end{figure}
\vspace{-10pt}
\begin{figure}[!h]
  \centering
  \includegraphics[width=0.82\textwidth]{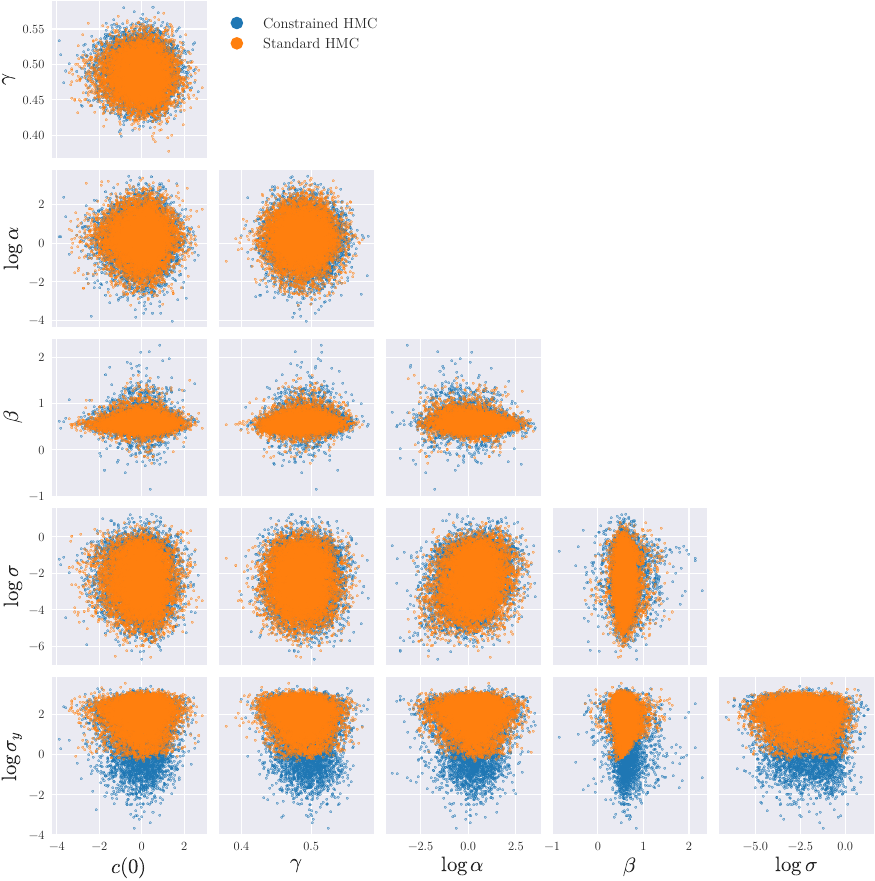}
  \caption{\emph{SIR model}: Posterior pairwise marginals (sample scatter plots) for each model parameter, for both constrained and standard \ac{hmc} chains.}
  \label{fig:sir-noisy-pair-plots}
\end{figure}

\bibliographystyle{plainnat}

\end{document}